\documentclass[twoside,11pt]{article} 
\usepackage{amssymb,latexsym,amsthm,amsopn,amstext,amscd,mathrsfs,amsmath}
\usepackage {graphicx}
\usepackage{enumitem}
\usepackage{amsfonts}
\usepackage{layout} 
\usepackage[all]{xy}


\pagestyle{plain}


\theoremstyle{definition}

\theoremstyle{plain}
\newtheorem{tma}{Theorem}
 
\newtheorem{lma}{Lemma}
\newtheorem{pro}{Proposition} 

\newtheorem{cor}{Corollary}

\theoremstyle{remark}


\newcommand{\gf}{\varphi} 
\newcommand{\geps}{\varepsilon} 
\newcommand{\ran}{\mathrm{ran\,}}


\begin{document}
 
\title{Dirichlet to Neumann operator for abelian Yang-Mills gauge fields}
\author{Homero G. D\'\i az-Mar\'\i n\footnote{
{\tt hdiaz@umich.mx}
}
\\
Facultad de Ciencias F\'\i  sico-Matem\'aticas\\
Universidad Michoacana de San Nicol\'as de Hidalgo
\\
Ed. Alfa, Ciudad Universitaria, C.P. 58060, Morelia, M\'exico. 
}

\maketitle

\begin{abstract}
We consider the Dirichlet to Neumann operator for abelian Yang-Mills boundary conditions. The aim is constructing a complex structure for the symplectic space of boundary conditions of Euler-Lagrange solutions modulo gauge for space-time manifolds with smooth boundary. Thus we prepare a suitable scenario for geometric quantization within the reduced symplectic space of boundary conditions of abelian gauge fields .

{\bf Keywords:} gauge fields; variational boundary value problems; Hodge theory; Dirichlet to Neumann operator.

{\bf MSC(2010):}
 58J32, 
49S05, 
58E15, 
70S15. 

{\bf Subject classification:} Yang-Mills theory, topological field theories, methods from differential geometry.

\end{abstract}


\section{Introduction}

Passing from Dirichlet to Neumann boundary conditions for a \emph{BVP}\footnote{Use this abbreviation for \emph{boundary value problem}}, on the boundary $\partial M$ of a smooth manifold $M$, is a classical problem for PDE. Some BVP may have an associated Dirichlet to Neumann (D-N) operator, so that for every solution its Neumann conditions can be recovered from its Dirichlet conditions and vice versa.  See for instance the D-N operator for case of the scalar Laplace equation in \cite{T} and  for $k-$forms in \cite{BS}.

{\bf General Boundary Formulation of Classical Fields.} 
In the context of Classical Field Theories, space-time {\em regions} are modeled as smooth oriented $n-$dimensional manifolds $M$ with boundary, we will consider compact regions for simplicity but regions may be general in a broader formalism. We consider BVP associated to the Euler-Lagrange equations of an action $S_M$. For linear and affine field theories, the precise description of the space of boundary conditions of solutions on every region $M$ allows us to adopt an axiomatic formulation, see \cite{O1, O, O2}. Thus for every region $M$ there exists an affine space of Euler-Lagrange solutions ${A}_M$, modeled by a linear space ${L}_M$. There exists a linear space ${L}_{\partial M}$ of boundary conditions, given by $1-$jets of solutions of the boundary on a cylinder $\partial M\times[0,\geps]\cong(\partial M)_\geps\subseteq M,$ $\geps>0$. Within this framework, dynamics is modeled as the Lagrangian embedding of the linear space $L_{\tilde{M}}\subseteq {L}_{\partial M}$ of boundary conditions of solutions in the interior of the region $M$, see \cite{CMR, CMR1, O1}. Here we exploit the existence of a presymplectic structure $\widetilde{\omega}_{\Sigma}$ in ${L}_{\Sigma}$, associated to every Lagrangian density, and every {\em hypersurface}, that is, on every $(n-1)-$dimensional smooth oriented closed submanifold, $\Sigma\subseteq M$, see for instance \cite{Zu}. In the absence of gauge symmetries, $\widetilde{\omega}_{\partial M}$ is non-degenerated. This is the classical part of a more general formalism referred as General Boundary Field Theories, or GBFT, see \cite{O3}.

{\bf Geometric Quantization.}
This setting is well suited for further implementation of geometric quantization. The clue to produce a polarization, and hence a geometric quantization procedure, in this case is precisely the D-N operator. It can be used to construct a complex structure for the space of boundary conditions decomposed as a direct sum of Dirichlet and Neumann conditions. As an example of this sort of the resulting QFT \cite{Seg:cftdef} suggests scalar quantum field theories in regions provided with a Riemannian metric, see also \cite{K}. This results rely on the dependence of the polarization on the first-order boundary data of the Riemannian metric of $M$ restricted to $\partial M$, see \cite{Segal:lectures}. This is related with conjectures and results claiming that the jet of the Riemannian metric (or the D-N operator of certain BVP) in the boundary $\partial M$, gives a complete characterization of the metric in the interior $M$, see \cite{SS} and references therein. The existence of a linear isomorphism between the kernel, $K$ and the range $R$ of the D-N operator  and a decomposition, ${L}_{\partial M}\simeq K\oplus R$, yields complex structure $J_{\partial M}$ in the symplectic space. A Hermitian structure arises from the tame complex structure with respect to the symplectic structure $\widetilde{\omega}_{\partial M}$. It leads to a Hilbert space needed for holomorphic prequantization, see \cite{O, O1, Wo}.  Hopefully this process will lead to an axiomatic framework for a quantum field theory called the General Boundary Quantum Field Theory GBQFT, as first proposed in \cite{O3}.

{\bf Gluing.}
The firsts axiomatic attempts for constructing quantum field theories within a categorical scenario were Topological Quantum Field Theories TQFT, see \cite{At,Seg:cftdef}. Within this setting hypersurfaces would be considered as objects while regions would be considered as morphisms, namely cobordisms that model time evolution. A TQFT can be then described as a "quantum functor" from the cobordisms category onto a suitable "Hilbert space Category". Despite some important examples of TQFTs, for most applications inconsistencies appear. These are usually avoided with the introduction of some technicalities or special cases. Similarly, the proposal of a "classical functor" from the category of morphisms onto a suitable "Symplectic Category" where objects would be symplectic vector spaces while morphisms would consist of Lagrangian correspondences has some prevailing technical difficulties. It is proposed that a corrective program sketched in \cite{We} may endure. GBFT (and its quantum counterpart GBQFT) drops the functor demand avoiding those issues. It just retains "gluing rules" from the axiomatics maintaining the  the predictive  tools of the theory. Instead of cobordisms we consider gluing a region $M$ along boundary components $\Sigma\cong\overline{\Sigma'},\, \Sigma,\Sigma'\subset \partial M$, to obtain a new region $M_1$. We then describe the relation of the spaces of solutions $L_{M_1}$ and boundary conditions $L_{\partial M_1}$ for the new region arising from those for the old region $M$. In the classical scenario this relations would model the space-time evolution, not just time evolution, of solutions.

{\bf General Boundary Formulation of Classical Gauge Fields.}  In general the presymplectic structure $\widetilde{\omega}_{\partial M}$ in ${L}_{\partial M}$, associated to the Lagrangian density is degenerate. In the linear and affine case, by taking the quotient by $\ker\widetilde{\omega}_{\partial M}$ gauge reduction can be achieved. This was explored  in a previous work \cite{DM} where we exposed an axiomatic approach in the GBFT formalism for abelian gauge fields applying it to the Yang-Mills  case, see also \cite{CMR, CMR1}.

{\bf Main results.} 
Along this work we restrict ourselves to the example of abelian Yang-Mills fields and follow the quantization program previously sketched for classical field theories. First we note that there exists a linear space  ${L}_{M,\partial M}\subseteq {L}_{\partial M}$ of topologically admissible solutions in the cylinder, depending on the topology and the metric of $M$ and of $\partial M$. Hence gauge reduction yields a symplectic space
\[
	\mathbf{L}_{M,\partial M}={L}_{M,\partial M}/{L}_{M,\partial M}^\bot.
\]
where ${L}_{M,\partial M}^\bot \subseteq {L}_{M,\partial M}$ stands for the symplectic orthogonal complement of ${L}_{M,\partial M}$ regarding the inherited presymplectic structure $\widetilde{\omega}_{\partial M}\vert_{{L}_{M,\partial M}}$, in ${L}_{M,\partial M}$. In the reduced space a symplectic structure $\omega_{\partial M}$ is induced by $\widetilde{\omega}_{\partial M}$. The space of boundary conditions modulo gauge of solutions, 
\[
	\mathbf{L}_{\tilde{M}}=\left({L}_{\tilde{ M}}\cap L_{M,\partial M}\right)/
						\left({L}_{\tilde{ M}}\cap		 {L}_{M,\partial M}^\bot\right)
						\subset \mathbf{L}_{M,\partial M}
\]
encodes the dynamics of the gauge fields. By considering the D-N operator, the complex structure is defined for gauge classes of the space of boundary conditions. We consider certain Yang-Mills BVP, see (\ref{eqn:BVP2-YM}), and construct the D-N operator, $\Lambda_{\tilde{M}}$, that transforms a Dirichlet condition, $\gf^D$, of a solution $\gf$, into its corresponding Neumann condition, $\gf^N=\Lambda_{\tilde{M}}(\gf^D)$. The existence of a linear isomorphism $\ker\Lambda_{\tilde{M}}\simeq \ran\Lambda_{\tilde{M}},$ and a decomposition, $\mathbf{L}_{M,\partial M}\simeq\ker\Lambda_{\tilde{M}}\oplus\ran\Lambda_{\tilde{M}}$, yields complex structure
\[
J_{\partial M}:\mathbf{L}_{M,\partial M}
\rightarrow
\mathbf{L}_{M,\partial M},\qquad J^2_{\partial M}=-\rm{id},
\]
see Theorem \ref{tma:J}. The aim is to apply geometric quantization tools for the holomorphic representation given by this complex structure. This will be done elsewhere.

{\bf Description of sections.} In section \ref{sec:1} we give a quick review of the classical abelian Yang-Mills theory emphasizing its GBFT formulation. A presymplectic structure is defined in the space of boundary conditions of solutions on the boundary. In section \ref{sec:2} we apply gauge reduction to obtain symplectic spaces of gauge fields, dynamics is then described in terms of a Lagrangian relation. This relation consists of the space of boundary conditions of gauge fields that are solutions. Most of the results on symplectic reduction and Lagrangian relations as described above are contained in Proposition \ref{cor:1}, Proposition \ref{lma:C} and Theorem \ref{tma:2}. Furthermore in Theorem \ref{cor:tma1} we show that the reduced space $\mathbf{L}_{M,\partial M}$ has finite codimension in the space $\mathsf{L}_{\partial M}$ consisting of gauge classes of boundary conditions of solutions on the cylinder $(\partial M)_{\varepsilon}$ of the boundary $\partial M$. In section \ref{sec:3} we propose the gluing laws that allow to reconstruct the space of solutions of space-time regions consisting of the gluing regions along boundary components. Gluing leads to a reduction of this codimension, $
\mathrm{codim}\,\mathbf{L}_{M_1,\partial M_1}$ is less or equal to  $\mathrm{codim}\,\mathbf{L}_{M,\partial M}.$ Intuitively this means that the topological homological complexity of $M$ expressed in $\partial M$ is increased by the gluing process. Section \ref{sec:J-structure} presents the main topic of this work, namely the definition of a complex structure for boundary conditions associated to the D-N operator. In \ref{sec:5} we define a Hermitian form on the space of boundary data which associates a complex Hilbert space to each boundary component.



\section{Classical abelian gauge Yang-Mills fields}\label{sec:1}


We consider an $n-$dimensional Riemannian smooth manifold $M$ with (smooth) boundary $\partial M$. We adopt it as a model for a space-time \emph{region}. We consider connections $\gf'$ in a principal fiber bundle with abelian fiber $U(1)$ on $M$ together with the Yang-Mills action
$$
	S_M(\gf')=\int_MF^{\gf'}\wedge \star F^{\gf'}
$$
where $F^{\gf'}$ denotes the curvature and $\star $ the Hodge star operator on $M$. We consider the space of Euler-Lagrange solutions $A_M$. By fixing a particular solution $\gf_0'\in A_M$, recall that there exists an identification of $\gf'\mapsto \gf:=\gf'-\gf_0'$, from the affine space $A_M$ consisting of connections to the corresponding linear space
$$
	L_M=\left\{\gf\in\Omega^1(M)\,:\,d^\star d\gf=0\right\}
$$
for a connection $\gf'$, its curvature is locally expressed as $d\gf$ and $d^\star$ denotes the codifferential with respect to the Hodge star operator. Gauge quotients in the abelian case are well defined. For the spaces obtained by gauge equivalence, we will apply the formalism for an affine field theory. Its axiomatic formalism is given in \cite{O1}.


\subsection{Gauge symmetries}

Consider the space of the Euler-Lagrange solutions modulo gauge
\[
	\mathsf{A}_M:=A_M/G_M^0
\]
where $G_{M}^0:=\{df\,:\, f\in\Omega^0(M)\}$ stands for the identity component of the gauge group $G_{M}$. The gauge action on the space of solutions is by translations. Hence the space $\mathsf{A}_M$ is affine with associated vector space $\mathsf{L}_M$. Once we have fixed a solution modulo gauge $\gf_0'\in\mathsf{A}_M$, there is an identification $[\gf']\mapsto [\gf]=[\gf']-[\gf_0']\in\mathsf{L}_M,\, \forall [\gf']\in\mathsf{A}_M$.

For a solution, $\gf\in{L}_M,$ take the boundary conditions
\begin{equation}\label{eqn:phi^D,phi^N}
	\gf^D:=i^*_{\partial M}\gf,\qquad \gf^N:=(-1)^n\star_{\partial M} i^*_{\partial M}(\star d\gf),\qquad \forall\gf\in L_M
\end{equation}
consisting of a Dirichlet condition, $\gf^D$, induced by the inclusion $i_{\partial M}$ of $\partial M$ into $M$, and a Neumann condition, $\gf^N$, for which the Hodge star operator, $\star_{\partial M}$, of the restricted Riemannian metric on $\partial M$ is considered. They define the space of boundary conditions of solutions $L_{\tilde{M}}$. The affine space $A_{\tilde{M}}$ is the space of boundary conditions of solutions in $A_M$. Consider the affine map and the corresponding linear map
\begin{gather}
	a_M(\gf')=\left((\gf')^D,(\gf')^N\right)\in A_{\tilde{ M}},\qquad \forall\gf'\in A_M\,\\
	r_M(\gf)=\left(\gf^D,\gf^N\right)\in L_{\tilde{ M}},\qquad \forall\gf\in L_M
\end{gather}
respectively. More generally,  let $r_{M}:\Omega^1(M)\rightarrow \left(\Omega^1(\partial M)\right)^{\oplus 2}$ be the projection to the boundary conditions,
\begin{equation}\label{eqn:p}
	r_{M}(\gf):=\left(\gf^D,\gf^N\right).
\end{equation}
For the \emph{closed} Riemannian manifold, $\partial M$, we have the Hodge decomposition
\begin{equation}\label{eqn:Hodge}
	\Omega^k(\partial M)=d\Omega^{k-1}(\partial M)\oplus\mathfrak{H}^k(\partial M)\oplus d^{\star_{\partial M}}\Omega^{k+1}(\partial M).
\end{equation}
Meanwhile, for the manifold with boundary $M$, recall that we have the Hodge-Morrey-Friedrichs (HMF) decomposition, see \cite{Sc}
\begin{equation}\label{eqn:HMF}
	\Omega^k(M)=d\Omega_D^{k-1}(M)\oplus\mathfrak{H}^k_N(M)\oplus
	\left(\mathfrak{H}^k(M)\cap d\Omega^{k-1}{M}\right)\oplus d^{\star}\Omega_N^{k+1}(M)
\end{equation}
where
\[
\begin{array}{rcl}
	\Omega_D^k(M)&:=&\left\{\alpha\,:\, \alpha\in\Omega^{k}(M)\,:\,\,i^*_{\partial M}\alpha=0\,\right\}	\\
	\Omega^k_N(M)&:=&\left\{\beta
			\,:	\, \beta\in\Omega^{k}(M)\,:\,i^*_{\partial M}\left(\star\beta\right)=0\,\right\}	\\
	\mathfrak{H}^k(M)&:=&\left\{\lambda\in\Omega^k(M)\,:\,d\lambda=0=d^{\star}\lambda\right\}\\
	\mathfrak{H}_D^k(M)&:=&\mathfrak{H}^k(M)\cap \Omega^k_D(M)\\
	\mathfrak{H}_N^k(M)&:=&\mathfrak{H}^k(M)\cap \Omega^k_N(M).\\
\end{array}
\]
Recall also that the differential $d$ acts on the chain complex $ \Omega_D^k(M)$, meanwhile the codifferential $d^{\star}$ acts on the complex $\Omega_N^k(M)$. The space of harmonic forms, $\mathfrak{H}^k(M)$, is infinite dimensional. Nevertheless its boundary conditioned subspaces, $ \mathfrak{H}^k_N(M),\mathfrak{H}^k_D(M)\subset \mathfrak{H}^k(M)$ are finite dimensional.

We define the \emph{axial gauge fixing space in the bulk} as
\begin{equation}\label{eqn:gaugefixingM}
	\Phi_{A_M}:=L_M\cap\left(\mathfrak{H}^1_N(M)\oplus d^{\star}\Omega_N^{2}(M)\right)\simeq \mathsf{L}_M
\end{equation}
Notice that the isomorphism follows from the fact that the $G_M^0-$orbits are transverse to $\Phi_{A_M}$, see \cite{T} Vol. II Section 9 on divergence-free vector fields.

\begin{pro}\label{lma:(d)}
For every solution, $\gf\in L_M$, its Neumann condition, $\gf^N\in\Omega^1(M),$ is coclosed, i.e.
\[
	d^{\star_{\partial M}}\gf^N=0\qquad\forall \phi=\left(\gf^D,\gf^N\right)\in L_{\tilde{M}}.
\]
\end{pro}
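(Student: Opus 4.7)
The plan is to show $d^{\star_{\partial M}}\varphi^N=0$ by translating it, via the definition of $\varphi^N$ and the involutive nature of the Hodge star, into the closedness of a pulled-back form on $\partial M$, which in turn follows from the Euler--Lagrange equation $d^\star d\varphi=0$ in the bulk.

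First I would exploit the definition in (\ref{eqn:phi^D,phi^N}): $\varphi^N=(-1)^n\star_{\partial M}i^*_{\partial M}(\star d\varphi)$ is a $1$-form on the $(n-1)$-dimensional closed manifold $\partial M$. Since $\star_{\partial M}\star_{\partial M}=\pm\mathrm{id}$ on each degree and $d^{\star_{\partial M}}=\pm\star_{\partial M}d\,\star_{\partial M}$, the condition $d^{\star_{\partial M}}\varphi^N=0$ is equivalent, up to the relevant sign, to
\[
d\bigl(\star_{\partial M}\varphi^N\bigr)=0\ \ \text{in}\ \Omega^{n-2}(\partial M),\qquad \star_{\partial M}\varphi^N=\pm\,i^*_{\partial M}(\star d\varphi).
\]
So the task reduces to showing that the pulled-back $(n-2)$-form $i^*_{\partial M}(\star d\varphi)$ is closed on $\partial M$.

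Next I would use that $d$ commutes with pullback along $i_{\partial M}$ to reduce further:
\[
d\bigl(i^*_{\partial M}(\star d\varphi)\bigr)=i^*_{\partial M}\bigl(d\star d\varphi\bigr).
\]
The final step is then to use the Euler--Lagrange equation $d^\star d\varphi=0$ characterizing $L_M$. Writing $d^\star=\pm\star d\,\star$ on the appropriate degree, the hypothesis $d^\star d\varphi=0$ translates to $\star d\,\star d\varphi=0$ on $M$, hence to $d\,\star d\varphi=0$ since $\star$ is an isometric isomorphism. Pulling back to $\partial M$ gives $d(i^*_{\partial M}\star d\varphi)=0$, which by the reductions above yields $d^{\star_{\partial M}}\varphi^N=0$, as required.

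The argument is essentially a bookkeeping exercise: there is no serious analytic obstacle. The only point demanding care is keeping track of the sign factors arising from $\star\star=(-1)^{k(n-k)}$ on $M$ and $\star_{\partial M}\star_{\partial M}=(-1)^{k(n-1-k)}$ on $\partial M$, and from the analogous sign in $d^\star=\pm\star d\,\star$ in each dimension and degree. The factor $(-1)^n$ in the definition of $\varphi^N$ is precisely chosen so that these signs combine consistently, so once one adopts fixed conventions the identities line up without further difficulty.
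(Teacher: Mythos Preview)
Your argument is correct. The core is the same as the paper's: naturality of $d$ under pullback gives $d\bigl(i^*_{\partial M}(\star d\gf)\bigr)=i^*_{\partial M}(d\star d\gf)$, and the Euler--Lagrange equation $d^\star d\gf=0$ forces $d\star d\gf=0$ since $\star$ is invertible; unwinding the definition of $\gf^N$ via $\star_{\partial M}$ then yields $d^{\star_{\partial M}}\gf^N=0$.

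The paper routes the same computation through an auxiliary extension $\widehat{\gf^N}\in\Omega^1(M)$ built in Lemma~\ref{lma:(c)}, chosen so that $i^*_{\partial M}(\star d\gf)=i^*_{\partial M}(\star d\widehat{\gf^N})$, and then invokes the second identity of that lemma to read off $\star_{\partial M}i^*_{\partial M}(d\star d\widehat{\gf^N})=(-1)^{n-1}\star_{\partial M}d^{\star_{\partial M}}\gf^N$. Your version bypasses this detour: once one notes $\star_{\partial M}\gf^N=\pm\,i^*_{\partial M}(\star d\gf)$ directly from the definition, the extension is unnecessary for this particular statement. The paper's approach has the advantage of establishing Lemma~\ref{lma:(c)} along the way, which it reuses later (e.g.\ in constructing extensions for Theorem~\ref{cor:tma1}); your approach is more self-contained and makes the dependence on the Euler--Lagrange equation more transparent.
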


\begin{proof}
For every $\gf\in\Omega^1(M)$ there exists an extension $\widehat{\gf^N}\in\Omega^1(M)$ of $\gf^N\in\Omega^1(\partial M)$, such that $\gf^N=\left(\widehat{\gf^N}\right)^N$, according to Lemma \ref{lma:(c)} below. This means that $i^*_{\partial M}(\star d\gf)=i^*_{\partial M}\left(\star d\widehat{\gf^N}\right)$. This last assertion implies that
\[
		\star_{\partial M}i^*_{\partial M}\left(d\star d{\gf}\right)=
			\star_{\partial M} d i^*_{\partial M}\left(\star d{\gf}\right)=
		\]
		\[
		=\star_{\partial M} d i^*_{\partial M}\left(\star d\widehat{\gf^N}\right)=
		\star_{\partial M}i^*_{\partial M}\left(d\left(\star d\widehat{\gf^N}\right)\right).
\]
Thus by the second part of Lemma \ref{lma:(c)} below:
\[
	(-1)^{n}\cdot \star_{\partial M} d^{\star_{\partial M}}\gf^N=	\star_{\partial M}i^*_{\partial M}\left(d\star d{\gf}\right)=0.
\]
Last line follows from the fact that $\gf\in L_M$. Therefore $\gf^N\in\ker d^{\star_{\partial M}}.$
\end{proof}

Before we prove Lemma \ref{lma:(c)} let us fix some notation. Take the components of the boundary of a region $M$
\begin{equation}\label{eqn:dMdecomposition}
	\partial M=\Sigma^1\sqcup\dots\sqcup\Sigma^m,
\end{equation}
If we consider the \emph{tubular neighborhood}, $\partial M_\geps\cong \partial M\times[0,\geps],\,\geps >0$, then its boundary decomposes as
$$
\partial (\partial M_\geps)=
\partial M\sqcup{\partial M'}
\cong\partial M\sqcup\overline{\partial M}
$$
where ${\partial M'}$ is homeomorphic to $\partial M\times\{\geps\}$ with the orientation induced by $M$ and where $\overline{\partial M}$ denotes the same manifold $\partial M$ with inverted orientation.

Without reference to a region $M$, we define a \emph{hypersurface} $\Sigma$ as an $(n-1)-$dimensional {oriented connected closed} smooth manifold. It is provided with a tubular neighborhood, $\Sigma_\geps$, diffeomorphic to the cylinder $\Sigma\times [0,\geps],\, \geps >0$. When we refer to $\Sigma$ as a boundary component of a region we consider $\Sigma_\geps\subset M$. The boundary of $\Sigma_\geps$, consists of two diffeomorphic components, 
\[
	\partial\left( \Sigma_\geps\right)=\Sigma\sqcup\Sigma'.
\]
It also has a Riemannian structure on $\Sigma_\geps$. We suppose that there is a diffeomorphism 
\begin{equation}\label{eqn:X}
	X:\Sigma\times[0,\geps]\rightarrow\Sigma_\geps	
\end{equation}
where $\geps>0$ is small enough so that for every initial condition, $s\in \Sigma$, the curve, $\tau\mapsto X(s,\tau),\,0\leq \tau\leq\geps$, is a geodesic normal to $\Sigma$. Namely, consider $X$ as the exponential map, so that these geodesics foliate $\Sigma_\geps$. Define a diffeomorphism
\[
	X^{\geps}_\Sigma(\cdot):=X(\cdot,\geps):\Sigma\rightarrow\overline{\Sigma'}.
\]
where $\overline{\Sigma'}$ means that we consider $\Sigma'$ as a manifold with orientation inverted with respect to that induced by $\Sigma_\geps$.

\begin{lma}\label{lma:(c)}
For every $\phi\in\Omega^1(\partial M)$ there exists an extension $\widehat{\phi}\in\Omega^1(M)$ such that
\[
	\star_{\partial M}i^*_{\partial M}\left(\star d\widehat{\phi}\right)=(-1)^n\cdot \phi,\qquad
	\star_{\partial M}i^*_{\partial M}\left(d\left(\star d\widehat{\phi}\right)\right)=(-1)^{n-1}\cdot \star_{\partial M} d^{\star_{\partial M}}\phi.
\]
\end{lma}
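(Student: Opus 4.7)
The plan is to construct $\widehat{\phi}$ explicitly in the collar $\partial M_{\varepsilon}$ and extend by a cut-off to the rest of $M$. Since both asserted identities are pointwise on $\partial M$ and involve $\widehat{\phi}$ only to first order in the normal direction, the behaviour of the extension away from the collar is immaterial. Using the diffeomorphism $X$ from (\ref{eqn:X}), I would work in Gaussian normal coordinates $(s,\tau)\in\partial M\times[0,\varepsilon]$, so that on $\partial M_{\varepsilon}$ the metric takes the block form $g_{\partial M}(\tau)+d\tau^{2}$ with $g_{\partial M}(0)=g_{\partial M}$ and outward unit normal $\partial_{\tau}$. Writing $\pi$ for the collar projection, one has the standard pair of identities
\[
\star_{M}\,\pi^{*}\omega=\pi^{*}(\star_{\partial M}\omega)\wedge d\tau,\qquad
\star_{M}\bigl(\pi^{*}\omega\wedge d\tau\bigr)=(-1)^{n-1-k}\,\pi^{*}(\star_{\partial M}\omega)
\]
for any $k$-form $\omega$ on $\partial M$. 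Only the value $g|_{\tau=0}=g_{\partial M}$ and the outward-normal splitting enter the boundary computation, so higher-order $\tau$-dependence of $g_{\partial M}(\tau)$ is irrelevant.

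I would then take the explicit ansatz
\[
\widehat{\phi}:=c_{n}\,\chi(\tau)\,\tau\,\pi^{*}\phi,
\]
where $\chi$ is a smooth bump supported in $[0,\varepsilon]$ and equal to $1$ near $\tau=0$, and $c_{n}=\pm 1$ is a sign to be determined. Near the boundary,
\[
d\widehat{\phi}=c_{n}\,d\tau\wedge\pi^{*}\phi+c_{n}\tau\,\pi^{*}d_{\partial M}\phi.
\]
Applying $\star_{M}$ and then $i^{*}_{\partial M}$, which sets $\tau=0$ and annihilates every $d\tau$-factor, leaves $i^{*}_{\partial M}(\star_{M}d\widehat{\phi})$ as an explicit scalar multiple of $\star_{\partial M}\phi$. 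A further application of $\star_{\partial M}$, combined with $\star_{\partial M}^{2}=(-1)^{n-2}$ on $1$-forms in dimension $n-1$, recovers $\phi$ up to a sign, and selecting $c_{n}$ so that this sign equals $(-1)^{n}$ yields the first identity.

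For the second identity I would differentiate $\star_{M}d\widehat{\phi}$ once more. Because $c_{n}\tau$ has vanishing second $\tau$-derivative, the only term surviving $i^{*}_{\partial M}$ comes from $d_{\partial M}$ applied to the $\pi^{*}\star_{\partial M}\phi$ component, producing $d_{\partial M}\star_{\partial M}\phi$ on $\partial M$. Using $d^{\star_{\partial M}}=-\star_{\partial M}d_{\partial M}\star_{\partial M}$ on $1$-forms of the closed $(n-1)$-manifold $\partial M$ then rewrites the outcome in terms of $d^{\star_{\partial M}}\phi$, matching the announced sign. The principal obstacle is the bookkeeping of Hodge-star signs, namely those coming from commuting $d\tau$ past forms of various degrees and those from squaring $\star_{\partial M}$; once a convention is fixed for the induced orientation on $\partial M$ and hence for $d\tau$, both identities reduce to the single pair of calculations sketched above applied to the explicit ansatz $\widehat{\phi}=c_{n}\tau\,\pi^{*}\phi$.
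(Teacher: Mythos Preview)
Your proposal is correct and follows essentially the same approach as the paper: the paper also defines $\widehat{\phi}=\psi_{\varepsilon}\cdot\tau\cdot\overline{\phi}$, where $\overline{\phi}=(X_{\partial M}^{-\tau})^{*}\phi$ is exactly your $\pi^{*}\phi$ and $\psi_{\varepsilon}$ plays the role of your cut-off $\chi$, and then verifies both identities via the normal-coordinate relation $i^{*}_{\partial M}(\star\, d\tau\wedge dx^{j})=\star_{\partial M}dx^{j}$, which is the local-coordinate incarnation of your displayed pair of $\star_{M}$-identities. The only cosmetic difference is that the paper carries out the sign bookkeeping explicitly (so $c_{n}=1$ works) rather than leaving a free constant to be fixed at the end.
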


\begin{proof}
Take $\phi\in\Omega^k(\partial M)$, and $\psi_\geps:M\rightarrow[0,1]$ defined as
\begin{equation}\label{eqn:psieps}
	\psi^{-1}_\geps(1)=\partial M, \,d\psi_\geps\mid_{\partial M}=0,\, \psi_{\geps}^{-1}(0)= (\partial M)'=X_{\partial M}^{\geps}(\partial M).
\end{equation}
Define
\[
	\widehat{\phi}:= \psi_{\geps}\cdot \tau \cdot \overline{\phi}(x,\tau)\in\Omega^1(M)
\]
where $\overline{\phi}\in\Omega^1(\partial M_\geps)$ is defined as
\begin{equation}\label{eqn:overphi}
	\overline{\phi}(x,\tau):= \left(X_{\partial M}^{-\tau}\right)^*\phi(x),
		\qquad \forall x\in\partial M,\qquad 0\leq \tau \leq \geps.
\end{equation}
here $(X^{-\tau}_{\partial M}):=(X^\tau_{\partial M})^{-1}$. In local expressions,
\[
\widehat{\phi}\mid_{\partial M_\geps}=\psi_\geps\cdot \tau \cdot \sum_{j=1}^{n-1}\phi_j(x)dx^j
\]
hence
\[
	\star_{\partial M}i^*_{\partial M}\left(\star d\widehat{\phi}\right)=
	\star_{\partial M}\sum_{j=1}^{n-1}\phi_j(x)i^*_{\partial M}\left(\star d\tau\wedge dx^j\right)=
\]
\[
=	\star_{\partial M}\sum_{i=1}^{n-1}\phi_j(x)\left(\star_{\partial M} dx^j\right)
		=	 (\star_{\partial M}\star_{\partial M})\cdot \phi=(-1)^{(n-2)}\cdot\phi=(-1)^{n}\cdot \phi,
\]
here we use the relation $i^*_{\partial M}(\star d\tau\wedge dx^j)=\star_{\partial M} dx^j$. This follows from local considerations as follows:
\[
	i^*_{\partial M}(\star d\tau\wedge dx^j)=\sqrt{\left|\det(h_{jk})\right|}\cdot 
	h^{1,j}\cdots h^{n-1,j}(-1)^j\cdot
\]
\[
	\cdot \left(dx^1\wedge\dots\wedge d\check{x}^j\wedge\dots\wedge dx^{n-1}\right)
	=
\]
\[
=
	\sqrt{\left|\det\left(\overline{h}_{jk}\right)\right|}\cdot
	\overline{h}^{1,j}\cdots \overline{h}^{n-1,j}(-1)^j\left(dx^1\wedge\dots\wedge d\check{x}^j\wedge\dots\wedge dx^{n-1}\right)
	=
\]
\[
	=\star_{\partial M}dx^j
\]
where $h_{jk}$ denotes the Riemannian metric in $\partial M_\geps$, while $\overline{h}_{jk}$ is the metric induced in $\partial M$. By the orthogonality condition for geodesics, $h^{j,n}=0,h^{nn}=1$. 

Furthermore
\[
	i^*_{\partial M}\left(d\star d\widehat{\phi}\right)=
	di^*_{\partial M}\left(\star d\widehat{\phi}\right)=
\]
\[
	d\left(\sum_{j=1}^{n-1}\phi_j(x)i^*_{\partial M}\left(\star d\tau\wedge dx^j\right)\right)=
	(-1)^{n-2}\star_{\partial M}\star_{\partial M}\sum_{j=1}^{n-1}d\left(\phi_j(x)\cdot \star_{\partial M} dx^j\right)=
\]
\[
	=(-1)^{n-1}\star_{\partial M} d^{\star_{\partial M}}\phi.
\]
Recall that $\star_{\partial M}\star_{\partial M}=(-1)^{n-2}$ and $d^{\star_{\partial M}}=(-1)^{(n-1)2+1}\star_{\partial M}\circ d\circ\star_{\partial M}=(-1)\cdot \star_{\partial M}\circ d\circ\star_{\partial M}$.
\end{proof}


\subsection{Boundary conditions on hypersurfaces}


For a region $M$, let us consider the {BVP}
\begin{equation}\label{eqn:BVP1}
\left\{\begin{array}{ll}
\Delta\gf=0, & 
\\
i_{\partial M}^*\gf=\phi^D, &i_{\partial M}^*(d^{\star}\gf)=0
\end{array}.\right.
\end{equation}
According to \cite{Sc} for every $\phi^D\in\Omega^k(\partial M)$ there exists a solution $\gf\in\Omega^k(M)$ of (\ref{eqn:BVP1}).  The solution $\gf$ is unique up to $\lambda\in\mathfrak{H}^k_D(M)$. Furthermore, $\gf\in\mathfrak{H}^k(M)$, see Proposition 3.4.5 in \cite{Sc}. Recall that in the case $\partial M\neq\emptyset$ the space $\mathfrak{H}^k(M)$ is infinite dimensional and is different from the space of harmonic forms i.e. solutions of the Laplace equation, $\Delta\gf=0$. Furthermore, the BVP (\ref{eqn:BVP1}) is equivalent to the following BVP, see \cite{BS}  
\begin{equation}\label{eqn:BVP2}
\left\{\begin{array}{ll}
\Delta\gf=0, & d^{\star}\gf=0
\\
i_{\partial M}^*\gf=\phi^D, &
\end{array}\right.
\end{equation}
therefore $dd^{\star}\gf=0$ and $\Delta\gf=0,$ thus $d^{\star}d\gf=0$. Hence $\gf\in L_M$ and $\phi^D\in i^*_{\partial M} L_M$. Thus every solution to (\ref{eqn:BVP2}) at the same time solves to the following Yang-Mills BVP
\begin{equation}\label{eqn:BVP3-YM}
\left\{\begin{array}{ll}
	d^\star d\gf=0, & d^{\star}\gf=0
	\\
	i_{\partial M}^*\gf=\phi^D, & 
\end{array}\right. .
\end{equation}
Moreover every $\gf\in \Phi_{A_M}$ is a solution of this BVP.

When $M=\Sigma_\geps$ similar arguments can be adapted in order to prove the following assertion.

\begin{lma}\label{lma:Dirichlet}
Let ${L}_\Sigma={L}_\Sigma^D\oplus{L}_\Sigma^N$ be the linear space of boundary conditions $\phi=(\gf^D,\gf^N)\in L_\Sigma$ of solutions, $\gf\in L_{\Sigma_\geps
},$ in the cylinder $\Sigma_\geps$. Then the Dirichlet condition $\gf^D\in\Omega^1(\Sigma)$ can be any $1-$form in $\Sigma$,
\[
 	{L}_\Sigma^D= \Omega^1(\Sigma)
\]
\end{lma}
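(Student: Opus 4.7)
The plan is to exhibit, for any prescribed $\phi^D \in \Omega^1(\Sigma)$, a Yang--Mills solution $\gf\in L_{\Sigma_\geps}$ on the cylinder whose Dirichlet trace on $\Sigma$ equals $\phi^D$. This produces the nontrivial inclusion $\Omega^1(\Sigma)\subseteq L^D_\Sigma$; the opposite inclusion $L^D_\Sigma\subseteq \Omega^1(\Sigma)$ is immediate from the very definition of the Dirichlet trace $\gf^D=i^*_\Sigma \gf$. The decomposition $L_\Sigma = L^D_\Sigma\oplus L^N_\Sigma$ is to be read as the expression of $L_\Sigma$ as a subspace of $\Omega^1(\Sigma)^{\oplus 2}$ via $\phi=(\gf^D,\gf^N)$, with $L^D_\Sigma,\,L^N_\Sigma$ the two component projections.

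The construction rests on the existence result for (\ref{eqn:BVP1}) cited in the paragraph just preceding the lemma. Applied to $M=\Sigma_\geps$, whose boundary is $\partial(\Sigma_\geps)=\Sigma\sqcup\Sigma'$, one extends the given $\phi^D$ on $\Sigma$ to boundary data $\widetilde{\phi}^D\in\Omega^1(\Sigma\sqcup\Sigma')$ by declaring $\widetilde{\phi}^D|_{\Sigma'}=0$ (any other extension would do as well). By the cited result of \cite{Sc} there exists a form $\gf\in\Omega^1(\Sigma_\geps)$ solving (\ref{eqn:BVP2}) with $i^*_{\partial(\Sigma_\geps)}\gf=\widetilde{\phi}^D$ and $d^\star\gf=0$. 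As observed immediately after (\ref{eqn:BVP2}) in the text, any such $\gf$ automatically satisfies $d^\star d\gf=0$: from $\Delta\gf=0$ and $d^\star \gf=0$ one gets $d^\star d\gf=\Delta\gf-dd^\star\gf=0$. Thus $\gf\in L_{\Sigma_\geps}$, and by construction $\gf^D|_\Sigma=\phi^D$, so $\phi^D\in L^D_\Sigma$.

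The main technical point is simply the applicability of the existence theorem for (\ref{eqn:BVP1}) to the cylinder, whose boundary has two connected components with possibly nontrivial data on only one of them; this is already built into the cited result, which allows arbitrary $\phi^D\in\Omega^1(\partial M)$ and whose non-uniqueness space $\mathfrak{H}^1_D(\Sigma_\geps)$ does not obstruct existence. A mild alternative, should one prefer a more explicit solution on the cylinder, is to first extend $\phi^D$ to a form $\widehat{\phi}\in\Omega^1(\Sigma_\geps)$ as in Lemma \ref{lma:(c)} and then correct it by solving the Dirichlet problem for $\Delta$ with the HMF decomposition (\ref{eqn:HMF}) to arrange $d^\star\gf=0$ without altering the trace on $\Sigma$; but the direct appeal to the existence theorem for (\ref{eqn:BVP2}) is the shortest route.
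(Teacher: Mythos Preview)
Your proposal is correct and follows essentially the same route as the paper: you invoke the existence theorem for the Dirichlet-type BVP on the cylinder $\Sigma_\geps$ with the given $\phi^D$ on $\Sigma$ and some extension to the other boundary component $\Sigma'$, and then observe that the resulting coclosed harmonic form is automatically a Yang--Mills solution. The only cosmetic difference is the choice of extension to $\Sigma'$: you set the data to zero there, while the paper transports $\phi^D$ to $\Sigma'$ via the diffeomorphism $(X_\Sigma^{-\geps})^*$ in the BVP (\ref{eqn:BVP-Sigma}); as you yourself note, any extension works since the cited existence result allows arbitrary boundary data on $\partial(\Sigma_\geps)$.
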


\begin{proof}

We consider solutions in $\Sigma_\geps$, whose boundary conditions are defined \emph{only in the bottom boundary component}, $\Sigma\cong\Sigma\times\{0\}$, of $\partial \Sigma_\geps$. Thus we define $A_\Sigma$ as the affine space of pairs $\phi'=\left((\phi')^D,(\phi')^N\right)$ as we did in (\ref{eqn:phi^D,phi^N}). Denote its corresponding linear space as $L_\Sigma$. Here we consider the inclusion of one component $i_\Sigma:\Sigma\rightarrow \Sigma_\geps$ instead of the inclusion of the whole boundary $i_{\partial \Sigma_\geps}:\partial\Sigma_\geps \rightarrow \Sigma_\geps$.

Recall that the following BVP has a solution, see \cite{Sc} Lemma 3.4.7, and \cite{BS} Lemma 3,
\begin{equation}\label{eqn:BVP-Sigma}
\left\{\begin{array}{ll}
	d^\star d\gf=0, & d^{\star}\gf =0
	\\
	i_{\Sigma}^*\gf=\phi^D,& i_{\Sigma'}^*\gf=\left(X_\Sigma^{-\geps}\right)^*\phi^D,
	\\
\end{array}\right.
\end{equation}
for every $\phi^D\in\Omega^k(\Sigma),$ where $i_\Sigma:\Sigma\rightarrow\Sigma_\geps$, $i_{{\Sigma'}}:\Sigma'\rightarrow\Sigma_\geps$ denote inclusions and $X^{-\geps}_\Sigma=(X^\geps_\Sigma)^{-1}$.  Notice that the boundary condition, $\phi^D$, is prescribed \emph{just in one component,} $\Sigma \subsetneq\partial \Sigma_\geps$. This proves that $L_\Sigma^D=\Omega^1(\Sigma)$. \end{proof}

\begin{lma}\label{lma:Neumman}
Let ${L}_\Sigma={L}_\Sigma^D\oplus{L}_\Sigma^N$ be the linear space of boundary conditions, $\phi=(\gf^D,\gf^N)\in L_\Sigma,$ of solutions, $\gf\in L_{\Sigma_\geps
},$ in the cylinder $\Sigma_\geps$. Then the Neumman condition $\gf^N$, is coclosed, $d^{\star_\Sigma}\gf^N=0$. Furthermore

\[\qquad {L}_\Sigma^N= \ker d^{\star_\Sigma}.\]

\end{lma}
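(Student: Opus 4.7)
The plan is to prove the two inclusions of $L_\Sigma^N = \ker d^{\star_\Sigma}$ separately. The inclusion $L_\Sigma^N \subseteq \ker d^{\star_\Sigma}$ is essentially Proposition \ref{lma:(d)} applied to the cylinder $M = \Sigma_\geps$, with $\Sigma$ playing the role of a connected component of the boundary. The proof of that proposition is local near the boundary component under consideration: the extension $\widehat{\phi^N}$ supplied by Lemma \ref{lma:(c)} can be chosen supported in a tubular neighborhood of $\Sigma$ only, so the argument transfers without change and yields $d^{\star_\Sigma}\gf^N = 0$ for every $\gf \in L_{\Sigma_\geps}$.

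For the reverse inclusion $\ker d^{\star_\Sigma} \subseteq L_\Sigma^N$, I would mimic Lemma \ref{lma:Dirichlet} with Dirichlet data replaced by Neumann data. Given $\phi^N \in \Omega^1(\Sigma)$ with $d^{\star_\Sigma}\phi^N = 0$, I would set up the Neumann-type BVP on $\Sigma_\geps$
\begin{equation*}
\left\{\begin{array}{ll}
d^\star d\gf = 0, & d^\star \gf = 0, \\
(-1)^n\star_\Sigma i^*_\Sigma(\star d\gf) = \phi^N, & (-1)^n\star_{\Sigma'}i^*_{\Sigma'}(\star d\gf) = (X^{-\geps}_\Sigma)^*\phi^N,
\end{array}\right.
\end{equation*}
in which the Neumann data is extended symmetrically to the second boundary component via the geodesic-flow diffeomorphism $X^\geps_\Sigma$. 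The coclosedness condition on $\phi^N$ is precisely the compatibility already dictated by the first inclusion, so it is necessary for solvability, and Hodge theory on manifolds with boundary makes it also sufficient.

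There are two equivalent routes to the required existence. The direct route invokes the Neumann-type Hodge BVP theory on compact Riemannian manifolds with boundary developed in \cite{Sc} and used in \cite{BS}; the coclosedness of $\phi^N$ supplies the needed integrability. The alternative is a Hodge-star duality: setting $\psi := \star\gf$ converts the Yang-Mills system for $\gf$ into a system for the $(n-1)$-form $\psi$ in which the prescribed tangential Neumann data of $\gf$ becomes tangential Dirichlet data of $\psi$, reducing solvability to Lemma \ref{lma:Dirichlet} applied to the dual problem. The principal obstacle is the clean formulation of the Neumann existence result: one must verify that the symmetric extension of $\phi^N$ to $\Sigma'$ is consistent, and that the harmonic fields $\mathfrak{H}^1_N(\Sigma_\geps)$ — which in principle can obstruct existence — are accommodated by placing $\phi^N$ in the correct summand of the HMF decomposition (\ref{eqn:HMF}). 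Uniqueness of $\gf$ plays no role, since only the existence of one $\gf \in L_{\Sigma_\geps}$ with $\gf^N = \phi^N$ is required.
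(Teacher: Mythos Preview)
Your treatment of the inclusion $L_\Sigma^N \subseteq \ker d^{\star_\Sigma}$ via Proposition~\ref{lma:(d)} is exactly what the paper does.

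For the reverse inclusion the paper takes a genuinely different route. It defers the argument to two later results: Lemma~\ref{lma:6}, which shows $\ker d^{\star_\Sigma} = \ker\Lambda_{\tilde{\Sigma}_\geps}$ by an explicit ODE construction along the normal geodesic foliation of the cylinder (producing, for each coclosed $\phi^D$, a Yang--Mills solution with $\gf^D=\phi^D$ and $\gf^N=0$), and Lemma~\ref{lma:previous}, which supplies an isomorphism $j_{\tilde{M}}:\ker\Lambda_{\tilde{M}}\to\ran\Lambda_{\tilde{M}}$ via an auxiliary closed-extension construction and the HMF decomposition. The composition realizes every coclosed $1$-form on $\Sigma$ as the Neumann datum of some solution. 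The paper never solves a Neumann BVP directly; its argument is organized around the D--N operator.

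Your route (b) has a gap. Setting $\psi=\star\gf$ does \emph{not} turn the Neumann datum $\gf^N=(-1)^n\star_\Sigma i^*_\Sigma(\star d\gf)$ into the Dirichlet datum $i^*_\Sigma\psi$: the latter equals $i^*_\Sigma(\star\gf)$, which is the normal trace of $\gf$ itself, not of $d\gf$. Under $\psi=\star\gf$ the system $d^\star d\gf=0,\ d^\star\gf=0$ becomes $dd^\star\psi=0,\ d\psi=0$, and the prescribed quantity becomes $i^*_\Sigma(d^\star\psi)$ --- neither a Dirichlet condition on $\psi$ nor something Lemma~\ref{lma:Dirichlet} addresses. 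A duality of the sort you have in mind does exist (it is precisely the content of Lemma~\ref{lma:previous}), but it proceeds by extending $\star_{\partial M}\gf^D$ to a closed form on $M$ and reading off a potential from the HMF decomposition of its Hodge dual, not through the substitution $\psi=\star\gf$.

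Route (a) is plausible in spirit but remains a sketch: you would still need to name the precise Neumann-type result in \cite{Sc} and check that the obstruction space $\mathfrak{H}^1_N(\Sigma_\geps)\simeq H^1(\Sigma)$, which is nontrivial in general, is actually absorbed by the freedom in the problem.
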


We have already shown in Proposition \ref{lma:(d)} that $L_\Sigma^N\subseteq\ker d^{\star_\Sigma}$, here $\star_\Sigma$ stands for the Hodge star operator in $\Sigma$. The complete proof will follow from Lemma \ref{lma:6} and the isomorphism described in Lemma \ref{lma:previous} below, where $M=\Sigma_\geps$.

There is a presymplectic structure in $\left(\Omega^1(\Sigma)\right)^{\oplus 2}$ inducing a presymplectic structure in ${L}_\Sigma\subset \left(\Omega^1(\Sigma)\right)^{\oplus 2}$ given by
\begin{equation}\label{eqn:symplectic-structure}
	\begin{array}{c}
		\widetilde{\omega}_{\Sigma}\left(\phi_1,\phi_2\right)=
		\frac{1}{2}\left(\left[\phi_1,\phi_2\right]_{\Sigma}-\left[\phi_2,\phi_1\right]_{\Sigma}\right)
	\end{array}
\end{equation}
for every $\phi_i=\left(\phi^D_i,\phi^N_i\right)\in \left(\Omega^1(\Sigma)\right)^{\oplus2}$, see \cite{CW}, \cite{Zu}. Here we use the bilinear map:
\begin{equation}\label{eqn:[.,.]}
		\left[\phi_1,\phi_2\right]_{\Sigma}:=
		\int_{\Sigma}\phi^D_1\wedge\star_{\Sigma}\phi^N_2.
\end{equation}
In fact the $1-$form
\[
	\theta_{\Sigma}(\eta,\phi)=
		\int_{\Sigma}(\eta-\eta_0)^D\wedge\star_{\Sigma} \phi^N,\qquad
		\forall\eta\in {A}_{\Sigma},\, \forall \phi\in{L}_{\Sigma}
\]
is a  symplectic potential for the translation invariant presymplectic structure in the affine space, also denoted as $\widetilde{\omega}_{\Sigma}$. It also satisfies the translation invariance condition
\begin{equation}\label{eqn:theta}
	\left[\phi_1,\phi_2\right]_{\Sigma}+\theta_{\Sigma}\left(\eta,\phi_2\right)=
	\theta_{\Sigma}\left(\phi_1+\eta,\phi_2\right),\, \forall \eta\in {A}_{\Sigma},\,\forall\phi_1,\phi_2\in{L}_{\Sigma}.
\end{equation}

\begin{lma}
The degeneracy space of the bilinear form, $\widetilde{\omega}_\Sigma$ can be described as
\[
	\ker\widetilde{\omega}_\Sigma=		\left\{
			\left(df^D,dg^N	\right)\,:\, \left(f^D,g^N\right)\in \left(\Omega^0(\Sigma)\right)^{\oplus 2}
		\right\}\subset \left(\Omega^1(\Sigma)\right)^{\oplus 2}.
\]
\end{lma}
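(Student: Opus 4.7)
The plan is to test the bilinear form $\widetilde{\omega}_\Sigma$ against a spanning family of elements of $L_\Sigma$ after rewriting it as the $L^2$-pairing
\[
\widetilde{\omega}_\Sigma(\phi,\psi)
= \tfrac{1}{2}\bigl(\langle\phi^D,\psi^N\rangle_{L^2(\Sigma)}-\langle\psi^D,\phi^N\rangle_{L^2(\Sigma)}\bigr),
\]
and then to invoke two standard facts on the closed manifold $\Sigma$: the formal adjointness of $d$ and $d^{\star_\Sigma}$ (Stokes' theorem), and the Hodge decomposition $\Omega^1(\Sigma)=d\Omega^0(\Sigma)\oplus\mathfrak{H}^1(\Sigma)\oplus d^{\star_\Sigma}\Omega^2(\Sigma)$, whose ``closed'' piece $\mathfrak{H}^1(\Sigma)\oplus d^{\star_\Sigma}\Omega^2(\Sigma)$ coincides with $\ker d^{\star_\Sigma}$ and is the $L^2$-complement of the exact summand $d\Omega^0(\Sigma)$.

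For the inclusion $\supseteq$, I would substitute $\phi=(df^D,dg^N)$ and shift the exterior derivatives onto the test vector:
\[
\widetilde{\omega}_\Sigma\bigl((df^D,dg^N),\psi\bigr)
= \tfrac{1}{2}\bigl(\langle f^D,d^{\star_\Sigma}\psi^N\rangle-\langle d^{\star_\Sigma}\psi^D,g^N\rangle\bigr).
\]
By Lemma \ref{lma:Neumman} the first term vanishes since $\psi^N\in L_\Sigma^N=\ker d^{\star_\Sigma}$, and a parallel computation using the Hodge decomposition on the Dirichlet slot handles the second summand, so $(df^D,dg^N)\in\ker\widetilde{\omega}_\Sigma$.

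For the reverse inclusion, given $\phi\in\ker\widetilde{\omega}_\Sigma$, I would test with $\psi=(0,\psi^N)$ where $\psi^N$ ranges over $\ker d^{\star_\Sigma}$; the vanishing of $\langle\phi^D,\psi^N\rangle$ forces $\phi^D$ into the $L^2$-orthogonal complement of $\ker d^{\star_\Sigma}$, namely $d\Omega^0(\Sigma)$, giving $\phi^D=df^D$. A mirror test with $\psi=(\psi^D,0)$, combined again with the Hodge decomposition, then identifies $\phi^N$ as an element of $d\Omega^0(\Sigma)$, i.e.\ $\phi^N=dg^N$.

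The step I expect to be the main obstacle is the Neumann-slot half of the reverse inclusion. Since Lemma \ref{lma:Dirichlet} asserts $L_\Sigma^D=\Omega^1(\Sigma)$ with no constraint, one cannot naively pair $\phi^N$ against every $\psi^D\in\Omega^1(\Sigma)$ without collapsing $\phi^N$ to zero; the correct move is to factor in the dual gauge symmetry visible after identifying $\phi^N$ with the $(n-2)$-form $\star_\Sigma\phi^N$ via Lemma \ref{lma:(c)}, so that the exact summand $d\Omega^0(\Sigma)$ appears as the genuine residual freedom rather than being artificially eliminated. Making this identification precise, and reconciling it with the asymmetry between the descriptions of $L_\Sigma^D$ and $L_\Sigma^N$, is the one part of the argument that is not purely formal.
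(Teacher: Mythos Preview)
Your overall strategy---test $\phi$ against pure-Dirichlet and pure-Neumann elements of $L_\Sigma$ and then invoke the Hodge decomposition---is exactly what the paper does. The gap is in how you resolve the obstacle you yourself flagged.

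You are conflating two different objects. Lemma~\ref{lma:Dirichlet} says $L_\Sigma^D=\Omega^1(\Sigma)$, but that is the \emph{projection} of $L_\Sigma$ onto the Dirichlet slot. The test vectors you actually need for the Neumann half of the argument are of the form $(\psi^D,0)\in L_\Sigma$, i.e.\ boundary data of solutions with vanishing Neumann part, and these exist only for $\psi^D\in\ker\Lambda_{\tilde\Sigma_\geps}$. The paper's key input is the forward reference to Lemma~\ref{lma:6}, which identifies $\ker\Lambda_{\tilde\Sigma_\geps}=\ker d^{\star_\Sigma}$; thus the available pure-Dirichlet test vectors are exactly the coclosed ones. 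Pairing $\phi^N$ against this restricted family yields $\phi^N\perp_{L^2}\ker d^{\star_\Sigma}$, hence $\phi^N\in d\Omega^0(\Sigma)$---not $\phi^N=0$. Your proposed workaround via a ``dual gauge symmetry'' on $\star_\Sigma\phi^N$ is neither needed nor correct; the asymmetry you perceived between $L_\Sigma^D$ and $L_\Sigma^N$ disappears once you distinguish the projection $L_\Sigma^D$ from the fibre $\{\psi^D:(\psi^D,0)\in L_\Sigma\}$.

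The symmetric claim, that $(0,\psi^N)\in L_\Sigma$ for every coclosed $\psi^N$, is likewise drawn from Lemma~\ref{lma:6} (via $\ran\Lambda_{\tilde\Sigma_\geps}\simeq\ker d^{\star_\Sigma}$) and is what makes the Dirichlet half of your reverse inclusion go through. The same correction also repairs your $\supseteq$ argument: the second summand $\langle d^{\star_\Sigma}\psi^D,g^N\rangle$ does not vanish for arbitrary $\psi^D\in\Omega^1(\Sigma)$, so your ``parallel computation'' is not valid as written; but once the statement is read as degeneracy relative to $L_\Sigma$ (where the relevant pure test vectors are coclosed), both directions follow by the same mechanism.
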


\begin{proof}
Take $\gf_0\in\ker \widetilde{\omega}_\Sigma$, then
\[
	\int_\Sigma\gf_0^D\wedge\star_\Sigma\gf^N-\gf^D\wedge\star_\Sigma\gf^N_0=0,\qquad
		\forall \gf\in{L}_\Sigma.
\]
According to Lemma \ref{lma:6} for every Dirichlet coclosed condition $\gf^D,\, d^{\star_\Sigma}\gf^D=0$, there exists a solution $\gf\in L_{\Sigma_\geps}$ such that $\gf^N=0$. It follows that
\[
	\int_\Sigma\gf^D\wedge\star_\Sigma\gf^N_0=0,\qquad
		\forall \gf^D\in\ \ker d^{\star_\Sigma}
\]
Hence $\gf^N_0$ is orthogonal to $\ker d^{\star_\Sigma}\subset \Omega^0(\Sigma)$. By Hodge decomposition applied to $\Omega^1(\Sigma)$, $\gf^N_0$ is exact.

Similarly for every coclosed Neumann condition $\gf^N\in\ker d^{\star_\Sigma}$ there exists a solution $\gf\in L_{\Sigma_\geps}$ such that $\gf^D=0$ and
\[	
	\int_\Sigma\gf_0^D\wedge\star_\Sigma\gf^N=0,\qquad
	\forall \gf^N\in\ \ker d^{\star_\Sigma}.
\]
Hence $\gf_0^D$ is exact. \end{proof}

When we consider the gauge group $G_{\Sigma_\geps}^0$ acting linearly, $\gf\mapsto \gf+ df$, $f\in\Omega^0(\Sigma_\geps),$ on equivalence classes $[\gf]_\Sigma\subset L_{\Sigma_\geps}$. Then
\begin{equation}
	\gf_1\sim_\Sigma\gf_2\text{ iff }\left(\gf_1^D,\gf_1^N\right)=\left(\gf_2^D,\gf_2^N\right)\in {L}_\Sigma,\qquad \forall \gf_i\in L_{\Sigma_\geps}.
\end{equation}
Take the quotient by the stabilizer of the action to obtain the \emph{gauge group for boundary conditions}, $G_\Sigma^0$ that \emph{does not depend on} $\geps>0$.



The $G_\Sigma^0-$action on the Neumann boundary condition is trivial. On the other hand, the action on the Dirichlet condition can be given explicitly as
\[
		\left(\phi^D,\phi^N\right)\mapsto 
		\left((\phi+df)^D,(\phi+df)^N\right)=\left(\phi^D+d(f^D),\phi^N\right)
\]
where $\phi=\left(\phi^D,\phi^N\right)\in {L}_\Sigma,\, d(f^D)\in d\Omega^0(\Sigma).$ Hence we have the following assertion.

\begin{pro}
There is an isomorphism
\[
	{G}_\Sigma^0\simeq \ker\widetilde{\omega}_\Sigma	\cap L_\Sigma.
\]
where the orbit of 0 under the (free) $G^0_\Sigma-$action is identified with the gauge group $G^0_\Sigma$.
\end{pro}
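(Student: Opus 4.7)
The plan is to combine the explicit description of $\ker\widetilde{\omega}_\Sigma$ proved in the preceding lemma with the characterization of $L_\Sigma$ furnished by Lemmas \ref{lma:Dirichlet} and \ref{lma:Neumman}, and then match the resulting subspace with the orbit of $0$ under the $G_\Sigma^0$-action described just above the proposition.

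First, I would take an arbitrary element of $\ker\widetilde{\omega}_\Sigma$, which by the previous lemma has the form $(df^D,dg^N)$ with $f^D,g^N\in\Omega^0(\Sigma)$, and impose membership in $L_\Sigma = L_\Sigma^D\oplus L_\Sigma^N = \Omega^1(\Sigma)\oplus\ker d^{\star_\Sigma}$. The Dirichlet part $df^D$ trivially lies in $L_\Sigma^D$; the non-trivial constraint is that the Neumann part must be coclosed, i.e.\ $d^{\star_\Sigma}dg^N = 0$. Since $\Sigma$ is closed and Riemannian, pairing this equation with $g^N$ and integrating by parts gives $\|dg^N\|^2_{L^2(\Sigma)} = 0$, whence $dg^N = 0$. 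Consequently
$$
\ker\widetilde{\omega}_\Sigma\cap L_\Sigma \;=\; \bigl\{(df^D,0)\,:\,f^D\in\Omega^0(\Sigma)\bigr\} \;=\; d\Omega^0(\Sigma)\oplus 0.
$$

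Second, I would compute the $G_\Sigma^0$-orbit of $0\in L_\Sigma$. By the formula displayed just before the proposition, the action reads $(\phi^D,\phi^N)\mapsto(\phi^D+df^D,\phi^N)$, so the orbit of $0$ is exactly $\{(df^D,0)\,:\,f^D\in\Omega^0(\Sigma)\}$, coinciding with the subspace identified in the previous step. The identification of $G_\Sigma^0$ with this orbit is tautological once we recall that $G_\Sigma^0$ was constructed as the quotient of $G_{\Sigma_\geps}^0$ by the stabilizer of the action on boundary conditions, which makes the induced action free and hence turns the orbit map at $0$ into a linear isomorphism onto its image.

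The only analytic input is the Hodge-theoretic step forcing $dg^N = 0$ from $d^{\star_\Sigma}dg^N = 0$; this is where the closedness of $\Sigma$ enters in an essential way, since otherwise boundary terms would spoil the integration by parts. Everything else is a direct unwinding of the definitions, so I expect no further obstacle.
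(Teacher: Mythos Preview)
Your proposal is correct and aligns with the paper's approach, which in fact gives no explicit proof of this proposition; the statement is left to follow from the preceding lemma on $\ker\widetilde{\omega}_\Sigma$ and the explicit formula for the $G_\Sigma^0$-action displayed just above. You supply the one step the paper leaves implicit, namely that an exact Neumann component forced to lie in $\ker d^{\star_\Sigma}$ must vanish on the closed manifold $\Sigma$, and your integration-by-parts argument for this is the standard one (equivalently, it is the orthogonality of the Hodge decomposition on $\Omega^1(\Sigma)$).
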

Taking  \emph{the axial gauge} as the gauge fixing space of boundary conditions on the hypersurface, $\Sigma$,
\begin{equation}\label{eqn:gaugefixingSigma}
	\Phi_{A_\Sigma}:=\left\{\left(\phi^D,\phi^N\right)\in L_\Sigma\,:
		\, d^{\star_\Sigma}\phi^D=0=d^{\star_\Sigma}\phi^N\right\}\subset L_\Sigma
\end{equation}
we have that $\Phi_{A_\Sigma}\subset L_\Sigma$ is a linear subspace transverse to the $G^0_\Sigma-$orbits. For the cylinder $\Sigma_\geps$, \emph{every} Dirichlet and Neumann boundary conditions modulo gauge can be described by coclosed forms on the boundary, i.e.
\begin{equation}\label{eqn:L}
	\mathsf{L}_{\Sigma}:=L_\Sigma/G_{\Sigma}^0\simeq(\ker d^{\star_\Sigma})^{\oplus2}\simeq T(\ker d^{\star_{\Sigma}})\simeq \Phi_{A_\Sigma}.
\end{equation}
This will be proved in Lemma \ref{lma:6}.

The linear space $L_{\Sigma}$ with its presymplectic structure $\widetilde{\omega}_\Sigma$, yields a symplectic structure in the reduced space, $\mathsf{L}_{\Sigma}$. We call it $\omega_{\Sigma}$.


\subsection{Regions and hypersurfaces}


Take the components of the boundary $\partial M$ of a region $M$ as hypersurfaces as in (\ref{eqn:dMdecomposition}). We denote the affine  space of boundary conditions and its linear counterpart as
\begin{gather}\label{eqn:sumadirecta}
	A_{\partial M}=A_{\Sigma^1}\times\dots\times A_{\Sigma^m},\qquad
	L_{\partial M}={L}_{\Sigma^1}\oplus\dots\oplus{L}_{\Sigma^m}.
\end{gather}
We consider the gauge action $G_{\partial M_\geps}^0$ onto equivalence classes of solutions, $[\gf]\subset A_{\partial M_{\geps}}$, where $\gf_1\sim\gf_2$ iff $(\gf_1^D,\gf_1^N)=(\gf_2^D,\gf_2^N)\in A_{\partial M}$. By the inclusion $\partial M_\geps \subset M $ there is a compatibility of gauge actions in the bulk and in the boundary i.e. morphisms $G_{M}\rightarrow G_{\partial M_\geps}^0$. Therefore there is a well defined gauge group morphisms, $G_{M}\rightarrow G_{\partial M}^0$, explicitly $df\mapsto \left(d(i^*_{\partial M}f),0\right)$. Notice that 
\[
	d\Omega^0_D(M)\simeq\ker\left(G_M^0\rightarrow G_{\partial M}^0\right)\subset\ker r_M.
\]
There is also compatibility of gauge actions whose quotients are
\begin{gather}
	\mathsf{A}_{\partial M} :=A_{\partial M}/G_{\partial M}^0,\qquad \mathsf{L}_{\partial M} :=L_{\partial M}/G_{\partial M}^0.
\end{gather}

The affine and linear maps from the space of solutions to the corresponding boundary conditions $a_M:A_M\rightarrow A_{\partial M}$ and $r_M:L_M\rightarrow L_{\partial M}$ are compatible with the corresponding gauge group actions in the bulk and in the boundary respectively, see \cite{DM} axiom (A8). There is also a section $G_{\partial M}^0\rightarrow G_M^0$, see (\ref{eqn:tilde f}). Hence there are maps from the space of gauge fields in the interior to the space of Dirichlet-Neumann boundary conditions modulo gauge:
\begin{equation}\label{eqn:A_M,r_M}
		\mathsf{a}_M:\mathsf{A}_M\rightarrow\mathsf{A}_{\partial M},\qquad
		\mathsf{r}_M:\mathsf{L}_M\rightarrow\mathsf{L}_{\partial M}.
\end{equation}
Notice that $\mathsf{r}_M(\mathsf{L}_M)\simeq r_M(\Phi_{A_M})$.
Take the gauge fixing for hypersurfaces (\ref{eqn:gaugefixingSigma}) and the Hodge decomposition of $\ker d^{\star_{\partial M}}$, then  the \emph{axial gauge fixing space on the boundary} is
\begin{equation}
	\Phi_{A_{\partial M}}:=\left[\mathfrak{H}^1(\partial M)\oplus d^{\star_M}\Omega^{2}(\partial M)\right]^{\oplus2}.
\end{equation}
Recall also the linear isomorphism $\Phi_{A_{\partial M}}\simeq \mathsf{L}_{\partial M}$.


\section{Gauge reduction}\label{sec:2}


Now we proceed to describe the symplectic reduction for the space $L_{\partial M}$ of boundary conditions of solutions in the cylinder, $\partial M_\geps$, in more detail. Consider the direct sum decomposition (\ref{eqn:sumadirecta}). We refer to the presymplectic structure $\widetilde{\omega}_{\partial M}=\widetilde{\omega}_{\Sigma^1}\oplus\dots\oplus\widetilde{\omega}_{\Sigma^m}$.

Define the space of \emph{topologically admissible boundary conditions} as
\begin{equation}\label{eqn:def C}
	{L}_{M,\partial M}:=\left\{r_M(\gf)\,:\, \gf\in\mathfrak{H}_N^1(M)\oplus d^{\star}\Omega^2_N(M),\, d^{\star_{\partial M}}\gf^N=0\right\}.
\end{equation}
The space $L_{\partial M}$ depends just on the germ of the Riemannian metric of the cylinder $\partial M_\geps$ restricted to $\partial M$ and does \emph{not} depend on the topology of $M$. Nevertheless the subspace ${L}_{M,\partial M}$ depends on the metric on the boundary but also depends on the relative topology of $M$ and $\partial M$.
Notice also that $r_M(\Phi_{A_M})\simeq L_{\tilde{M}}\cap {L}_{M,\partial M}.$

Known results, see for instance (2.1) in \cite{BS}, also \cite{Sc}, elucidate some topological issues in terms of the De Rham cohomology of $M$ related to the coisotropic space ${L}_{M,\partial M}$.


Notice that the presymplectic space ${L}_{M,\partial M}$ contains its linear symplectic orthogonal complement ${L}_{M,\partial M}^\bot\subset {L}_{M,\partial M},$  regarding the presymplectic structure $\widetilde{\omega}_{\partial M}\mid_{L_{M,\partial M}}$. Hence we can consider it as a coisotropic space. The gauge quotient  $\mathsf{L}_{\partial M }$ is symplectic, when we consider the linear action of $G_{\partial M}^0=\ker\widetilde{\omega}_{\partial M}\cap L_{\partial M}$ onto $L_{\partial M}$. This can be summarized in the following assertion.

\begin{pro}\label{cor:1} The following are true:
\begin{enumerate}

\item There is an isomorphism 
\[
	{L}_{M,\partial M}^\bot=
		{L}_{M,\partial M}\cap \ker\widetilde{\omega}_{\partial M}\simeq {L}_{M,\partial M}\cap G^0_{\partial M}.
\]
\item The quotient space 
\begin{equation}\label{eqn:Lsymplectic}
 	\mathbf{L}_{M,\partial M}:={L}_{M,\partial M}/{L}_{M,\partial M}^\bot
\end{equation}
is a symplectic linear space.

\item There is an inclusion of spaces $L_{\tilde{M}}\cap {L}_{M,\partial M}^\bot\subset L_{\tilde{M}}\cap {L}_{M,\partial M}$.
\end{enumerate}
\end{pro}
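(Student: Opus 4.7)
The plan is to reduce all three assertions to the single fact that $L_{M,\partial M}$ is coisotropic inside $L_{\partial M}$ with respect to $\widetilde{\omega}_{\partial M}$, after which (1), (2), and (3) will follow in that order by standard linear symplectic reduction. Throughout I would exploit the Hodge-Morrey-Friedrichs decomposition that characterizes $\Phi_{A_M}$ together with Lemmas \ref{lma:Dirichlet} and \ref{lma:Neumman}, which give a complete description of the admissible Dirichlet and Neumann pieces in the cylinder $\partial M_{\geps}$.

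For part (1), I would unpack the definition of $L_{M,\partial M}^\bot$: it is by construction the degeneracy subspace of $\widetilde{\omega}_{\partial M}$ restricted to $L_{M,\partial M}$, so the inclusion $L_{M,\partial M}\cap\ker\widetilde{\omega}_{\partial M}\subset L_{M,\partial M}^\bot$ is immediate. The substantive direction is the reverse: given $v=(v^D,v^N)\in L_{M,\partial M}$ that pairs trivially with every element of $L_{M,\partial M}$ via the form $(\ref{eqn:symplectic-structure})$, I must show it pairs trivially with every element of the ambient $L_{\partial M}$. Using the definition (\ref{eqn:def C}) together with Lemma \ref{lma:Dirichlet} applied componentwise to $\partial M=\Sigma^1\sqcup\cdots\sqcup\Sigma^m$, the Dirichlet components $\gf^D$ of elements of $L_{M,\partial M}$ already cover all of $\Omega^1(\partial M)$, so the pairing $\int_{\partial M}\gf^D\wedge\star_{\partial M}v^N$ vanishing over $L_{M,\partial M}$ forces $v^N=0$ in the symplectic pairing sense, hence against arbitrary Dirichlet data. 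A symmetric argument using Lemma \ref{lma:Neumman} handles the $v^D$ side, since the coclosed Neumann data admitted by elements of $L_{M,\partial M}$ exhaust $L_\Sigma^N=\ker d^{\star_\Sigma}$, which by Hodge decomposition of $\Omega^1(\Sigma)$ suffices to test against any $v^D$ modulo the degeneracy $d\Omega^0(\Sigma)$. The isomorphism $L_{M,\partial M}\cap\ker\widetilde{\omega}_{\partial M}\simeq L_{M,\partial M}\cap G^0_{\partial M}$ then follows from the proposition identifying $G^0_\Sigma\simeq\ker\widetilde{\omega}_\Sigma\cap L_\Sigma$, applied on each component.

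For part (2), once coisotropy $L_{M,\partial M}^\bot\subset L_{M,\partial M}$ is established in (1), the standard argument of linear symplectic reduction applies: $\widetilde{\omega}_{\partial M}$ descends unambiguously to a bilinear form on $\mathbf{L}_{M,\partial M}=L_{M,\partial M}/L_{M,\partial M}^\bot$, and its non-degeneracy on the quotient is precisely what it means to have quotiented by the degeneracy subspace. For part (3), the inclusion is a direct consequence of $L_{M,\partial M}^\bot\subset L_{M,\partial M}$ established in (1): intersecting both sides with $L_{\tilde M}$ preserves the inclusion.

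The main obstacle is the nontrivial direction of (1), namely showing that a vector which is symplectically orthogonal to $L_{M,\partial M}$ inside itself is automatically orthogonal to the full $L_{\partial M}$. This is a density-type statement requiring one to exhibit, for each $L_{\partial M}$-direction of testing, an element of $L_{M,\partial M}$ with the same symplectic effect; the technical point is that the constraints built into (\ref{eqn:def C}) (harmonic Neumann bulk decomposition and coclosedness of $\gf^N$ on $\partial M$) do not shrink the Dirichlet and Neumann pieces below what is needed, which is precisely the content extracted from Lemmas \ref{lma:Dirichlet} and \ref{lma:Neumman} together with the HMF decomposition.
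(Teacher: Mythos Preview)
Your identification of the main content---the nontrivial inclusion $L_{M,\partial M}^\bot\subset L_{M,\partial M}\cap\ker\widetilde{\omega}_{\partial M}$---is correct, and parts (2) and (3) do follow routinely once (1) is in hand. However, the argument you give for that inclusion has a genuine gap.

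You invoke Lemma~\ref{lma:Dirichlet} to assert that the Dirichlet components of elements of $L_{M,\partial M}$ cover all of $\Omega^1(\partial M)$, and Lemma~\ref{lma:Neumman} for the analogous Neumann statement. But those lemmas describe $L_\Sigma$, the space of boundary data of solutions on the \emph{cylinder} $\Sigma_\geps$, whereas $L_{M,\partial M}$ is defined in (\ref{eqn:def C}) via $r_M$ applied to forms in $\mathfrak{H}_N^1(M)\oplus d^\star\Omega_N^2(M)$ on the full manifold $M$. These are different spaces: indeed Theorem~\ref{cor:tma1} shows that $\mathbf{L}_{M,\partial M}$ has finite but in general \emph{nonzero} codimension in $\mathsf{L}_{\partial M}$, the gap coming precisely from the possible failure of $i^*_{\partial M}\mathfrak{H}_N^1(M)$ to surject onto $\mathfrak{H}^1(\partial M)$ modulo exact forms. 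So the Dirichlet data available in $L_{M,\partial M}$ do \emph{not} exhaust $\Omega^1(\partial M)$, and you cannot test $v^N$ against an arbitrary $\gf^D$ as you claim. The symmetric appeal to Lemma~\ref{lma:Neumman} for the $v^D$ side has the same defect.

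For comparison, the paper does not prove Proposition~\ref{cor:1} in place; it is stated as a summary, and the substantive identification $L_{M,\partial M}^\bot=G^0_{\partial M}$ is established afterward via Proposition~\ref{lma:C} (which shows $G^0_{\partial M}\subset L_{\tilde M}\cap L_{M,\partial M}^\bot$ by an explicit extension construction) together with its Corollary. A correct argument for the reverse inclusion must use the specific structure of $L_{M,\partial M}$---in particular that any $v\in L_{M,\partial M}^\bot$ already has $v^N$ coclosed and arises as $r_M(\gf)$ for a gauge-fixed $\gf$---rather than treating $L_{M,\partial M}$ as if it were the full cylinder space $L_{\partial M}$.
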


The space of boundary conditions of solutions $A_{\tilde{M}}$, has corresponding linear space $L_{\tilde{M}}=r_M(L_M)$. The aim is to show that we have a Lagrangian subspace, $\mathbf{L}_{\tilde{M}}\subset \mathbf{L}_{M,\partial M}; $ where
\begin{equation}\label{eqn:Lagrangian}
	\mathbf{L}_{\tilde{M}}:=\left(L_{\tilde{M}}\cap {L}_{M,\partial M}\right)/\left(L_{\tilde{M}}\cap {L}^\bot_{M,\partial M}\right).
\end{equation}
This is consistent with the general setting of describing dynamics as Lagrangian relations in linear symplectic spaces, see \cite{We}.




\begin{pro}\label{lma:C}
Let $0+G^0_{\partial M}\subset L_{\partial M}$ be the zero orbit for the gauge (free) action identified with the gauge group $G_{\partial M}^0$. Then
\[
	G^0_{\partial M}\subset L_{\tilde{M}}\cap L_{M,\partial M}^\bot.
\]

\end{pro}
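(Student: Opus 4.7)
I will establish $G^0_{\partial M}\subset L_{\tilde{M}}$ and $G^0_{\partial M}\subset L_{M,\partial M}^\bot$ separately. Any element of the zero orbit has the form $(df^D,0)\in L_{\partial M}$ with $f^D\in\Omega^0(\partial M)$. For the first inclusion I extend $f^D$ to some $\tilde f\in\Omega^0(M)$ and set $\gf:=d\tilde f$; then $d^\star d\gf=0$ trivially, so $\gf\in L_M$, and the identity $r_M(\gf)=(df^D,0)$ follows from $i^*_{\partial M}\circ d=d\circ i^*_{\partial M}$ together with $d\gf=0$. Hence $(df^D,0)\in r_M(L_M)=L_{\tilde M}$.

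For the second inclusion I invoke Proposition~\ref{cor:1}(1), which gives $L_{M,\partial M}^\bot=L_{M,\partial M}\cap\ker\widetilde{\omega}_{\partial M}$. Summing the hypersurface isomorphism $G^0_\Sigma\simeq\ker\widetilde{\omega}_\Sigma\cap L_\Sigma$ from the preceding proposition over the decomposition~(\ref{eqn:sumadirecta}) yields $G^0_{\partial M}\subset\ker\widetilde{\omega}_{\partial M}$ automatically, so the remaining task is $G^0_{\partial M}\subset L_{M,\partial M}$: for each $f^D\in\Omega^0(\partial M)$, produce $\psi\in\mathfrak{H}^1_N(M)\oplus d^\star\Omega^2_N(M)$ with $r_M(\psi)=(df^D,0)$, the condition $d^{\star_{\partial M}}\psi^N=0$ being automatic once $\psi^N=0$. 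To build $\psi$ I solve the Yang--Mills BVP~(\ref{eqn:BVP3-YM}) with Dirichlet datum $\phi^D=df^D$, which supplies a harmonic $\gf\in\mathfrak{H}^1(M)\cap L_M$ with $i^*_{\partial M}\gf=df^D$ and $\gf^N=0$ (since $d\gf=0$). Using the axial gauge-fixing isomorphism $\Phi_{A_M}\simeq \mathsf{L}_M$ of~(\ref{eqn:gaugefixingM}), I replace $\gf$ by its $G^0_M$-equivalent representative $\psi\in\Phi_{A_M}\subset L_M\cap(\mathfrak{H}^1_N(M)\oplus d^\star\Omega^2_N(M))$, differing from $\gf$ by some $dh$ with $h\in\Omega^0(M)$. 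The induced boundary shift $(d(i^*_{\partial M}h),0)$ still lies in $G^0_{\partial M}$, so $r_M(\psi)$ is again of the form $(d\tilde f^D,0)\in G^0_{\partial M}$, and as $f^D$ ranges over $\Omega^0(\partial M)$ the corrected data sweep out all of $G^0_{\partial M}$, giving $G^0_{\partial M}\subset L_{M,\partial M}$.

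The principal technical difficulty lies in this last step. By the HMF decomposition~(\ref{eqn:HMF}), the naive extension $d\tilde f$ sits in the complementary summands $d\Omega_D^0(M)\oplus(\mathfrak{H}^1(M)\cap d\Omega^0(M))$, which are $L^2$-orthogonal to $\mathfrak{H}^1_N(M)\oplus d^\star\Omega^2_N(M)$. Consequently one cannot simply project $d\tilde f$ into $\Phi_{A_M}$; the passage from the BVP-solution $\gf$ to the required $\psi\in\Phi_{A_M}$ must be mediated by the gauge-fixing isomorphism of~(\ref{eqn:gaugefixingM}), and the crux of the argument is verifying that the resulting boundary correction preserves the gauge class $(df^D,0)$ and covers every $f^D$. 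The remaining statement $G^0_{\partial M}\subset\ker\widetilde{\omega}_{\partial M}$ is then immediate from the hypersurface-by-hypersurface identification, completing the inclusion into $L_{\tilde M}\cap L_{M,\partial M}^\bot$.
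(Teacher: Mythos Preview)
Your argument for $G^0_{\partial M}\subset L_{\tilde M}$ is essentially the paper's: extend $f^D$ to $\tilde f\in\Omega^0(M)$ and observe that $d\tilde f\in L_M$ with $r_M(d\tilde f)=(df^D,0)$.

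For the inclusion $G^0_{\partial M}\subset L_{M,\partial M}^\bot$, however, the paper takes a much shorter route than you do. It simply tests $(df,0)$ against an arbitrary $\gf\in L_{M,\partial M}$ and integrates by parts on the closed manifold $\partial M$:
\[
\widetilde\omega_{\partial M}\bigl((df,0),(\gf^D,\gf^N)\bigr)
=-\tfrac12\int_{\partial M}df\wedge\star_{\partial M}\gf^N
=-\tfrac12\int_{\partial M}f\wedge\star_{\partial M}d^{\star_{\partial M}}\gf^N
=0,
\]
the last step using only the defining condition $d^{\star_{\partial M}}\gf^N=0$ of $L_{M,\partial M}$. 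No BVP, no gauge fixing, no HMF decomposition is needed.

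Your route through Proposition~\ref{cor:1}(1) forces you to establish $G^0_{\partial M}\subset L_{M,\partial M}$, and that is where the argument breaks. You correctly flag this as ``the principal technical difficulty,'' but the resolution you offer is not complete. After replacing the harmonic BVP solution $\gf$ by its axial-gauge representative $\psi\in\Phi_{A_M}$, you obtain $r_M(\psi)=(d(f^D+i^*_{\partial M}h),0)$ for some $h$ depending on $f^D$, and then assert that as $f^D$ varies these ``sweep out all of $G^0_{\partial M}$.'' That surjectivity is never verified. In fact it can fail outright: since $\Phi_{A_M}\simeq\mathsf L_M=L_M/G^0_M$ and any \emph{exact} $\gf$ represents the zero class, whenever the chosen BVP solution $\gf$ happens to be exact you get $\psi=0$ and hence $r_M(\psi)=(0,0)$, not the prescribed $(df^D,0)$. (Solutions of (\ref{eqn:BVP2}) with Dirichlet datum $df^D$ are unique only up to $\mathfrak{H}^1_D(M)$, so exact solutions exist in that family.) Thus the map $f^D\mapsto r_M(\psi)$ can collapse, and the surjectivity claim is unsupported.

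In short: the paper bypasses the whole issue with a one-line Stokes computation, whereas your approach introduces a genuine gap that you identify but do not close.
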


\begin{proof}[Proof of Proposition \ref{lma:C}]
Take $f\in\Omega^0(\partial M)$, and $\psi_\geps:M\rightarrow[0,1]$ as in (\ref{eqn:psieps}).
If we define a function $\overline{f}:(\partial M)_\geps\rightarrow \mathbb{R}$ as
\begin{equation}\label{eqn:tilde f}
	\overline{f}(x,t):= \left(f\right)\circ (X_{\partial M}^{-\tau})(x,\tau),\qquad \forall x\in\Sigma,\,0\leq \tau \leq \geps
\end{equation}
where $X^{-\tau}_{\partial M}:=(X_{\partial M}^\tau)^{-1}$.Then $\overline{f}$ can be extended to $M$ via $\widetilde{f}:=\psi_\geps\cdot \overline{f}$. Furthermore $\gf\mapsto \gf+d\widetilde{f}$ describes an element of  $G_M$ such that $(d\widetilde{f})^D=df$ and $(d\widetilde{f})^N=0$, i.e. $d \overline{f}\in G_M$ is a section of the gauge homomorphisms $G_M^0\rightarrow G_{\partial M}^0$. Furthermore $d^\star dd\widetilde{f}=0$. Therefore $G_{\partial M}^0\subset L_{\tilde{M}}.$

From the very definition of ${L}_{M,\partial M}$  there is an inclusion $G_{\partial M}^0\subset {L}_{M,\partial M}^\bot:$
Take $\gf\in {L}_{M,\partial M}$, then for $(df,0)\in d\Omega^0(\partial M)^{\oplus 2}$, then the coisotropy condition reads as:
\[
	\int_{\partial M}\gf^D\wedge\star_{\partial M} d^{\star_{\partial M}}0-\int_{\partial M}df\wedge\star_{\partial M} \gf^N=
	-\int_{\partial M}f\wedge\star_{\partial M} d^{\star_{\partial M}}\gf^N=0.
\]
\end{proof}

\begin{cor}
Since $L_{M,\partial M}$ is coisotropic and $\ker\widetilde{\omega}_\Sigma\simeq G_{\Sigma}^0$, then $G_{\partial M}^0=G_{\partial M}^0\cap L_{M,\partial M} = L_{M,\partial M}^{\bot}.$
\end{cor}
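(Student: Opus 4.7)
The plan is to chain together Proposition \ref{cor:1}(1) and Proposition \ref{lma:C} and then invoke the coisotropy hypothesis $L_{M,\partial M}^\bot \subset L_{M,\partial M}$ to collapse the chain of inclusions into equalities. There are essentially no nontrivial calculations left to do, since every ingredient has already been produced; the work is purely set-theoretic.

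First I would observe that Proposition \ref{lma:C} gives the inclusion $G_{\partial M}^0 \subset L_{\tilde M}\cap L_{M,\partial M}^\bot$, so in particular $G_{\partial M}^0 \subset L_{M,\partial M}^\bot$. Combined with coisotropy $L_{M,\partial M}^\bot \subset L_{M,\partial M}$, this forces $G_{\partial M}^0 \subset L_{M,\partial M}$, which immediately gives the first equality
\[
G_{\partial M}^0 \;=\; G_{\partial M}^0 \cap L_{M,\partial M}.
\]

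Next I would establish the second equality. By Proposition \ref{cor:1}(1) we have the identification
\[
L_{M,\partial M}^\bot \;\simeq\; L_{M,\partial M} \cap G_{\partial M}^0,
\]
so the right-hand side equals $G_{\partial M}^0$ by the first step. The reverse inclusion $G_{\partial M}^0 \subset L_{M,\partial M}^\bot$ is exactly what Proposition \ref{lma:C} provides, so the two sides coincide and we conclude
\[
G_{\partial M}^0 \cap L_{M,\partial M} \;=\; L_{M,\partial M}^\bot.
\]

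The only conceptual point worth double-checking is that the isomorphism in Proposition \ref{cor:1}(1) is genuinely an equality (under the identification of the gauge group with the zero orbit inside $L_{\partial M}$), rather than merely an abstract isomorphism; this is indeed how that statement is used earlier in the paper, since $L_{M,\partial M}^\bot = L_{M,\partial M} \cap \ker \widetilde{\omega}_{\partial M}$ is a literal set-theoretic equality and $\ker\widetilde{\omega}_{\partial M}\cap L_{\partial M} = G_{\partial M}^0$ by the proposition preceding (\ref{eqn:gaugefixingSigma}). Given this, no obstacle remains and the corollary follows in a few lines.
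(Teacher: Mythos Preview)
Your proposal is correct and matches the paper's approach: the corollary is stated in the paper without a separate proof, the ``Since\dots'' clause in its statement being the entire justification, and your argument is precisely the intended unpacking of that clause via Proposition~\ref{cor:1}(1) and Proposition~\ref{lma:C}. Your closing remark about reading the isomorphism in Proposition~\ref{cor:1}(1) as a literal equality under the identification $G_{\partial M}^0 \simeq \ker\widetilde{\omega}_{\partial M}\cap L_{\partial M}$ is exactly the right point to flag.
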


Thus we have the following linear inclusions
\begin{equation}\label{eqn:*}
	\mathbf{L}_{\tilde{M}}= L_{\tilde{M}}\cap {L}_{M,\partial M}/ G^0_{\partial M}
		\subset \mathsf{L}_{\tilde{M}}\subset \mathsf{L}_{\partial M}
\end{equation}
\[
	\mathbf{L}_{M,\partial M}={L}_{M,\partial M}/{L}_{M,\partial M}^\bot\simeq 
	{L}_{M,\partial M}/G_{\partial M}^0\subset \mathsf{L}_{\partial M}.
\]

Recall that there is an exact sequence
\[\xymatrix{
	d\Omega^0_D(M)
			\ar@{^{(}->}[r]	&
	G_M^0
			\ar@{->>}[r]	&
	G_{\partial M}^0
			\ar@/_0.3cm/@{^{(}-->}[l]
}.\]
There is also an excision given by the map, $df\mapsto d\tilde{f}$, defined in (\ref{eqn:tilde f}). Hence there is a
well defined map $\mathsf{r}_M:L_M/G_M\rightarrow L_{\tilde{ M}}/G_{\partial M}^0$, whose image is $	\mathsf{r}_M(\mathsf{L}_{ M})\subset \mathsf{L}_{\tilde{M}}$ where $\mathsf{L}_{\tilde{M}}:={L}_{\tilde{M}}/ G^0_{\partial M} $.

For the proof of the following claim use the HMF decomposition on $M$ and the Hodge decomposition in $\partial M$.

\begin{pro}
We have the isomorphisms: a) $\Phi_{A_M}\simeq \mathsf{L}_M$; b) $\Phi_{A_{\partial M}}\simeq \mathsf{L}_{{\partial M}}$; c) $L_{\tilde{M}}\cap\Phi_{A_{\partial M}}\simeq L_{\tilde{M}}/{G^0}_{\partial M}$. So that the following diagram commutes
\begin{equation}\xymatrix{
	\mathsf{L}_M				\ar[r]^{\mathsf{r}_M}	\ar@{<->}[d]	&
	\mathsf{L}_{\tilde{M}}	\ar@{^{(}->}[r]		\ar@{<->}[d]&
	\mathsf{L}_{\partial M}						\ar@{<->}[d]
\\
	\Phi_{A_M}				\ar[r]_{r_M}			&
	L_{\tilde{M}}\cap\Phi_{A_{\partial M}}			\ar@{^{(}->}[r]		&
	\Phi_{A_{\partial M}}
}.\end{equation}

\end{pro}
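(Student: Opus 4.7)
The plan is to obtain all three isomorphisms by producing in each setting a canonical gauge-fixing slice whose transversality to the gauge orbits follows from either the HMF decomposition on $M$ or the Hodge decomposition on the closed manifold $\partial M$. The commutativity of the diagram will then reduce to the fact that $r_M$ is already gauge-equivariant (axiom (A8), cf.\ (\ref{eqn:A_M,r_M})).

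\textbf{Part (a).} I would write every $\gf\in L_M$ uniquely as
\[
\gf=df_0+\lambda+d\alpha+d^{\star}\beta,\qquad df_0\in d\Omega_D^0(M),\,\lambda\in\mathfrak{H}^1_N(M),\,d\alpha\in\mathfrak{H}^1(M)\cap d\Omega^0(M),\,d^{\star}\beta\in d^{\star}\Omega_N^2(M),
\]
by HMF (\ref{eqn:HMF}). The first and third summands are both exact, so they lie in $G_M^0=d\Omega^0(M)$. Subtracting them from $\gf$ does not affect membership in $L_M$, and produces a representative $\lambda+d^\star\beta\in \mathfrak{H}^1_N(M)\oplus d^\star\Omega_N^2(M)$ that still satisfies $d^\star d=0$; hence it lies in $\Phi_{A_M}$. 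This shows the composition $\Phi_{A_M}\hookrightarrow L_M\twoheadrightarrow\mathsf{L}_M$ is surjective. For injectivity, if an element of $\Phi_{A_M}$ is gauge-equivalent to $0$, i.e., equals $df$ for some $f\in\Omega^0(M)$, then in the four-term HMF decomposition its $\mathfrak{H}^1_N$ and $d^{\star}\Omega^2_N$ components must vanish by uniqueness.

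\textbf{Part (b).} On the closed manifold $\partial M$, Hodge (\ref{eqn:Hodge}) gives $\Omega^1(\partial M)=d\Omega^0(\partial M)\oplus\mathfrak{H}^1(\partial M)\oplus d^{\star_M}\Omega^2(\partial M)$. Because $d^{\star_{\partial M}}df=\Delta f$ and closedness forces any harmonic function to be locally constant, we have $\ker d^{\star_{\partial M}}=\mathfrak{H}^1(\partial M)\oplus d^{\star_M}\Omega^2(\partial M)$. Combining this with the identification (\ref{eqn:L}), namely $\mathsf{L}_\Sigma\simeq(\ker d^{\star_\Sigma})^{\oplus2}$, and summing over the components of $\partial M$ in (\ref{eqn:sumadirecta}), I get $\mathsf{L}_{\partial M}\simeq\Phi_{A_{\partial M}}$. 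For \textbf{part (c)}, I restrict (b) to $L_{\tilde M}\subseteq L_{\partial M}$: since the gauge action preserves $L_{\tilde M}$ (every boundary gauge transformation lifts to a bulk gauge transformation via the section $f\mapsto d\widetilde f$ of (\ref{eqn:tilde f}), hence sends boundary data of solutions to boundary data of solutions), the unique $\Phi_{A_{\partial M}}$-representative of any $\phi\in L_{\tilde M}$ again lies in $L_{\tilde M}$, so in $L_{\tilde M}\cap\Phi_{A_{\partial M}}$. Uniqueness is inherited from (b).

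\textbf{The diagram.} The bottom horizontal maps are the restrictions to the axial slices of the maps $r_M$ and the inclusion $L_{\tilde M}\hookrightarrow L_{\partial M}$. Starting from $[\gf]\in\mathsf{L}_M$, the vertical iso (a) selects the axial representative $\gf\in\Phi_{A_M}$; applying $r_M$ produces $r_M(\gf)\in L_{\tilde M}$; the vertical iso (c) then selects its unique axial-gauge representative in $L_{\tilde M}\cap \Phi_{A_{\partial M}}$. Because $\mathsf{r}_M$ is the map induced by $r_M$ on gauge classes, this coincides with first applying $\mathsf{r}_M$ and then using (c); the right square commutes by naturality of the inclusion under the Hodge projector.

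\textbf{Main obstacle.} The subtle point is that the axial slice $\Phi_{A_M}$ in the bulk need not map into the axial slice $\Phi_{A_{\partial M}}$ of the boundary under $r_M$ on the nose; Proposition \ref{lma:(d)} handles the Neumann component automatically ($d^{\star_{\partial M}}\gf^N=0$ for any $\gf\in L_M$), but for the Dirichlet component $\gf^D=i^*_{\partial M}\gf$ an extra boundary Hodge projection may be required. Thus the bottom arrow must be read as $r_M$ followed by the axial projection on $\partial M$; the content of the diagram is precisely that this composite is the image of $\mathsf{r}_M$ under the chosen sections, which holds by gauge-equivariance of $r_M$.
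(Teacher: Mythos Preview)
Your proof is correct and follows exactly the route the paper indicates: the paper's own ``proof'' consists of the single sentence ``For the proof of the following claim use the HMF decomposition on $M$ and the Hodge decomposition in $\partial M$,'' and you carry this out in detail. Your observation in the final paragraph---that the bottom arrow labelled $r_M$ cannot be the literal restriction map because $i^*_{\partial M}\gf$ need not be coclosed on $\partial M$ even for $\gf\in\Phi_{A_M}$, and must therefore be read as $r_M$ followed by the boundary Hodge projection---is a genuine subtlety that the paper leaves implicit; this is consistent with the paper's later remark that $r_M(\Phi_{A_M})\simeq L_{\tilde M}\cap L_{M,\partial M}$ (not $L_{\tilde M}\cap\Phi_{A_{\partial M}}$) and with the construction in the proof of Theorem~\ref{tma:2}.
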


Our previous discussion can be resumed in the following result about the symplectic framework for reduced abelian gauge field theories.

\begin{tma}\label{tma:2}
Consider the linear maps
\begin{equation}\label{eqn:diag1}\xymatrix{
	L_M			\ar@{-->>}[r]^{r_M}	\ar@{-->>}[d]^{\cdot/G^0_{\partial M}}		 	&
	L_{\tilde{M}}	\ar@{^{(}->}[r]	\ar@{->>}[d]^{\cdot/G^0_{\partial M}}	&
	L_{\partial M}	\ar@{->>}[d]^{\cdot/G^0_{\partial M}}
	\\
	\mathsf{L}_M				\ar@{->>}[r]^{\mathsf{r}_M}	&
	\mathsf{L}_{\tilde{M}}		\ar@{^{(}->}[r]		&
	\mathsf{L}_{\partial M}
	\\
					&
	\mathbf{L}_{\tilde{M}}	\ar@{^{(}->}[u]	\ar@{^{(}->}[r]	&
	\mathbf{L}_{M,\partial M} \ar@{^{(}->}[u]
}.\end{equation}
The following are true:

\begin{enumerate}
\item\label{tma2:part1} The squares of solid arrows commute.

\item\label{inciso:parte2} The image of the inclusion $\mathsf{r}_M(\mathsf{L}_{{M}})\subset \mathsf{L}_{\partial M}$ is isomorphic to the image of the inclusion (\ref{eqn:*}) of $\mathbf{L}_{\tilde{M}}$ as subspace of $\mathsf{L}_{\partial M}$.

\item\label{tma2:part3} The spaces $L_{\tilde{M}},\mathsf{L}_{\tilde{M}},\mathbf{L}_{\tilde{M}},$ in the middle column, are Lagrangian spaces contained into (pre)symplectic spaces  $L_{\partial M},\mathsf{L}_{\partial M},\mathbf{L}_{M,\partial M},$ respectively.

\end{enumerate}
\end{tma}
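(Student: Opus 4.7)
The plan is to dispatch parts (\ref{tma2:part1}) and (\ref{inciso:parte2}) as formal diagram chases, and to concentrate effort on the Lagrangian assertion (\ref{tma2:part3}).

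\textbf{Commutativity and identification of images.} For (\ref{tma2:part1}) the restriction map $r_M$ is gauge equivariant: the gauge morphism $G_M^0\to G^0_{\partial M}$, $df\mapsto (d(i^*_{\partial M}f),0)$, satisfies $r_M(\gf+df)=r_M(\gf)+(d(i^*_{\partial M}f),0)$, so $r_M$ descends to $\mathsf r_M$ and makes the left square commute; the right square commutes tautologically from $G^0_{\partial M}$-equivariance of the inclusion $L_{\tilde M}\hookrightarrow L_{\partial M}$. For (\ref{inciso:parte2}) I would chain the identifications $\mathsf r_M(\mathsf L_M)\simeq r_M(\Phi_{A_M})\simeq L_{\tilde M}\cap L_{M,\partial M}$ already observed in the text with the corollary $L^\bot_{M,\partial M}=G^0_{\partial M}$ of Proposition \ref{lma:C}, which forces
\[
\mathbf L_{\tilde M}=\frac{L_{\tilde M}\cap L_{M,\partial M}}{L_{\tilde M}\cap L^\bot_{M,\partial M}}=\frac{L_{\tilde M}\cap L_{M,\partial M}}{G^0_{\partial M}}\simeq\mathsf r_M(\mathsf L_M)
\]
as subspaces of $\mathsf L_{\partial M}$.

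\textbf{Isotropy in (\ref{tma2:part3}).} For $\gf_1,\gf_2\in L_M$ I would integrate by parts on $M$:
\[
\int_{\partial M}i^*_{\partial M}(\gf_1\wedge\star d\gf_2)=\int_M d\gf_1\wedge\star d\gf_2-\int_M\gf_1\wedge d(\star d\gf_2).
\]
The second bulk integral vanishes since $d^\star d\gf_2=0$, and the first is manifestly symmetric in $\gf_1,\gf_2$. Using Lemma \ref{lma:(c)} to identify $i^*_{\partial M}(\gf_1\wedge\star d\gf_2)$ with $\gf_1^D\wedge\star_{\partial M}\gf_2^N$ up to sign, the antisymmetrized combination defining $\widetilde\omega_{\partial M}$ in (\ref{eqn:symplectic-structure}) vanishes on $L_{\tilde M}$. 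Isotropy then descends through the quotients to $\mathsf L_{\tilde M}$ and $\mathbf L_{\tilde M}$ by functoriality of symplectic reduction.

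\textbf{Maximality in (\ref{tma2:part3}).} To upgrade isotropy to Lagrangian, I would take $\psi\in L_{\tilde M}^\bot$ and probe it by carefully chosen $\phi\in L_{\tilde M}$: Lemma \ref{lma:Dirichlet} lets me realize arbitrary Dirichlet conditions on $\partial M$ as boundary values of genuine solutions, while Lemma \ref{lma:Neumman} gives arbitrary coclosed Neumann conditions. The vanishing of $\widetilde\omega_{\partial M}(\psi,\phi)$ against these two families, rewritten via the pairing (\ref{eqn:[.,.]}) and integration by parts on $\partial M$, forces $d^{\star_{\partial M}}\psi^N=0$ and exhibits $\psi$ as the $r_M$-image (modulo $\ker\widetilde\omega_{\partial M}=G^0_{\partial M}$) of a gauge-fixed solution in $\Phi_{A_M}$, hence places its class inside $\mathbf L_{\tilde M}$ after passage to $\mathbf L_{M,\partial M}$.

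\textbf{Main obstacle.} The maximality step is the crux: one needs Lemmas \ref{lma:Dirichlet} and \ref{lma:Neumman} to supply a sufficiently rich family of testing boundary data, and then to align the constraints so obtained with the topological admissibility built into the definition (\ref{eqn:def C}) of $L_{M,\partial M}$, using the HMF decomposition (\ref{eqn:HMF}). Ensuring that the Lagrangian property truly descends from the presymplectic $L_{\partial M}$ to the symplectic quotient $\mathbf L_{M,\partial M}$ \emph{as equality} rather than merely isotropy is the heart of the argument.
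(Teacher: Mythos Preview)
Your emphasis is inverted relative to the paper. The paper disposes of part~(\ref{tma2:part3}) by citation (to \cite{CMR1,DM} and forward to Theorem~\ref{tma:J}), and invests its actual work in part~(\ref{inciso:parte2}): it builds an explicit projection $L_M\to\Phi_{A_M}$, $\gf\mapsto\gf_2$, via the HMF decomposition $\gf=\omega+d^\star\alpha+d\beta$ and two auxiliary BVPs, and checks that $r_M(\gf_2)$ agrees with $r_M(\gf)$ modulo $G^0_{\partial M}$. That construction is what justifies the ``observations'' $\mathsf r_M(\mathsf L_M)\simeq r_M(\Phi_{A_M})$ and $r_M(\Phi_{A_M})\simeq L_{\tilde M}\cap L_{M,\partial M}$ that you simply cite; your one-line chain for part~(\ref{inciso:parte2}) is therefore circular, or at best leans on assertions the paper states informally but only substantiates inside this very proof. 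The nontrivial direction---that every class in $(L_{\tilde M}\cap L_{M,\partial M})/G^0_{\partial M}$ is hit by $\mathsf r_M$---genuinely requires solving the BVP for $\gf_1$ with harmonic Dirichlet datum $i^*_{\partial M}\omega$ and vanishing Neumann part.

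Your treatment of part~(\ref{tma2:part3}) is more than the paper offers, and the isotropy-via-Stokes step is fine. But the maximality sketch has a gap: Lemmas~\ref{lma:Dirichlet} and~\ref{lma:Neumman} concern $L_\Sigma$, i.e.\ boundary data of solutions on the \emph{cylinder} $\Sigma_\geps$, and produce elements of $L_{\partial M}$, not of $L_{\tilde M}=r_M(L_M)$. To probe $\psi\in L_{\tilde M}^\bot$ you need test elements $\phi\in L_{\tilde M}$, i.e.\ boundary values of solutions on $M$ itself. For Dirichlet data the solvability of the BVP~(\ref{eqn:BVP2}) on $M$ would serve; for Neumann data there is no analogous statement at hand, and indeed the paper's eventual argument (Theorem~\ref{tma:J}\ref{th-part:5}) goes through the D--N operator and the graph description~\ref{th-part:3} rather than a direct testing argument. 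So your maximality route, as written, does not close.
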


Recall that $L_{\partial M}$ is just presymplectic (coisotropic). We consider the definition of Lagrangian subspaces as subspaces of coisotropic spaces, see \cite{CMR}.

\begin{proof}[Proof of Theorem \ref{tma:2}]

Part \ref{tma2:part1} has already been shown. Part \ref{tma2:part3} is proved independently in \cite{CMR1} and \cite{DM}, see also Theorem \ref{tma:J} below. We prove part \ref{inciso:parte2}.

Since $\Phi_{A_M}=L_{\tilde{M}}\cap {L}_{M,\partial M},$ then the following diagram commutes
\begin{equation}\label{eqn:diag11}
\xymatrix{
	\mathsf{L}_{\tilde{M}}\cap \mathbf{L}_{M,\partial M}
		\ar@{^{(}->}[r]
			&
	\mathsf{L}_{\tilde{M}}
			\\
	\Phi_{A_M}
		\ar@{<->}[r]	\ar@{->>}[u]^{\cdot/G_{\partial M}^0}
			&	
	\mathsf{L}_M
		\ar@{->>}[u]
}.\end{equation}
Take $\phi\in L_{\partial M}$ and suppose that $\gf\in L_M$ is a solution with $r_M(\gf)=\phi=(\phi^D,\phi^N)$. Take its HMF decomposition
\[
	\gf=\omega+d^\star\alpha+d\beta
		\in\mathfrak{H}_N^1(M)\oplus d^\star\Omega^2_N(M)\oplus (\mathfrak{H}^1(M)\cap d\Omega^0(M)).
\]
Then $\gf_0:=d^\star\alpha$ solves the BVP
\[\left\{\begin{array}{ll}
	d^\star d\gf_0=0,
		&
	\\
	i^*_{\partial M}\gf_0=i^*_{\partial M}(d^\star\alpha),
		&
	\gf_0^N=\phi^N
\end{array}\right.\]
Notice that $\gf^N=(\omega+d^\star\alpha+d\beta)^N=(d^\star\alpha)^N=\phi^N$. So $\gf_0^N=\phi^N$.
On the other hand we can solve the following BVP for any Dirichlet boundary condition $i^*_{\partial M}(\omega)\in i^*_{\partial M}\mathfrak{H}_N^1(M)$, see \cite{BS},
\[\left\{\begin{array}{ll}
	\Delta\gf_1=0,
		&
	d^\star\gf_1=0
	\\
	i^*_{\partial M}\gf_1=i^*_{\partial M}(\omega),
		&
	\gf_1^N=0
\end{array}\right.\]
in particular $d^\star d\gf_1=0$, hence, $\gf_1\in L_M$. 

If $\gf_2:=\gf_0+\gf_1$ then $\gf_2\in \Phi_{A_M}$. Moreover for the Dirichlet condition
\[
	\gf_2^D=\gf_0^D+\gf_1^D=
	i^*_{\partial M}(d^\star\alpha)+i^*_{\partial M}(\omega)=\gf^D-di^*_{\partial M}\beta.
\]
Notice that modulo gauge, $\gf^D\sim\gf^D_2$. Meanwhile for the Neumann condition we have
\[
	\gf_2^N=\left(\gf_0+\gf_1\right)^N=\gf_0^N+\gf_1^N=\gf^N.
\]
Hence the linear map $\gf\mapsto \gf_2$ defines a projection $L_{{M}}\rightarrow \Phi_{A_M}$. The following diagram is commutative for the projections
\[\xymatrix{
	L_M
	\ar@{-->>}[r]		\ar@{->>}[d]
	&
	\Phi_{A_M}
	\ar@{=}[d]		\ar@{^{(}->}@/_/[l]
	\\
	L_{\tilde{M}}
	\ar@{-->>}[r]
	&
	L_{\tilde{M}}\cap {L}_{M,\partial M}
	\ar@{^{(}->}@/_/[l]
}.\]
Therefore we can design the following commuting diagram complementing (\ref{eqn:diag11}).
\[\xymatrix{
	L_{\tilde{M}}\cap {L}_{M,\partial M}
		\ar@{->>}[r]	
			&
	\mathbf{L}_{\tilde{M}}
		\ar@{^{(}->}[r]
			&
	\mathsf{L}_{\tilde{M}}\cap \mathbf{L}_{M,\partial M}
		\ar@{^{(}->}[r]
			&
			\mathsf{L}_{\tilde{M}}
			\\
			&
	L_{{M}}
		\ar@{-->>}[r]	\ar@{->>}[lu]
			&
	\Phi_{A_M}
		\ar@{->>}[u]^{\cdot/G^0_{\partial M}}
		\ar@{<->}[r]
		&
		\mathsf{L}_M
		\ar@{->>}[u]
}\]
Recall Proposition \ref{lma:(d)}. This proves that  the image, $\mathsf{r}_M(\mathsf{L}_M)\subset \mathsf{L}_{\tilde{M}}\cap\mathbf{L}_{M,\partial M},$ equals the image of the inclusion $\mathbf{L}_{\tilde{M}}\subset \mathsf{L}_{\tilde{M}}\cap\mathbf{L}_{M,\partial M}$, therefore $\mathbf{L}_{\tilde{M}}\simeq \mathsf{r}_M(\mathsf{L}_M)$.			\end{proof}


\begin{lma}\label{lma:extension}
Take $\phi\in\Omega^k(\partial M)$, then there exists an extension, $\widetilde{\phi}\in\Omega^k(M)$, such that
\[
	i^*_{\partial M}\widetilde{\phi}=\phi,\qquad i^*_{\partial M}\left(\star \widetilde{\phi}\right)=0,\qquad
	 i^*_{\partial M}\left(\star d\widetilde{\phi}\right)=0.
\]
in particular $\widetilde{\phi}\in\Omega_N^k(M)$.
\end{lma}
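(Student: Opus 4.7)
The plan is to mimic the construction already used in Lemma \ref{lma:(c)}, but without the factor of $\tau$ that was inserted there to kill the Dirichlet trace. Concretely, I would work in the tubular neighbourhood $\partial M_\geps$ with geodesic normal coordinates $(x,\tau)$ supplied by the exponential map $X$ of (\ref{eqn:X}), set
\[
	\overline{\phi}(x,\tau):= \bigl(X_{\partial M}^{-\tau}\bigr)^*\phi(x)\in\Omega^k(\partial M_\geps),
\]
and then extend to all of $M$ by the cutoff function $\psi_\geps$ of (\ref{eqn:psieps}), defining
\[
	\widetilde{\phi}:= \psi_\geps\cdot \overline{\phi}\in\Omega^k(M).
\]
The key geometric fact I will use, already exploited in Lemma \ref{lma:(c)}, is that in these adapted coordinates the metric satisfies $h^{j,n}=0$ and $h^{nn}=1$, so $\star$ splits cleanly along tangential and normal parts.

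Next I would verify the three required conditions by decomposing every form near $\partial M$ as a sum of a purely tangential piece and a piece containing $d\tau$. Since $\overline{\phi}$ is, by construction, purely tangential and independent of $\tau$ (in the basis $\{dx^{I}\}$), and since $\psi_\geps|_{\partial M}=1$, the identity $i^*_{\partial M}\widetilde{\phi}=\phi$ is immediate. Similarly, $\widetilde{\phi}$ has no $d\tau$-component near $\partial M$, and by the same local computation as in Lemma \ref{lma:(c)} the Hodge dual of a purely tangential form is of the form $d\tau\wedge(\text{tangential})$, so $i^*_{\partial M}(\star\widetilde{\phi})=0$. This gives $\widetilde{\phi}\in\Omega^k_N(M)$.

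The only nontrivial condition is $i^*_{\partial M}(\star d\widetilde{\phi})=0$. Writing $d\widetilde{\phi}=d_{\partial M}(\psi_\geps\overline{\phi})+d\tau\wedge \partial_\tau(\psi_\geps\overline{\phi})$ near $\partial M$, the first summand is purely tangential (so its Hodge dual vanishes under $i^*_{\partial M}$, as before). For the second summand I need $\partial_\tau(\psi_\geps\overline{\phi})|_{\partial M}=0$; this follows because $d\psi_\geps|_{\partial M}=0$ by construction of $\psi_\geps$, which forces $\partial_\tau\psi_\geps|_{\tau=0}=0$, and because $\overline{\phi}$ is parallel-transported along the normal geodesics, so $\partial_\tau\overline{\phi}\equiv 0$ in these coordinates. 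Then $\star(d\tau\wedge\partial_\tau(\psi_\geps\overline{\phi}))=\pm\star_{\partial M}\partial_\tau(\psi_\geps\overline{\phi})$ has vanishing trace on $\partial M$.

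The main obstacle, as in Lemma \ref{lma:(c)}, is purely bookkeeping: verifying that the splitting of $\star$ along $d\tau$ and tangential parts carries the right signs and that the covariant derivative $\partial_\tau\overline{\phi}$ truly vanishes at $\tau=0$ in the chosen basis. Both follow from the geodesic normality of $X$, which makes $h^{j,n}=0$ and $h^{nn}=1$ so that the basis vectors $\partial/\partial x^{j}$ tangent to the $\tau$-slices are parallel-transported along the normal geodesics to first order at $\tau=0$. Once these local computations are in place, the three trace conditions read off directly and the inclusion $\widetilde{\phi}\in\Omega^k_N(M)$ follows from the second of them by definition of $\Omega^k_N(M)$.
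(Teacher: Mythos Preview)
Your proposal is correct and follows essentially the same route as the paper: the extension $\widetilde{\phi}:=\psi_\geps\cdot\overline{\phi}$ with $\overline{\phi}=(X_{\partial M}^{-\tau})^*\phi$ is exactly the paper's construction, and the verification of the three trace conditions proceeds by the same local tangential/normal splitting, using $d\psi_\geps|_{\partial M}=0$ and the $\tau$-independence of the coefficients of $\overline{\phi}$ in the adapted basis. The only difference is presentational: the paper compresses the third step to the line $d\widetilde{\phi}|_{\partial M}=\psi_\geps\cdot d\overline{\phi}|_{\partial M}=d\phi|_{\partial M}$ and then reuses the purely-tangential argument, whereas you spell out the $d_{\partial M}$ vs.\ $d\tau\wedge\partial_\tau$ decomposition explicitly. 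One small caution: your phrase ``parallel-transported along the normal geodesics'' is stronger than what you actually use or need---what matters is only that the coordinate coefficients $\phi_I(x)$ are $\tau$-independent, not covariant constancy---so I would state that directly rather than invoking parallel transport.
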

\begin{proof}
Define $\overline{\phi}$ as in (\ref{eqn:overphi}). This $\overline{\phi}\in\Omega^1(\partial M_\geps)$ can be used to define an extension in $M$ as $\widetilde{\phi}:=\psi_\geps\cdot \overline{\phi}$, where $\psi_\geps$ was defined in (\ref{eqn:psieps}).
Then $i^*_{\partial M}\overline{\phi}=\phi$ and also
\begin{equation}\label{eqn:locexpresion}
	i^*_{\partial M}\left(\star\widetilde{\phi}\right)=(X_{\partial M}^0)^*\left(\star \overline{\phi}(x)\right)=
	\sum_{I}\phi_I(x)i^*_{\partial M}(\star d x^I)=0,
\end{equation}
Finally
\[
	d\widetilde{\phi}\mid_{\partial M}=d\left(\psi_\geps\cdot \overline{\phi}\right)\mid_{\partial M}=
	\psi_\geps\cdot d\left(\overline{\phi}\right)\mid_{\partial M}=
	 d{\phi}\mid_{\partial M}
\]
Hence $i^*_{\partial M}\left(\star d{\widetilde{\phi}}\right)	 =i^*_{\partial M}(\star d{\phi})=0$, since we can obtain local expressions similar to those in (\ref{eqn:locexpresion}) for $i^*_{\partial M}\left(\star\widetilde{\phi}\right)$.	\end{proof}

\section{Gluing}\label{sec:3}

Suppose that a region $M_1$ is obtained from a primitive region $M$ by gluing along $\Sigma,\Sigma'\subset \partial M$. Then $\partial M_1\subset \partial M$. There is a commuting diagram of linear maps
\begin{equation}\label{eqn:gluing1}\xymatrix{
	\mathsf{L}_{M_1}
		\ar[rrr]		\ar[dr]	\ar@{-->}[dd]
		&&&
	\mathsf{L}_{M}
		\ar[dl]			\ar@{-->}[dd]
	\\
	&
	\mathbf{L}_{M_1,\partial M_1}
		\ar@^{{(}->}[dl]	 
	&
	\mathbf{L}_{M,\partial M}
		\ar@^{{(}->}[dr]		\ar@{->>}[l]
	&
	\\
	\mathsf{L}_{\partial M_1}
	&&&
	\mathsf{L}_{\partial M}
	\ar@{^{(}->}[lll]
}.\end{equation}

\begin{tma}\label{cor:tma1}
The reduced space $\mathbf{L}_{M,\partial M}\subset \mathsf{L}_{\partial M}$ has finite codimension.
\end{tma}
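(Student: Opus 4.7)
The plan is to pass to the axial gauge-fixing model $\mathsf{L}_{\partial M}\simeq\Phi_{A_{\partial M}}=[\mathfrak{H}^1(\partial M)\oplus d^{\star_{\partial M}}\Omega^2(\partial M)]^{\oplus 2}$ and to exhibit $\mathsf{L}_{\partial M}/\mathbf{L}_{M,\partial M}$ as a finite-dimensional space controlled by harmonic cohomology. In this model every gauge class is represented by a Dirichlet and a Neumann coclosed $1$-form on $\partial M$; the infinite-dimensional information lies in the coexact summands $d^{\star_{\partial M}}\Omega^2(\partial M)$, while the only obstructing harmonic piece $\mathfrak{H}^1(\partial M)$ is finite-dimensional because $\partial M$ is closed.

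First I would show that the Dirichlet projection of $\mathbf{L}_{M,\partial M}$ onto $\mathfrak{H}^1(\partial M)\oplus d^{\star_{\partial M}}\Omega^2(\partial M)$ is already surjective. For any $\phi^D\in\Omega^1(\partial M)$ the BVP (\ref{eqn:BVP3-YM}) produces $\gf\in L_M$ with $i^*_{\partial M}\gf=\phi^D$ and $d^{\star}\gf=0$; its HMF decomposition reads $\gf=\omega+d\lambda+d^{\star}\alpha$ (the $d\beta'$ term vanishes because $d^{\star}d\beta'=\Delta\beta'=0$ together with $\beta'\in\Omega^0_D(M)$ forces $\beta'\equiv 0$). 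The representative $\gf_0:=\omega+d^{\star}\alpha\in\mathfrak{H}^1_N(M)\oplus d^{\star}\Omega^2_N(M)$ then satisfies $i^*_{\partial M}\gf_0=\phi^D-d(i^*_{\partial M}\lambda)$, agreeing with $\phi^D$ modulo $d\Omega^0(\partial M)=G^0_{\partial M}$, and by Proposition \ref{lma:(d)} its Neumann part is coclosed; hence $r_M(\gf_0)\in L_{M,\partial M}$ realises every Dirichlet gauge class.

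The residual codimension therefore lives entirely in the Neumann slot. Varying $\alpha\in\Omega^2_N(M)$ freely subject to $d^{\star_{\partial M}}(d^{\star}\alpha)^N=0$, and compensating in the Dirichlet slot by $\omega\in\mathfrak{H}^1_N(M)$ modulo gauge, the Neumann values $(d^{\star}\alpha)^N$ sweep out the coexact summand $d^{\star_{\partial M}}\Omega^2(\partial M)$; the remaining cokernel is controlled by the finite-dimensional harmonic pieces $\mathfrak{H}^1(\partial M)$, $\mathfrak{H}^1_N(M)$, $\mathfrak{H}^1_D(M)$ singled out after (\ref{eqn:HMF}). The main obstacle is precisely this third step: cleanly identifying the Neumann cokernel as a finite-dim combination of harmonic spaces. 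This is in effect a Fredholm-type statement for the implicit Dirichlet-to-Neumann operator that Section \ref{sec:J-structure} will construct, and once it is in place the bound
\[
\mathrm{codim}\,\mathbf{L}_{M,\partial M}\;\le\;\dim\mathfrak{H}^1(\partial M)+\dim\mathfrak{H}^1_N(M)+\dim\mathfrak{H}^1_D(M)<\infty
\]
is immediate from standard Hodge theory.
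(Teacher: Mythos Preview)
Your Dirichlet step is correct and in fact stronger than what the paper proves: solving (\ref{eqn:BVP3-YM}) for an arbitrary $\phi^D$, then stripping off the exact-harmonic HMF piece $d\lambda$, does show that every Dirichlet gauge class is hit by some $r_M(\gf_0)$ with $\gf_0\in\mathfrak{H}^1_N(M)\oplus d^\star\Omega^2_N(M)$. The paper does \emph{not} argue this way. Instead of invoking global solvability of a BVP on $M$, it works locally near the boundary via the extension Lemmas~\ref{lma:extension} and~\ref{lma:(b)}: any $\chi\in\Omega^2(\partial M)$ admits $\widetilde{\chi}\in\Omega^2_N(M)$ with $d^{\star_{\partial M}}\chi=i^*_{\partial M}(d^\star\widetilde{\chi})$, hence $d^{\star_{\partial M}}\Omega^2(\partial M)\subset i^*_{\partial M}\bigl(d^\star\Omega^2_N(M)\bigr)$; together with $i^*_{\partial M}\mathfrak{H}^1_N(M)\subset\mathfrak{H}^1(\partial M)\oplus d\Omega^0(\partial M)$ this gives the Dirichlet codimension directly. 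Your route is conceptually cleaner; the paper's route avoids any elliptic theory on $M$ and is entirely constructive near $\partial M$.

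The Neumann step you label the ``main obstacle'' is a genuine gap, and your proposed fix does not close it. You assert that the values $(d^\star\alpha)^N$ sweep out $d^{\star_{\partial M}}\Omega^2(\partial M)$, but you give no mechanism; since $\omega^N=0$ for $\omega\in\mathfrak{H}^1_N(M)$, the entire Neumann content is $\star_{\partial M}i^*_{\partial M}(\star dd^\star\alpha)$, and nothing in your outline controls this image. Deferring to a ``Fredholm-type statement'' for the D-N operator of Section~\ref{sec:J-structure} is also not safe: Lemma~\ref{lma:previous} and Theorem~\ref{tma:J} establish an isomorphism $\ker\Lambda_{\tilde M}\simeq\ran\Lambda_{\tilde M}$ and the identification $\mathbf{L}_{M,\partial M}\simeq L_J$, which is a different assertion than finite codimension of $\ran\Lambda_{\tilde M}$ in $\ker d^{\star_{\partial M}}$. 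The paper's own proof, as written, concludes after establishing only the Dirichlet codimension and does not give a separate Neumann argument either; so your diagnosis of where the difficulty lies is accurate, but you should not regard the forward reference as a completed argument.
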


Intuitively, the gluing process increases the topological manifestation of the homology of the interior from the point of view of the boundary. This can be formalized as an inequality that shows a monotone decreasing of the codimension mentioned in Theorem \ref{cor:tma1} as a sequence of gluings is applied. Namely we have:
\begin{equation}\label{eqn:gluingcodimension}
\mathrm{codim \,} \mathbf{L}_{M_1,\partial M_1}\leq\mathrm{codim \,} \mathbf{L}_{M,\partial M}
\end{equation}

The following Lemma will be crucial for the proof of Theorem \ref{cor:tma1}.

\begin{lma}\label{lma:(b)}
Suppose that $\tilde{\chi}\in\Omega^1(M)$ satisfies the conclusion of Lemma \ref{lma:extension}, namely
\[
 	i^*_{\partial M}\left(\star \widetilde{\chi}\right)=0,\qquad
	 i^*_{\partial M}\left(\star d\widetilde{\chi}\right)=0.
\]
Then
 \[
 	d^{\star_{\partial M}}\left(i^*_{\partial M}\widetilde{\chi}\right)=
	 i^*_{\partial M}\left(d^\star\widetilde{\chi}\right)=
 	i^*_{\partial M}\left(d^\star\widetilde{\chi}\mid_{\partial M}\right).
\]
\end{lma}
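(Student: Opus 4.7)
The plan is to prove the identity by a direct local computation in geodesic normal coordinates $(x^1,\dots,x^{n-1},\tau)$ adapted to $\partial M$, in which the metric takes the block form $g=d\tau^2+h_{ij}(x,\tau)dx^idx^j$, so that $\sqrt{|g|}=\sqrt{|h|}$, $g^{in}=0$, $g^{nn}=1$, and $h_{ij}(x,0)=\overline h_{ij}(x)$ is the induced metric on $\partial M$. Writing $\widetilde\chi=\chi_n(x,\tau)\,d\tau+\sum_{j=1}^{n-1}\chi_j(x,\tau)dx^j$, I would first observe that the extension $\widetilde\chi=\psi_\geps\cdot\overline\chi$ from Lemma \ref{lma:extension}, built from the pullback $\overline\chi=(X_{\partial M}^{-\tau})^*\chi$ along the normal geodesic flow, has no $d\tau$-component, so $\chi_n\equiv 0$ throughout the tubular neighborhood.

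Next I would apply the pointwise divergence formula
\[
d^\star\widetilde\chi \;=\; -\frac{1}{\sqrt{|g|}}\,\partial_\mu\bigl(\sqrt{|g|}\,g^{\mu\nu}\chi_\nu\bigr)
\]
and use the block-diagonal structure of $g$ to split it into a tangential sum over $i,j<n$ plus the term $-(\sqrt{|g|})^{-1}\partial_\tau(\sqrt{|g|}\chi_n)$. The latter vanishes identically since $\chi_n\equiv 0$ on the collar. Restricting the former to $\tau=0$ and using $\sqrt{|g|}|_{\partial M}=\sqrt{|\overline h|}$ together with $h^{ij}|_{\partial M}=\overline h^{ij}$ yields precisely
\[
-\frac{1}{\sqrt{|\overline h|}}\,\partial_i\bigl(\sqrt{|\overline h|}\,\overline h^{ij}\chi_j(x,0)\bigr) \;=\; d^{\star_{\partial M}}\bigl(i^*_{\partial M}\widetilde\chi\bigr),
\]
which establishes the first equality. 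The second equality is tautological, since pulling back a zero-form along $i_{\partial M}$ is just restriction of the corresponding function.

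The main obstacle is justifying the vanishing of the normal-derivative contribution $\partial_\tau(\sqrt{|g|}\chi_n)|_{\partial M}$. Read locally, the two listed boundary hypotheses only yield $\chi_n|_{\partial M}=0$ (from $i^*_{\partial M}(\star\widetilde\chi)=0$, using that $\star d\tau$ restricts to the boundary volume form on $\partial M$) and $\partial_\tau\chi_i|_{\partial M}=0$ for $i<n$ (from $i^*_{\partial M}(\star d\widetilde\chi)=0$, via the identity $i^*_{\partial M}(\star d\tau\wedge dx^j)=\star_{\partial M}dx^j$ of Lemma \ref{lma:(c)}); they leave $\partial_\tau\chi_n|_{\partial M}$ unconstrained. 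The proof therefore has to exploit the explicit construction of $\widetilde\chi$ in Lemma \ref{lma:extension}, in which $\chi_n$ vanishes identically on the collar because $\overline\chi$ is obtained by parallel transport along the normal geodesics and so carries no $d\tau$-component.
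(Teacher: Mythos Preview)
Your argument is correct and follows the same local-coordinate strategy as the paper, which uses the Christoffel-symbol expression $d^\star\widetilde\chi=-h^{kl}\bigl(\partial_l\chi_k-\Gamma^r_{lk}\chi_r\bigr)$ in place of your $\sqrt{|g|}$-divergence formula. You are in fact more explicit than the paper on the one delicate point: the paper's proof silently drops the normal contribution $-\partial_\tau\chi_\tau$ without comment, whereas you correctly observe that the two listed boundary hypotheses alone do not force $\partial_\tau\chi_n|_{\partial M}=0$ and that one must invoke the specific extension of Lemma~\ref{lma:extension} (for which $\chi_n\equiv 0$ on the collar) to conclude.
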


\begin{proof}

In local expressions, $\widetilde{\chi}\mid_{\partial M}=\sum_{j=1}^{n-1}\chi_j(x)dx^j$, and both hypothesis read as ${\chi}_{\tau}(x,0)=0$ and  $\partial_\tau {\chi}_j(x,0)=0$, with $x\in\partial M$, for $j=1,\dots,n-1$, respectively. Here $x^n=\tau$ is the normal coordinate. In fact they come from $i^*_{\partial M}\left(\star d\widetilde{\chi}\right)=0$ with
\[
	i^*_{\partial M}\left(\star d\widetilde{\chi}\right)=
	\sum_{j=1}^{n-1}\partial_{\tau}\widetilde{\chi}_j(x,0)\cdot i^*_{\partial M}\left(\star d\tau\wedge dx^j\right).
\]
Hence by  local calculations, see for instance \cite{Jost-RG},
\[
	d^{\star}\widetilde{\chi}=-\sum_{k,l=1}^{n}\left({h^{kl}}\frac{\partial \chi_k}{\partial x^l}-\sum_{r=1}^n\Gamma^r_{lk}\cdot \chi_r\right)
	=
	\sum_{k,l=1}^{n-1}
	\left({\overline{h}^{kl}}\frac{\partial \chi_k}{\partial x^l}-
	\sum_{r=1}^{n-1}\overline{\Gamma}^r_{lk}\cdot \chi_r\right)=
	\]
	\[
	=d^{\star_{\partial M}}{\chi},\qquad \chi:=i^*_{\partial M}\widetilde{\chi}.
\]
where $\Gamma^r_{lk}$ denote the Christoffell symbols of the Riemannian metric on $M$, while $\overline{\Gamma}^r_{lk}$ denote those of the induced Riemannian metric $\overline{h}$ on $\partial M$.

\end{proof}



\begin{proof}[Proof of Theorem \ref{cor:tma1}]
First recall that $G_{\partial M}^0$ acts trivially in the Neumann boundary conditions, hence
\[
	{L}_{M,\partial M}^N=
	{L}_{M,\partial M}^N/G_{\partial M}^0\simeq
	\mathbf{L}_{M,\partial M}^N\subset\ker d^{\star_{\partial M}}
\]
where
\[
	{L}_{M,\partial M}^N:=\left\{0\oplus\gf^N:\, \gf\in\mathfrak{H}_N^1(M)\oplus d^\star\Omega_N^2(M) \right\}.
\]
For Dirichlet boundary conditions, define
\[
	{L}_{M,\partial M}^D:=
		i^*_{\partial M}\left(\mathfrak{H}_N^1(M)\oplus d^\star\Omega_N^2(M) \right)
\]
take $\gf=\kappa+ d^{\star}\alpha\in\mathfrak{H}_N^1(M)\oplus d^\star\Omega_N^2(M)$. According to \cite{BS} 
\[
	i^*_{\partial M}\kappa\in
		i^*_{\partial M}\mathfrak{H}_N^1( M)
		\subset
		i^*_{\partial M}\mathfrak{H}^1( M)
		\subset \mathfrak{H}^1(\partial M)\oplus d\Omega^0(\partial M)
\]
therefore, the inclusion
\[
	\left(i^*_{\partial M}\mathfrak{H}_N^1( M) \right)/d\Omega^0(\partial M)
	\subset
	\mathfrak{H}^1(\partial M) 
\]
has finite dimension, in the finite dimensional space $\mathfrak{H}^1(\partial M)$.

On the other hand, for every $\chi\in\Omega^2(\partial M)$, there exists an extension $\widetilde{\chi}$, such that
\[
	 i^*_{\partial M}\left(\star \widetilde{\chi}\right)=0=i^*_{\partial M}\left(\star d\widetilde{\chi}\right)=0,\,
	 i^*_{\partial M}\widetilde{\chi}=\chi,
\]
and
\[
	d^{\star_{\partial M}}\chi=i^*_{\partial M}\left(d^\star\widetilde{ \chi}\right)
	=	d^\star\widetilde{\chi}\mid_{\partial M}
\]
see Lemmas \ref{lma:extension} and \ref{lma:(b)}. Hence $ d^{\star_{\partial M}}\Omega^2(\partial M)\subset i^*_{\partial M}\left(d^\star\Omega_N^2(M)\right)$.
Therefore when we consider the action of $G_{\partial M}^0$ on $i^*_{\partial M}\left(d^\star\Omega_N^2(M)\right)$, we have
\[
{L}_{M,\partial M}^D/G_{\partial M}^0 \subset 
\mathfrak{H}^1(\partial M)\oplus d^{\star_{\partial M}}\Omega^2(\partial M)
=\mathsf{L}_{\partial M}^D
\]
has finite codimension.		\end{proof}



\section{Complex structure for boundary conditions}\label{sec:J-structure}


We claim that the Dirichlet to Neumann operator yields a complex structure for the space of boundary conditions. We consider a space-time region $M$ that is a Riemannian smooth manifold with (smooth) boundary $\partial M$.

\subsection{Dirichlet to Neumann operator on $k-$forms}


For $k-$forms several proposals have been explored, see references in \cite{BS}. Recall that  every solution of (\ref{eqn:BVP2}) is also a solution to the Yang-Mills BVP given in (\ref{eqn:BVP3-YM}). Every solution of (\ref{eqn:BVP3-YM}) induces in turn a solution to the following less restrictive Yang-Mills BVP
\begin{equation}\label{eqn:BVP2-YM}
\left\{\begin{array}{ll}
	d^\star d\gf=0, & 
	\\
	i_{\partial M}^*\gf=\phi^D, & i^*_{\partial M}(d^{\star}\gf)=0
\end{array}.\right.
\end{equation}

We define the \emph{Dirichlet to Neumann operator associated to the region} $M$ and to the BVP (\ref{eqn:BVP2-YM}) as
\begin{equation}\label{eqn:Lambda}
	{\Lambda_{\tilde{M}}}\left(\phi^D\right):=(-1)^{kn}\star_{\partial M}i_{\partial M}^*\left(\star d\gf\right).
\end{equation}
Remark that we adopt the convention of Dirichlet to Neumann operator ${\Lambda_{\tilde{M}}}:\Omega^k(\partial M)\rightarrow\Omega^k(\partial M)$, instead of $\Lambda:\Omega^k(\partial M)\rightarrow\Omega^{n-k}(\partial M)$ given in \cite{BS} and references therein. The motivation for this choice is to consider the graph of this operator contained in a tangent space $T\Omega^k(\partial M)$, rather than contained in the cotangent space $T^*\Omega^k(\partial M)$. This is consistent with our Lagrangian approach rather than with a Hamiltonian framework for gauge fields. The D-N operator $\Lambda_{\tilde{M}}$ is a closed, positive definite one, see \cite{BS}.

In particular, if we consider a solution $\gf$ whose boundary condition has no Neumann component, $i^*_{\partial M}\left(\star d\gf\right)=0$, then $\gf^D\in\ker\Lambda_{\tilde{M}}$. Hence by Lemma \ref{lma:(b)},
$
	i_{\partial M}^*(d^{\star}\gf)=
	d^{\star_{\partial M}}\phi^D.
$
The boundary condition, $i_{\partial M}^*(d^{\star}\gf)=0$, implies
\begin{equation}\label{eqn:ker-Lambda}
	\ker\Lambda_{\tilde{M}}\subset		\ker d^{\star_{\partial M}}.
\end{equation}

The proof of the following result follows Lemma 3.2 in \cite{BS}.

\begin{lma}\label{lma:previous}
There exists an isomorphism $j_{\tilde{M}}:\ker \Lambda_{\tilde{M}}\rightarrow\ran\Lambda_{\tilde{M}},$ defined as the composition of linear maps $j_{\tilde{M}}= (j^N)^{-1}\circ \star_{\partial M}\circ j^D$, where:

\[
	j^D:\ker \Lambda_{\tilde{M}}
		\rightarrow
	\begin{array}{l}
	i^*_{\partial M}\left(
				{\mathfrak{H}}^k_N(M)
				\oplus	d^\star\Omega_N^{k+1}(M)\oplus	
		\right.\\\left.
				\oplus
				\left(
					{\mathfrak{H}}^{k}(M)\cap d\Omega^{k-1}(M)
				\right)
			\right)	
			\cap \ker d^{\star_{\partial M}},
	\end{array}
\]
\[j^N:
			\ran\Lambda_{\tilde{M}}
	\rightarrow
		\begin{array}{l}
			i^*_{\partial M}\left(
				\mathfrak{H}^{n-1-k}_D(M)
				\oplus	d\Omega^{n-2-k}_D(M)\oplus
		\right.\\\left.
				\oplus	\left(
					\mathfrak{H}^{n-1-k}(M)\cap d\Omega^{n-2-k}(M)
					\right)
			\right)\cap \ker d^{\partial M}.
		\end{array}
\]
\end{lma}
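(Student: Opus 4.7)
The plan is to construct $j^D$ and $j^N$ as ``identity''-type maps from $\ker\Lambda_{\tilde M}$ and $\ran\Lambda_{\tilde M}$ onto the two subspaces described in the statement, to argue that each is an isomorphism by means of the Hodge--Morrey--Friedrichs decomposition (\ref{eqn:HMF}) on $M$ together with the Hodge decomposition (\ref{eqn:Hodge}) on $\partial M$, and finally to connect the two subspaces through $\star_{\partial M}$. This mirrors the structure of Lemma~3.2 in~\cite{BS}, adjusted to the present convention in which $\Lambda_{\tilde M}$ is degree--preserving because of the extra $\star_{\partial M}$ factor in its definition.

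For $j^D$, I would start from $\phi^D\in\ker\Lambda_{\tilde M}$ and let $\gf$ be the solution of (\ref{eqn:BVP2-YM}) with $\gf^N=0$. Applying HMF, write $\gf=d\alpha_D+\omega+\eta+d^{\star}\beta_N$ with $\alpha_D\in\Omega_D^{k-1}(M)$, $\omega\in\mathfrak{H}_N^k(M)$, $\eta\in\mathfrak{H}^k(M)\cap d\Omega^{k-1}(M)$ and $\beta_N\in\Omega_N^{k+1}(M)$. Since $i^*_{\partial M}(d\alpha_D)=0$, the Dirichlet datum equals $\phi^D=i^*_{\partial M}(\omega+\eta+d^{\star}\beta_N)$, so it lies in the image described in the codomain, and by (\ref{eqn:ker-Lambda}) also in $\ker d^{\star_{\partial M}}$. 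Conversely, given $\psi$ in the codomain, one lifts each HMF summand to a form $\gf'\in\mathfrak{H}_N^k(M)\oplus d^{\star}\Omega_N^{k+1}(M)\oplus(\mathfrak{H}^k(M)\cap d\Omega^{k-1}(M))$ satisfying $i^*_{\partial M}\gf'=\psi$; by construction $d^{\star}\gf'=0$, so the Neumann component vanishes, and the compatibility $d^{\star_{\partial M}}\psi=0$ together with Lemma~\ref{lma:(b)} matches the boundary constraint $i^*_{\partial M}(d^{\star}\gf')=0$ of (\ref{eqn:BVP2-YM}). Hence $\psi\in\ker\Lambda_{\tilde M}$ and $j^D$ is bijective onto the prescribed subspace.

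For $j^N$, the same argument is applied to $\star\gf$ instead of $\gf$. Hodge duality on $M$ exchanges $\Omega_N^k(M)\leftrightarrow\Omega_D^{n-k}(M)$ as well as the harmonic subspaces $\mathfrak{H}_N^k(M)\leftrightarrow\mathfrak{H}_D^{n-k}(M)$, producing the Hodge--dual HMF description in degree $n-1-k$ on $\partial M$ that appears in the codomain of $j^N$. On $\partial M$ itself, $\star_{\partial M}$ intertwines $d^{\star_{\partial M}}$ and $\pm d$, so it carries the intersection with $\ker d^{\star_{\partial M}}$ in degree $k$ bijectively onto the intersection with $\ker d$ in degree $n-1-k$. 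The composition $j_{\tilde M}=(j^N)^{-1}\circ\star_{\partial M}\circ j^D$ is then a composition of three isomorphisms.

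The main obstacle is the surjectivity of $j^D$: producing, for every candidate $\psi$ in the codomain, an actual solution of~(\ref{eqn:BVP2-YM}) whose Dirichlet datum is $\psi$ and whose Neumann component vanishes. This relies on the well--posedness (existence and uniqueness modulo $\mathfrak{H}_D^k(M)$) of the BVP~(\ref{eqn:BVP2}) from \cite{Sc,BS} and on the careful matching of the constraint $d^{\star_{\partial M}}\psi=0$ embedded in the codomain with the boundary condition $i^*_{\partial M}(d^{\star}\gf)=0$ through Lemma~\ref{lma:(b)}. The corresponding analysis for $j^N$ is then formally identical after Hodge duality on $M$ and on $\partial M$.
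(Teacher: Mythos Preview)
Your surjectivity argument for $j^D$ contains a genuine gap. You lift $\psi$ in the codomain to $\gf'\in\mathfrak{H}_N^k(M)\oplus d^{\star}\Omega_N^{k+1}(M)\oplus(\mathfrak{H}^k(M)\cap d\Omega^{k-1}(M))$ and then assert ``by construction $d^{\star}\gf'=0$, so the Neumann component vanishes''. But $d^{\star}\gf'=0$ is coclosedness, whereas the Neumann datum in this paper is $\gf'^N=(-1)^{kn}\star_{\partial M}i^*_{\partial M}(\star d\gf')$, which depends on $d\gf'$, not on $d^{\star}\gf'$. For the summand $d^{\star}\Omega_N^{k+1}(M)$ a generic lift $\gf'=d^{\star}\beta$ has $d\gf'=dd^{\star}\beta\neq 0$, so neither $\gf'^N=0$ nor the Yang--Mills equation $d^{\star}d\gf'=0$ is granted. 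Thus your lift is not, in general, a solution of (\ref{eqn:BVP2-YM}) with vanishing Neumann component, and $\psi\in\ker\Lambda_{\tilde M}$ does not follow. The treatment of $j^N$ inherits the same problem: ``apply the same argument to $\star\gf$'' does not identify $\ran\Lambda_{\tilde M}$, because an element of the range is the Neumann datum $\star_{\partial M}i^*_{\partial M}(\star d\gf)$ of some solution, not the restriction $i^*_{\partial M}(\star\gf)$.

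The paper avoids this difficulty by not proving that $j^D$ and $j^N$ are isomorphisms separately. Instead, starting from $\gf^D\in\ker\Lambda_{\tilde M}$, it builds an explicit closed extension $\lambda=\widetilde{\star_{\partial M}\gf^D}$ on $M$ (using Lemma~\ref{lma:extension}), decomposes $\star\lambda=d\mu+\gamma$ with $\gamma\in\mathfrak{H}^{k+1}_N(M)\oplus d^{\star}\Omega_N^{k}(M)$ and $d\mu$ a harmonic field, and then checks directly that $\mu$ solves (\ref{eqn:BVP2-YM}) with Neumann datum $(-1)^{nk}\star_{\partial M}i^*_{\partial M}(\star d\mu)=(-1)^{(n-1)(k+1)}\gf^D$. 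This computation simultaneously exhibits the element of $\ran\Lambda_{\tilde M}$ and shows that the composite $(j^N)^{-1}\circ\star_{\partial M}\circ j^D$ is, up to sign, the identity; no independent surjectivity step for $j^D$ is needed. If you want to repair your approach, you must produce, for each $\psi$ in the codomain, an actual Yang--Mills solution with $\gf^N=0$ and $i^*_{\partial M}\gf=\psi$; the natural way to do this is exactly the extension--and--decompose construction the paper uses.
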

In fact since the $G^0_{\partial M}-$action acts trivially in $\ker d^{\star_{\partial M}}\subset L^D_{M,\partial M}$, we have the inclusion 
\[
	j_D:\ker \Lambda_{\tilde{M}}\rightarrow L^D_{M,\partial M}\cap \ker d^{\star_{\partial M}}\simeq\mathsf{L}^D_{M,\partial M}
\]
where $\mathsf{L}^D_{M,\partial M}:= {L}_{M,\partial M}^D/G^0_{\partial M}.$

\begin{proof}[Proof of Lemma \ref{lma:previous}]

Define the map $j^D(\gf):=\gf^D=i^*_{\partial M}\gf$, where $\gf$ is the solution  with $\gf^N=0$ of the BVP (\ref{eqn:BVP2-YM}). According to the HFM decomposition, we have $	\gf=\psi+\rho,$ where
\[
		\psi\in \mathfrak{H}^k_D(M)\oplus d\Omega^{k-1}_D(M),
\]
\[
		\rho\in \mathfrak{H}^k_N(M)\oplus 
		d^\star\Omega_N^{k+1}(M)\oplus\left(\mathfrak{H}^k(M)\cap d\Omega^{k-1}(M)\right).
\]
Notice that $\rho^D=\gf^D$. Consider the following BVP
\begin{equation}\label{eqn:BVP-lambda}
	\left\{\begin{array}{ll}
		d\lambda=0,\\
		i^*_{\partial M}\lambda=\star_{\partial M}\gf^D
	\end{array}\right.
\end{equation}
we claim that if $d^{\star_{\partial M}}\gf^D=0$, or equivalently $d(\star_{\partial M}\gf^D)=0$, then there exists a solution 
\[
	\lambda\in
		\mathfrak{H}^{n-1-k}_D(M)\oplus d\Omega^{n-2-k}_D(M)\oplus
		\left(\mathfrak{H}^{n-1-k}(M)\cap d\Omega^{n-2-k}(M)\right)
\]
to (\ref{eqn:BVP-lambda}). For this, define $\lambda=\widetilde{\star_{\partial M}\gf^D}$.
Hence $	\star\lambda=d\mu+\gamma$, where
\[
	\gamma \in \mathfrak{H}^{k+1}_N(M)\oplus d^\star\Omega_N^k(M),\qquad
	d\mu\in\left(\mathfrak{H}^{k+1}(M)\cap d\Omega^{k}(M)\right).
\]
Notice that $d\mu$ is harmonic in $M$, so $d^\star d\mu=0$. We claim that
\[
	(j^N)^{-1}\left(i^*_{\partial M}(\lambda)\right)=(-1)^{nk}\star_{\partial M}i^*_{\partial M}\left(\star d\mu\right)\in \ran{\Lambda_{\tilde{M}}}.
\]
The equality $\star\star\lambda=\star d\mu+\star\gamma$ implies that
\[
	(-1)^{(n-1-k)(1+k)}i_{\partial M}^*\lambda=i^*_{\partial M}\left(\star\star \lambda\right)=
	i^*_{\partial M}\left(\star d\mu+\star\gamma\right)=
\]
\[
	=i^*_{\partial M}\left(\star d\mu\right)
\]
since $\star\gamma\in \Omega_D(M)$. Hence
\[
	\star_{\partial M}i^*_{\partial M}\left(\star d\mu\right)=\]
\[
	(-1)^{(n-1-k)(1+k)}\star_{\partial M}i_{\partial M}^*\lambda
	=(-1)^{(n-1-k)(1+k)}\star_{\partial M}\star_{\partial M}\gf^D
\]
\[
	=(-1)^{(n-1-k)(1+k)}\cdot(-1)^{k(n-1-k)}\cdot\gf^D=(-1)^{n-1-k}\gf^D
\]
Thus
$(-1)^{nk}\star_{\partial M}i^*_{\partial M}\left(\star d\mu\right)=(-1)^{nk}\cdot(-1)^{n-1-k}\cdot\gf^D$.
Therefore
$$
	(j^{N})^{-1}\left(i^*_{\partial M}(\lambda)\right)=(-1)^{(n-1)(k+1)}\gf^D
$$
or
\[
	(-1)^{(n-1-k)(1+k)}\cdot\left[(j^N)^{-1}\circ\left(\star_{\partial M} j^D\right)\right]\left(\gf^D\right)
	=(-1)^{(n-1)(k+1)}\phi^D
\]
\begin{equation}\label{eqn:jN^-1}
	\left[(j^N)^{-1}\circ\left(\star_{\partial M} j^D\right)\right]\left(\gf^D\right)
	=  	(-1)^{k(1+k)}\cdot \gf^D.
\end{equation}
By gauge choice we can consider a solution $\mu$ such that $d^\star d\mu=0$, but also $i^*_{\partial M}\left(d^*\mu\right)=0$. This solves (\ref{eqn:BVP2-YM}).
\end{proof}

Notice that for $1-$forms, $k=1$, we have $j_{\tilde{M}}(\gf^D)=\gf^D$.

\subsection{Tame complex structure}


By (\ref{eqn:ker-Lambda}) there exists an inclusion
\begin{equation}\label{eqn:ran-Lambda}
	\star_{\partial M}\circ j^N:\ran\Lambda_{\tilde{M}}\rightarrow \ker d^{\star_{\partial M}}.
\end{equation}
Furthermore we can define the operator $J$ on $\ker\Lambda_{\tilde{M}}\oplus\ran\Lambda_{\tilde{M}}$, as
\begin{equation}\label{eqn:J}
	\left(\begin{array}{cc}
		0&- \left(j_{\tilde{M}}\right)^{-1} \\
		 j_{\tilde{M}}& 0
	\end{array}\right)
	:\ker{\Lambda_{\tilde{M}}}\oplus\ran{\Lambda_{\tilde{M}}} 
		\rightarrow \ker{\Lambda_{\tilde{M}}}\oplus \ran{\Lambda_{\tilde{M}}}.
\end{equation}
This is a complex structure which is tame with respect to the symplectic structure $\omega_{\partial M}\mid_{L_J}$. There is a linear inclusion
\[
	L_J:=L_J^D\oplus L_J^N\subset \Phi_{A_{\partial M}}\cap L_{M,\partial M} \simeq \mathbf{L}_{M,\partial M}
\]
where
\[
	L_J^D:=j^D\left(\ker\Lambda_{\tilde{M}}\right)\subset L^D_{M,\partial M},\quad 
	L_J^N:=\star_{\partial M}\circ j^N\left(\ran\Lambda_{\tilde{M}}\right)\subset L^N_{M,\partial M}.
\]
The taming condition is
\[
	g_{\partial M}\mid_{L_J}(\cdot,\cdot)=2\,\omega_{\partial M}\mid_{L_J}(\cdot,J\cdot),
\]
the bilinear form $g_{\partial M}$ can be explicitly calculated as
\[
	g_{\partial M}\left({\phi_1},{\phi_2}\right)=
		\int_{\partial M}\phi_1^D\wedge\star_{\partial M}\phi_2^D+ \phi_1^N\wedge\star_{\partial M}\phi_2^N
		\qquad \forall {\phi_i}\in {L}_{\partial M},\, i=1,2.
\]
that is positive definite in $\Phi_{\partial M}\subset L_{\partial M}$.
This allows us to prove the following result.

\begin{tma}\label{tma:J}
The following are true:
\begin{enumerate}[label=(\roman*)]

\item\label{th-part:2} There exists an isomorphism
\[
	\mathbf{L}_{M,\partial M}\simeq L_J\subset\mathsf{L}_{\partial M}.
\]

\item\label{th-part:1} The operator $J$ satisfies $J^2=-Id,$  hence $J$ is a complex structure $J:\mathbf{L}_{M,\partial M}\rightarrow \mathbf{L}_{M,\partial M}$.

\item $\mathbf{L}_{\partial M}$ is isomorphic to a symplectic subspace of the linear spaces $\mathsf{L}_{\partial M}$.

\item\label{th-part:3} The inclusion $\mathbf{L}_{\tilde{M}}\subset \mathbf{L}_{M,\partial M}$ is a graph (of a linear isomorphism).

\item\label{th-part:4} $\mathbf{L}_{M,\partial M}$ decomposes as a direct sum $\mathbf{L}_{\tilde{ M}}\oplus J\mathbf{L}_{\tilde{ M}}$.

\item\label{th-part:5}  $\mathsf{r}_M\left(\mathsf{L}_M\right)=\mathbf{L}_{\tilde{ M}}\subset\mathbf{L}_{M,\partial M}$ is a Lagrangian subspace.

\end{enumerate}
\end{tma}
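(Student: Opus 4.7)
The plan is to establish the six items in order. Part (i) is the central identification, after which (ii)-(iii) are formal, and (iv)-(vi) follow from realising $\mathbf{L}_{\tilde{M}}$ as the graph of $\Lambda_{\tilde{M}}$ and invoking the self-adjointness of the Dirichlet to Neumann operator.

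For \ref{th-part:2}, I would choose an axial gauge representative of each class in $\mathbf{L}_{M,\partial M}$: by Proposition \ref{lma:(d)} and the gauge analysis in the proof of Theorem \ref{cor:tma1}, this representative is a pair $(\phi^D,\phi^N)$ with both components coclosed on $\partial M$. Lemma \ref{lma:previous} furnishes the maps $j^D$ and $\star_{\partial M}\circ j^N$ that identify $\ker\Lambda_{\tilde{M}}$ with $L_J^D$ and $\ran\Lambda_{\tilde{M}}$ with $L_J^N$, realising the decomposition $\mathbf{L}_{M,\partial M}\simeq L_J^D\oplus L_J^N=L_J$. Item \ref{th-part:1} is a short calculation: the block-antidiagonal operator (\ref{eqn:J}) squared equals
\[
\begin{pmatrix} -j_{\tilde{M}}^{-1}\circ j_{\tilde{M}} & 0 \\ 0 & -j_{\tilde{M}}\circ j_{\tilde{M}}^{-1}\end{pmatrix}=-\mathrm{Id}.
\]
Item (iii) follows from the corollary to Proposition \ref{lma:C}, which identifies $L_{M,\partial M}^\bot$ with $G_{\partial M}^0\cap L_{M,\partial M}$; hence the reduction $\mathbf{L}_{M,\partial M}$ inherits a non-degenerate symplectic form $\omega_{\partial M}$ and embeds as a symplectic subspace of $\mathsf{L}_{\partial M}$.

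For \ref{th-part:3}, any class in $\mathbf{L}_{\tilde{M}}$ has an axial representative $\gf\in\Phi_{A_M}$; because every $\gf\in\Phi_{A_M}$ solves the Yang-Mills BVP (\ref{eqn:BVP2-YM}), the definition (\ref{eqn:Lambda}) of $\Lambda_{\tilde{M}}$ yields $\gf^N=\Lambda_{\tilde{M}}(\gf^D)$. Transported through the identifications of \ref{th-part:2}, $\mathbf{L}_{\tilde{M}}$ is exhibited as the graph of $\Lambda_{\tilde{M}}$ inside $L_J^D\oplus L_J^N$, and, composed with $j^D$ and $\star_{\partial M}\circ j^N$, this realises a linear isomorphism $L_J^D\to L_J^N$. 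Item \ref{th-part:4} then follows because the block-antidiagonal form of $J$ swaps $L_J^D$ and $L_J^N$ (up to $j_{\tilde{M}}$ and a sign), so $J\mathbf{L}_{\tilde{M}}$ is the graph of the inverse-type map; the two graphs meet only at $0$ (apply the first projection and use invertibility of $j_{\tilde{M}}$) and together exhaust $L_J^D\oplus L_J^N=\mathbf{L}_{M,\partial M}$.

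For \ref{th-part:5}, isotropy of $\mathbf{L}_{\tilde{M}}$ reduces to the symmetry of $\Lambda_{\tilde{M}}$: for $\phi_i=(\phi_i^D,\Lambda_{\tilde{M}}\phi_i^D)$,
\[
\omega_{\partial M}(\phi_1,\phi_2)=\tfrac{1}{2}\int_{\partial M}\!\left(\phi_1^D\wedge\star_{\partial M}\Lambda_{\tilde{M}}\phi_2^D-\phi_2^D\wedge\star_{\partial M}\Lambda_{\tilde{M}}\phi_1^D\right)=0,
\]
by self-adjointness of the D-N operator (a standard consequence of Green's identity, consistent with the positive-definiteness noted after (\ref{eqn:Lambda})). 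Together with the splitting of \ref{th-part:4}, this forces maximality, so $\mathbf{L}_{\tilde{M}}$ is Lagrangian; the equality $\mathsf{r}_M(\mathsf{L}_M)=\mathbf{L}_{\tilde{M}}$ is part \ref{inciso:parte2} of Theorem \ref{tma:2}. The main obstacle, in my estimation, is part \ref{th-part:2}: one must carefully match the HMF decomposition of extensions into $M$ with the Hodge decomposition of coclosed forms on $\partial M$ and verify that Lemma \ref{lma:previous} exhibits \emph{every} axial representative of $\mathbf{L}_{M,\partial M}$, not merely a proper subspace. Once that identification is secure, the remaining items are essentially formal consequences of the graph structure of $\Lambda_{\tilde{M}}$ and its symmetry.
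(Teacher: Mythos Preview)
Your handling of \ref{th-part:1} and (iii) matches the paper's: these are formal. The substantive divergence is in \ref{th-part:2} and \ref{th-part:3}--\ref{th-part:5}.

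For \ref{th-part:2} you correctly flag the difficulty but do not supply the argument; the paper does, by a mechanism you do not mention. Given $\gf=\omega+d^\star\alpha\in\mathfrak{H}^1_N(M)\oplus d^\star\Omega_N^2(M)$, the paper writes $i^*_{\partial M}\omega=\lambda+d\gamma$ using the Hodge decomposition on $\partial M$, then constructs an auxiliary $\omega_1\in\mathfrak{H}^1_N(M)$ solving the harmonic BVP with Dirichlet datum $\lambda$ and vanishing Neumann datum, and finally uses the non-degeneracy of $\omega_{\partial M}$ on the reduced space to conclude that $r_M(\omega-\omega_1)$ vanishes there, whence $i^*_{\partial M}\omega\in\ker\Lambda_{\tilde M}$. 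This auxiliary-solution-plus-non-degeneracy step is precisely what is missing from your sketch.

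For \ref{th-part:3}--\ref{th-part:5} you take a genuinely different route. The paper obtains \ref{th-part:3} directly from uniqueness of solutions of (\ref{eqn:BVP3-YM}) modulo $\mathfrak{H}^1_D(M)$, deduces \ref{th-part:4} formally from \ref{th-part:2} and \ref{th-part:3}, and for \ref{th-part:5} invokes the general Lagrangian results of \cite{CMR,CMR1}; it never uses self-adjointness of $\Lambda_{\tilde M}$. Your graph-of-$\Lambda_{\tilde M}$ picture is appealing, but there is a tension with \ref{th-part:2} as stated: since $j^D$ is the identity on $\ker\Lambda_{\tilde M}$, the identification $L^D_J=j^D(\ker\Lambda_{\tilde M})$ makes $\Lambda_{\tilde M}$ vanish identically on $L^D_J$, so ``the graph of $\Lambda_{\tilde M}$ inside $L^D_J\oplus L^N_J$'' is $L^D_J\times\{0\}$ rather than the graph of a linear isomorphism, and your self-adjointness computation for \ref{th-part:5} becomes vacuous. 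The paper avoids this by not routing \ref{th-part:3}--\ref{th-part:5} through $\Lambda_{\tilde M}$ at all. If you wish to keep the self-adjointness argument, you must formulate the graph over the full Dirichlet space \emph{before} passing through the identification of \ref{th-part:2}.
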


\begin{proof}

Part \ref{th-part:1}  follows from our previous comments. Part \ref{th-part:3} follows form uniqueness of solutions of (\ref{eqn:BVP3-YM}) which in turn follows from uniqueness of solution to the corresponding BVP up to $\lambda\in\mathfrak{H}^1_D(M)$. Part \ref{th-part:4} follows from \ref{th-part:2} and \ref{th-part:3}. Part \ref{th-part:5} follows from \ref{th-part:3}, see \cite{CMR, CMR1}.

Assertion \ref{th-part:2} needs to be proven. Notice that $L_J^D\oplus L_J^N\subset (\ker d^{\star_{\partial M}})^{\oplus 2}$ hence there is an inclusion $L_J\subset \mathbf{L}_{M,\partial M}.$ Furthermore  $L_J$ is a symplectic space because it is a complex space for $J$ tame.

We just need to prove that $ L_J=\mathbf{L}_{M,\partial M}$. In fact we need $	\mathbf{L}_{M,\partial M}^D\subset L_J^D$
or
\begin{equation}\label{eqn:space2}
	\mathbf{L}_{M,\partial M}^D\subset
		i^*_{\partial M}\left( \ker \Lambda_{\tilde{M}}\right).
\end{equation}
Take $\gf=\omega+ d^\star\alpha\in \mathfrak{H}_N^1(M)\oplus d^{\star}\Omega_N(M)$. Recall that that
\[
	i^*_{\partial M}\mathfrak{H}_N^1(M)=\mathfrak{H}^1(\partial M)\oplus d\Omega^0(\partial M)
\]
hence $i^*_{\partial M}\omega =\lambda +d\gamma$. Recall also that there exists $\omega_1\in\Omega^1(M)$ solving the BVP
\[\left\{\begin{array}{ll}
	\Delta\omega_1=0,
	&
	d^\star\omega_1=0
	\\
	i^*_{\partial M}\left(\omega_1\right)=\lambda,
	&
	i^*_{\partial M}\left(\star d\omega_1\right)=0.
\end{array}\right.\]
Notice that $\omega_1$ also solves the Yang-Mills BVP
\[\left\{\begin{array}{ll}
	d^\star d \omega_1=0,
	&
	d^\star\omega_1=0
	\\
	i^*_{\partial M}\left(\omega_1\right)=\lambda,
	&
	i^*_{\partial M}\left(\star d\omega_1\right)=0.
\end{array}\right. \]
Also $i^*_{\partial M}(\omega_1)\in\ker \Lambda_{\tilde{M}}$.

Now we show that $\omega-\omega_1=0.$ For every $\gf_1\in {L}_{M,\partial M}$
\[
	\omega_{\partial M}\left(r_M(\omega-\omega_1),\,r_M(\gf_1)\right)=
	\int_{\partial M}(\omega-\omega_1)^D\wedge\star_{\partial M} \gf_1^N-
	\int_{\partial M}\gf^D\wedge\star_{\partial M}(\omega-\omega_1)^N.
\]
According to the results in \cite{BS}, $\omega_1$ is also harmonic, $\omega_1\in\mathfrak{H}_N^1(M)$. Hence $\omega_1^N=0$,
\[
	\omega_{\partial M}\left(r_M(\omega-\omega_1),\,r_M(\gf_1)\right)=
	\int_{\partial M}(\omega-\omega_1)^D\wedge\star_{\partial M}\gf_1^N=
\]
\[
	=\int_{\partial M}d\gamma\wedge\star_{\partial M}\gf_1^N
		=\int_{\partial M}\gamma\wedge\star_{\partial M}d^{\star_{\partial M}}\gf_1^N=0.
\]
Recall Proposition \ref{lma:(d)} for the last equality. Therefore, by non-degeneracy of the symplectic form $\omega_{\partial M}$, $i^*_{\partial M}(\omega)=i^*_{\partial M}(\omega_1)\in\ker\Lambda_{\tilde{M}}$.

\end{proof}


\subsection{Complex structure for hypersurface solutions}


Recall that for a hypersurface $\Sigma$ we have a cylinder $\Sigma_\geps$ provided with a Riemannian metric, $\partial \Sigma_\geps=\Sigma\sqcup \Sigma'$. We also have a diffeomorphism $X^\geps_{\Sigma}:\Sigma\rightarrow\overline{\Sigma'}$, where $\overline{\Sigma'}$ means reversed orientation with respect to the orientation on $\Sigma'$, this orientation is induced by the orientation in the interior, $\Sigma_\geps$. We prove explicit local existence results. These are rather well known arguments. General local existence results for non-abelian Yang-Mills fields may be found in \cite{MariniDN}, references therein deal with the abelian case.
we consider the \emph{D-N operator associated to the hypersurface} $\Sigma$ defined as:
\begin{equation}\label{eqn:Lambda-Sigma}
	{\Lambda}_{\tilde{\Sigma}_\geps}(\phi^D):=(-1)^{n}\star_{\Sigma}i_{\Sigma}^*\left(d\star_{\Sigma_\geps}\gf\right),
	\qquad
	\phi^D\in \ker d^{\star_\Sigma}\subset \Omega^1(\Sigma)
\end{equation}
where $\gf$ is a solution to (\ref{eqn:BVP-Sigma}), by considering $M=\Sigma_\geps$ and the associated D-N operator.

We have already shown in (\ref{eqn:ker-Lambda}) that $\ker\Lambda_{\tilde{\Sigma}\geps}\subset \ker d^{\star_{\Sigma}}$. We just need to prove that the inclusion is surjective. Consider a solution $\gf$ to the BVP (\ref{eqn:BVP2-YM}), we just need to prove that $i_\Sigma^*(\star d\gf)=0$.

\begin{lma}\label{lma:6}
We have the isomorphisms $
	\ker d^{\star_{\Sigma}}\simeq
		\ker{\Lambda}_{{\tilde{\Sigma}_\geps}}\simeq \ran{\Lambda}_{\tilde{\Sigma}_\geps}.$ 
\end{lma}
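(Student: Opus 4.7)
The statement splits into two isomorphisms, which I would handle separately. The second one, $\ker\Lambda_{\tilde{\Sigma}_\geps}\simeq\ran\Lambda_{\tilde{\Sigma}_\geps}$, is essentially cost-free: the same construction as in Lemma \ref{lma:previous} applies to the cylindrical region $M=\Sigma_\geps$ in degree $k=1$, with $\Sigma$ playing the role of the distinguished boundary component. The composition $j_{\tilde{\Sigma}_\geps}=(j^N)^{-1}\circ\star_\Sigma\circ j^D$ then furnishes the required isomorphism; the HMF decomposition on $\Sigma_\geps$ and the Hodge-star bookkeeping are identical. One should observe, as in (\ref{eqn:jN^-1}) for $k=1$, that $j_{\tilde{\Sigma}_\geps}=\mathrm{id}$ on $\ker\Lambda_{\tilde{\Sigma}_\geps}$, confirming that $\ker\Lambda_{\tilde{\Sigma}_\geps}$ is linearly isomorphic to its image.

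For the first isomorphism $\ker d^{\star_\Sigma}\simeq\ker\Lambda_{\tilde{\Sigma}_\geps}$, the inclusion $\ker\Lambda_{\tilde{\Sigma}_\geps}\subset\ker d^{\star_\Sigma}$ is already contained in (\ref{eqn:ker-Lambda}) (applied to $M=\Sigma_\geps$ and the component $\Sigma$). The remaining task is to produce, for every $\phi^D\in\ker d^{\star_\Sigma}$, a solution $\gf\in L_{\Sigma_\geps}$ of the Yang--Mills BVP (\ref{eqn:BVP2-YM}) satisfying $i_\Sigma^*\gf=\phi^D$ and $i_\Sigma^*(\star d\gf)=0$ on $\Sigma$, which then places $\phi^D$ in $\ker\Lambda_{\tilde{\Sigma}_\geps}$.

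I would proceed in two steps. First, construct the geodesic pullback extension $\widehat{\phi^D}(x,\tau):=(X_\Sigma^{-\tau})^*\phi^D(x)$ as in (\ref{eqn:overphi}). In the orthogonal geodesic coordinates $(x,\tau)$ one has $\widehat{\phi^D}=\sum_j\phi_j(x)\,dx^j$ with no $d\tau$-component and $\tau$-independent coefficients. Hence $d\widehat{\phi^D}$ is purely tangential at $\tau=0$, and by the identities $h^{j,n}=0$, $h^{nn}=1$ used in the local computation of Lemma \ref{lma:(c)}, $\star_{\Sigma_\geps}d\widehat{\phi^D}$ carries a $d\tau$ factor, which vanishes upon $i_\Sigma^*$. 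Thus
\[
i_\Sigma^*\widehat{\phi^D}=\phi^D,\qquad i_\Sigma^*(\star d\widehat{\phi^D})\big|_{\tau=0}=0,
\]
so $\widehat{\phi^D}$ realizes the desired Cauchy data at $\Sigma$, but generally fails to satisfy $d^\star d\widehat{\phi^D}=0$.

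Second, correct $\widehat{\phi^D}$ to a genuine solution. Setting $\gf=\widehat{\phi^D}+\epsilon$, one needs $\epsilon$ to solve $d^\star d\epsilon=-d^\star d\widehat{\phi^D}$ on $\Sigma_\geps$ with homogeneous Cauchy data $i_\Sigma^*\epsilon=0$, $i_\Sigma^*(\star d\epsilon)=0$ at $\Sigma$, and compatible Dirichlet data at $\Sigma'$. Fixing the Coulomb gauge $d^\star\epsilon=0$ reduces this to a Hodge--Laplace BVP; the compatibility of the overdetermined Cauchy data at $\Sigma$ is produced by Lemma \ref{lma:(b)} together with the hypothesis $d^{\star_\Sigma}\phi^D=0$. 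Local solvability on the cylinder is then supplied by the Yang--Mills local existence results referenced after Lemma \ref{lma:6} (see \cite{MariniDN} and the earlier abelian arguments in \cite{Sc,BS}).

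The main obstacle is precisely the second step: prescribing both Dirichlet and Neumann data on the single boundary component $\Sigma$ is generically overdetermined, and the coclosed condition $d^{\star_\Sigma}\phi^D=0$ is exactly the compatibility constraint that rescues solvability. The first step is routine in geodesic normal coordinates, paralleling the computations already carried out in Lemmas \ref{lma:(c)} and \ref{lma:extension}; the linearity and injectivity of the assignment $\phi^D\mapsto\gf$ modulo gauge then upgrade the inclusion to the desired linear isomorphism.
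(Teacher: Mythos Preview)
Your treatment of the second isomorphism via Lemma~\ref{lma:previous} is fine, and so is your first step for the inclusion $\ker d^{\star_\Sigma}\subset\ker\Lambda_{\tilde\Sigma_\geps}$: the geodesic pullback $\widehat{\phi^D}$ does realise the correct Cauchy data at $\tau=0$.

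The gap is in your second step. After gauge-fixing $d^\star\epsilon=0$, the system for the correction $\epsilon$ is the Hodge Laplacian, which is elliptic; prescribing \emph{both} $i_\Sigma^*\epsilon=0$ and $i_\Sigma^*(\star d\epsilon)=0$ on the single face $\Sigma$ is a Cauchy problem for an elliptic operator, and such problems are ill-posed in the sense of Hadamard. The references \cite{Sc,BS} establish solvability of well-posed mixed BVPs (Dirichlet on the whole of $\partial\Sigma_\geps$, or Dirichlet on one face and Neumann-type on the other), not of Cauchy data on a single face. The compatibility relation from Lemma~\ref{lma:(b)} tells you only that $i^*_\Sigma(d^\star\epsilon)=d^{\star_\Sigma}(i^*_\Sigma\epsilon)$, which is automatic once $i^*_\Sigma\epsilon=0$; it does not supply a mechanism that forces the Neumann trace of the corrected solution to vanish. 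If you solve the Laplace BVP with $i^*_\Sigma\epsilon=0$ and some Dirichlet datum on $\Sigma'$, you will generically obtain $i^*_\Sigma(\star d\epsilon)\neq 0$, destroying the property you arranged in step~1.

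The paper's argument circumvents this by never posing an elliptic BVP at all. It treats the construction as a $\tau$-evolution: writing $\gf=\eta^\tau+\gf_\tau\,d\tau$ with $\eta^\tau=i^*_{\Sigma^\tau}\gf$, the conditions $d^\star\gf=0$ and $i^*_{\Sigma^\tau}(\star d\gf)=0$ for \emph{every} slice $\Sigma^\tau$ become a pair of first-order ODEs in $\tau$ (equations (\ref{eqn:ODE1}) and (\ref{eqn:ODE2})), which are integrated from the initial data $\eta^0=\phi^D$. The Yang--Mills equation $d\star d\gf=0$ is then a consequence, not an equation to be solved. In effect the paper propagates your step-1 property from $\tau=0$ to all $\tau$, rather than trying to recover it after an elliptic correction. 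If you want to salvage your outline, the correction step should be replaced by exactly this kind of slice-by-slice evolution.
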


\begin{proof}Take the local expression
\[
	\gf=\sum_{i=1}^{n-1}\gf_idx^i+\gf_\tau d\tau\in\Omega^1(\Sigma_\geps),
\]
hence \cite{Jost-RG}
\[
	d^\star\gf=-\sum_{k,l=1}^n\left[
			h^{kl}\frac{\partial \gf_k}{\partial x^l}-\sum_{j=1}^n\Gamma_{kl}^j\gf_j
		\right]=
\]
\[
	=-h^{\tau\tau}\frac{\partial\gf_\tau}{\partial \tau} -\sum_{l=1}^{n-1}h^{\tau l}\frac{\partial\gf_\tau}{\partial x^l}-
		\sum_{k=1}^{n-1}h^{k\tau}\frac{\partial\gf_k}{\partial \tau}+
\]
\[		+ \sum_{l=1}^{n-1}\sum_{j=1}^n\Gamma_{\tau l}^j\gf_j+\sum_{k=1}^{n-1}\sum_{j=1}^n\Gamma^j_{k\tau}\gf_ j+		\sum_{j=1}^n			\Gamma^j_{\tau\tau}\gf_j
\]
\[
		-\sum_{j,k=1}^{n-1}\left[
			\overline{h}^{kl}\frac{\partial \gf_k}{\partial x^l}-\sum_{j=1}^{n-1}\overline{\Gamma}_{kl}^j\gf_j
		\right]
\]
where we consider Christoffel symbols. Because of the orthogonality condition $g^{nk}(0)=\delta_{n,k}$ the Kronecker delta for $k=1,\dots,n-1,n$. Since $\tau$ is the time parameter for geodesics, hence the Christoffel symbols with $\tau$ index vanish $\Gamma_{\tau\cdot}^\cdot=0=\Gamma_{\cdot\tau}^\cdot,$ see \cite{Michor}. We also have orthonormality along the geodesic so
\[
	h^{\tau\tau}(s,\tau)=1,\, h^{\tau i}(s_0,\tau)=h^{i\tau}(s_0,\tau)=0,\qquad i=1,\dots, n-1.
\]
Therefore, we have a simplified local expression for the divergence $d^\star\gf$,
\[
		d^\star\gf (s_0,\tau)=
			-\frac{\partial\gf_\tau}{\partial \tau} 
			-\sum_{j,k=1}^{n-1}\left[
			\overline{h}^{kl}\frac{\partial \gf_k}{\partial x^l}-\sum_{j=1}^{n-1}\overline{\Gamma}_{kl}^j\gf_j
		\right]
\]
or
\begin{equation}
		d^\star\gf (s_0,\tau)=
		-\frac{\partial\gf_\tau}{\partial \tau} +d^{\star_{\Sigma^{\tau}}}\left(i^*_{\Sigma^{\tau}}\gf\right)
\end{equation}
where $\Sigma^{\tau}:=X^{\tau}_\Sigma(\Sigma)\subset \Sigma_\geps$, $(s,\tau)\in \Sigma\times[0,\geps]$. Here $i^*_{\Sigma^\tau}:=X^{-\tau}_\Sigma:\Sigma^{\tau}\rightarrow\Sigma$. This equation remains valid along the geodesic $\gamma_{s}(\tau)$, for every $s\in\Sigma$. 

The condition $d^\star\gf=0$ can be achieved once we solve the ODE for every fixed initial condition in $s\in\Sigma$
\begin{equation}\label{eqn:ODE1}
\left\{\begin{array}{c}
	\frac{\partial\gf_\tau}{\partial \tau} =d^{\star_{\Sigma^{\tau}}}{\eta^\tau}\\
	\gf_\tau(s,0)=\gf^0(s)
\end{array}\right.
\end{equation}
where $\eta^{\tau}:=i^*_{\Sigma^{\tau}}\gf\in\Omega^1(\Sigma^\tau),$ and $\gf^\tau(s):=\gf_\tau(s,\tau)\in\Omega^0(\Sigma),\,\forall \tau\in[0,\geps].$ 

On the other hand by Cartan's Formula, $\mathcal{L}_{\partial_\tau}\gf= \partial_\tau\llcorner (d\gf)+d\left(\partial_\tau\llcorner \gf\right)$, locally
\[
	\mathcal{L}_{\partial_\tau}\gf=
		\left[ \sum_{j=1}^{n-1}(\partial_\tau\gf_j-\partial_j\gf_\tau)dx^j\right]
	+
		\left[\sum_{j=1}^{n-1}(\partial_j\gf_\tau)dx^j+\partial_\tau\gf_\tau d\tau\right]
		\]
		\[
	= \partial_\tau\llcorner d\gf+
		d{\gf^\tau}+
		\partial_\tau\gf_\tau d\tau
\]
or
\begin{equation}\label{eqn:XXX}
	\mathcal{L}_{\partial_\tau}\gf=	 \partial_\tau\llcorner d\gf+
		d{\gf^\tau}+
		\left(d^{\star_{\Sigma^\tau}}\eta^\tau\right)d\tau
\end{equation}
where we use (\ref{eqn:ODE1}). Recall that
\[
	i^*_{\Sigma^{\tau}}\left(\star d\gf\right)
		=i^*_{\Sigma^{\tau}}\left(\star\sum_{j=1}^{n-1}(\partial_\tau\gf_j )\,d\tau\wedge dx^j\right)+
		i^*_{\Sigma^\tau}\left(\star\sum_{j=1}^{n-1}(\partial_j\gf_\tau) \,dx^j\wedge d\tau\right)
		=\]\[
		= \sum_{j=1}^{n-1}\partial_\tau\gf_j(\star_{\Sigma^\tau} dx^j)-
		 \sum_{j=1}^{n-1}\partial_j\gf_\tau(\star_{\Sigma^\tau} dx^j).
\]
If we apply $\star_{\Sigma^\tau} i^*_{\Sigma^{\tau}} $ to (\ref{eqn:XXX}) then we have the relation
\begin{equation}\label{eqn:XX}
\star_{\Sigma^\tau} i^*_{\Sigma^\tau}\left(\mathcal{L}_{\partial_\tau}\gf\right)
	=		i^*_{\Sigma^\tau}(\star d\gf)
	+		\star_{\Sigma^\tau}(d\gf^\tau).
\end{equation}
Solving the $(n-1)-$dimensional ODE
\begin{equation}\label{eqn:ODE2}
\left\{\begin{array}{c}
	\frac{\partial \eta^{\tau}}{\partial\tau}=d\gf^\tau
	\\
	\eta^\tau(s,0)=\phi^D(s)
\end{array}\right.
\end{equation}
is equivalent to $ i^*_{\Sigma^\tau}\left(\mathcal{L}_{\partial_\tau}\gf\right)=(d\gf^\tau)$. Hence by (\ref{eqn:XX}) we have $i^*_{\Sigma^\tau}\left(\star d\gf\right)=0$.

When we take the differential $d$ in (\ref{eqn:XX}) we have
\[
	i^*_{\Sigma^\tau}(d\star d\gf)=di^*_{\Sigma^\tau}(\star d\gf)=0.
\]
In order to find a solution to the ODE  (\ref{eqn:ODE2}) we just need to prescribe a differentiable $1-$parameter family, $\eta^\tau\in\Omega^1(\Sigma)$. Its velocity should be constant and exact, in fact $\partial_\tau\eta^\tau=df$. Thus if we define $\gf^\tau=\gf^0=f$, then we have suitable initial conditions for solving this ODE. Once we have found $\eta^\tau$ we solve the ODE (\ref{eqn:ODE1}) to obtain $\gf_\tau$.

Recall that the space of exact forms is transverse to the space of $d^{\star_{\Sigma}}-$ coclosed forms, then for small $\tau>0$, the trajectory $\eta^\tau$ is transverse to the space of $d^{\star_{\Sigma^\tau}}-$coclosed forms. Therefore we could choose $\eta^\tau$ in such a way that $d^{\star_{\Sigma^\tau}}\eta^\tau=0$. In this case (\ref{eqn:ODE1}) yields $\gf_\tau$ constant.

Since $i^*_{\Sigma^\tau}(\star d\gf)=0$, then $\partial_\tau\gf_j=\partial_j\gf_\tau$. Thus $\star d\gf=\xi\wedge d\tau$ where $\xi$ has no normal components, $i^*_{\Sigma^\tau}(\star\xi)=0$, and $i^*_{\Sigma^\tau}(\xi)=i^*_{\Sigma^\tau}(\star d\gf)$. Thus as we differentiate $\star d\gf$, we take derivatives $\partial_i$ along $\Sigma^\tau$ and do not use derivatives $\partial_\tau$.  Therefore 
\[
	d\star d\gf=d\xi\wedge d\tau= 
	d\left((X_\Sigma^{\tau}\circ i^*_{\Sigma^\tau})(\star d\gf)\right)\wedge d\tau=0.
\]

Thus $\gf $ is a local solution for Yang-Mills such that $d^\star\gf=0$,  $\gf^N=0$ and $\gf^D=\phi^D$. This proves $	\ker \Lambda_{\tilde{\Sigma}_\geps}\simeq\ker d^{\star_\Sigma}. $
\end{proof}

Because of the uniqueness of the solution of the BVP (\ref{eqn:BVP-Sigma}), up to $\lambda \in\mathfrak{H}_D^1\left(\Sigma_\geps\right)$ the following statement follows.

\begin{tma}
The D-N operator $\Lambda_{\tilde{\Sigma}_\geps}$ does not depend on $\geps>0$. Thus we can define the D-N operator depending exclusively on the hypersurface $\Sigma$, as $\Lambda_{\Sigma}:=\Lambda_{\tilde{\Sigma}_\geps}$.
\end{tma}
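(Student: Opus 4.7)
The strategy is to leverage the quoted uniqueness of the BVP (\ref{eqn:BVP-Sigma}) modulo $\mathfrak{H}_D^1(\Sigma_\geps)$, together with the ODE-based construction of solutions in the proof of Lemma \ref{lma:6}, in order to show that the Neumann trace at $\Sigma$ is fully determined by $\phi^D$ and the Riemannian germ of $\Sigma_\geps$ along $\Sigma$, so that no genuine dependence on $\geps$ can appear.

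First I will observe that the harmonic ambiguity does not affect the boundary data at $\Sigma$. Indeed, any $\lambda\in\mathfrak{H}_D^1(\Sigma_\geps)$ satisfies $d\lambda=0$ by the definition $\mathfrak{H}^k=\ker d\cap\ker d^{\star}$, hence $\star d\lambda=0$ and the Neumann contribution $(-1)^{n}\star_\Sigma i^*_\Sigma(\star d\lambda)$ vanishes, while $i^*_\Sigma\lambda=0$ by the definition of $\Omega^1_D$. Consequently, any two solutions of (\ref{eqn:BVP-Sigma}) produce the same pair $(\phi^D,\gf^N)$ on $\Sigma$, and so $\Lambda_{\tilde{\Sigma}_\geps}(\phi^D)$ is unambiguously attached to the BVP data independent of the particular representative solution chosen.

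Next, for $0<\geps_1<\geps_2$ and $\phi^D\in\Omega^1(\Sigma)$, I will exhibit a single solution germ along $\Sigma$ that extends to BVP-solutions on both $\Sigma_{\geps_1}$ and $\Sigma_{\geps_2}$. For the coclosed part $\phi^D\in\ker d^{\star_\Sigma}$ the solution is produced by the ODEs (\ref{eqn:ODE1})--(\ref{eqn:ODE2}) along the normal geodesic flow emanating from $\Sigma$; these ODEs are driven solely by $\phi^D$ and by the induced geometry of the slices $\Sigma^\tau$, so by uniqueness of ODE flow the solution on $\Sigma_{\geps_2}$ restricts exactly to the solution on $\Sigma_{\geps_1}$. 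Moreover the solution is produced by transporting $\phi^D$ along the normal geodesics, so its trace on $\Sigma'$ is precisely $(X^{-\geps}_\Sigma)^*\phi^D$, matching the far-boundary condition of (\ref{eqn:BVP-Sigma}) for any value of $\geps$. The complementary exact piece $\phi^D=df$ is handled by $\gf=d\tilde f$ with $\tilde f$ a harmonic extension, for which $d\gf=0$ and hence $\gf^N=0$ on $\Sigma$ regardless of $\geps$; the choice of $\tilde f$ on $\Sigma'$ is irrelevant by the first step, since any two such extensions differ by an element of $\mathfrak{H}_D^1(\Sigma_\geps)$.

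Combining these two steps, the $1$-jet of the constructed solution at $\Sigma$, and therefore $\gf^N$ at $\Sigma$, coincides for $\geps_1$ and $\geps_2$, so $\Lambda_{\tilde{\Sigma}_{\geps_1}}(\phi^D)=\Lambda_{\tilde{\Sigma}_{\geps_2}}(\phi^D)$ for every $\phi^D\in\Omega^1(\Sigma)$. The main technical obstacle is ensuring that the ODE construction of Lemma \ref{lma:6}, which a priori yields a solution of $d^\star d\gf=0$ and $d^{\star}\gf=0$ with no built-in control on $\Sigma'$, nevertheless produces the transported trace $(X^{-\geps}_\Sigma)^*\phi^D$ on the far boundary; but this is exactly the property of the normal geodesic transport used to define the extension $\overline{\phi}$ in (\ref{eqn:overphi}), so the match is automatic and the full BVP (\ref{eqn:BVP-Sigma}) is satisfied.
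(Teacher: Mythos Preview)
Your overall strategy is the right one and matches the paper's one-line justification (``because of the uniqueness of the solution of the BVP (\ref{eqn:BVP-Sigma}), up to $\lambda\in\mathfrak{H}_D^1(\Sigma_\geps)$''). Your first step---that any $\lambda\in\mathfrak{H}_D^1(\Sigma_\geps)$ has $d\lambda=0$ and hence contributes nothing to the Neumann trace---is exactly what is needed to make the operator well-defined for a fixed $\geps$. The comparison of two different values of $\geps$ via a common ODE-constructed solution is also the natural way to flesh out the paper's bare assertion.

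There is, however, a genuine gap in your execution of the second step. You assert that the solution produced by the ODEs (\ref{eqn:ODE1})--(\ref{eqn:ODE2}) has trace $(X_\Sigma^{-\geps})^*\phi^D$ on $\Sigma'$, and you justify this by pointing to the transported extension $\overline{\phi}$ of (\ref{eqn:overphi}). But these are two different objects: $\overline{\phi}$ is the naive pullback along the geodesic flow (used in Lemma \ref{lma:(c)}), whereas the ODE solution of Lemma \ref{lma:6} has tangential part $\eta^\tau$ satisfying $\partial_\tau\eta^\tau=d\gf^\tau$, which is generally nonzero. In fact the proof of Lemma \ref{lma:6} explicitly chooses $\eta^\tau$ to remain $d^{\star_{\Sigma^\tau}}$-coclosed as the slice metric varies, so $\eta^\tau$ drifts by exact forms and $\eta^\geps$ need not equal the transported $\phi^D$. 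Thus your ODE solution does not obviously satisfy the far-boundary condition of (\ref{eqn:BVP-Sigma}), and you cannot invoke uniqueness to identify it with $\gf_\geps$. A correct version of your argument must either (a) observe that $\eta^\geps$ differs from $(X_\Sigma^{-\geps})^*\phi^D$ only by an exact form $dg$ and then modify the solution by $d\tilde g$ with $\tilde g|_\Sigma=0$, $\tilde g|_{\Sigma'}=g$ (which leaves the Neumann trace on $\Sigma$ unchanged since $d(d\tilde g)=0$), or (b) bypass the far-boundary condition entirely by noting that Lemma \ref{lma:6} already identifies $\ker\Lambda_{\tilde\Sigma_\geps}$ with $\ker d^{\star_\Sigma}$ for \emph{every} $\geps$.

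A smaller issue: for the exact piece you claim two harmonic extensions of $f$ differ by an element of $\mathfrak{H}_D^1(\Sigma_\geps)$. This is false, since $\mathfrak{H}_D^1$ requires vanishing pullback on \emph{both} boundary components, whereas your two extensions agree only on $\Sigma$. Your conclusion that the Neumann trace is unaffected is nevertheless correct, but for the simpler reason that $d(d\tilde f)=0$ identically.
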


Finally define a complex structure $J_{\Sigma}$ in	$\mathsf{L}_{\Sigma}:=(\ker d^{\star_{\Sigma}})^{\oplus 2}$ as in (\ref{eqn:J}).


\subsection{Complex structure for Euler-Lagrange solutions}


Define the D-N operator associated to the boundary $\partial M$ by considering the direct sum of the operators defined in (\ref{eqn:Lambda-Sigma}),
\[
	{\Lambda}_{\partial M}\left(\phi^D\right):=\bigoplus_{i=1}^m\Lambda_{\Sigma^i}\left(\phi^D_i\right)\in\bigoplus_{i=1}^m\Omega^k(\Sigma^i).
\]
Then there exists a complex structure $J_{\partial M}$ in
\[
	\ker \Lambda_{\partial M}\oplus\ran \Lambda_{\partial M}\simeq \mathsf{L}_{\partial M}.
\]

The complex structure $J_{\partial M}:\mathsf{L}_{\partial M}\rightarrow \mathsf{L}_{\partial M}$, defines also a complex structure in the affine space $\mathsf{A}_{\partial M}$, which is covariant with respect to translations.

\begin{tma}

The complex space $(\mathbf{L}_{M,\partial M},J)$  is a $J_{\partial M}-$complex subspace of $(\mathsf{L}_{\partial M},J_{\partial M})$.
\end{tma}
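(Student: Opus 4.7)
The plan is to exhibit both complex structures as the same signed swap of Dirichlet and Neumann factors on a common subspace of $(\ker d^{\star_{\partial M}})^{\oplus 2}$, after which both the $J_{\partial M}$-invariance of $\mathbf{L}_{M,\partial M}$ and the equality $J = J_{\partial M}\vert_{\mathbf{L}_{M,\partial M}}$ will be immediate.

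First I would identify the common ambient space. Applying Lemma \ref{lma:6} to each component $\Sigma^i$ of $\partial M$ yields the identifications
\[
\mathsf{L}_{\partial M}\;\simeq\;\bigoplus_{i}\bigl(\ker d^{\star_{\Sigma^i}}\bigr)^{\oplus 2}\;=\;\bigl(\ker d^{\star_{\partial M}}\bigr)^{\oplus 2},
\]
and the same lemma shows that on each component the isomorphism between $\ker\Lambda_{\Sigma^i}$ and $\ran\Lambda_{\Sigma^i}$ reduces to the identity on $1$-forms. From the matrix prescription (\ref{eqn:J}) together with the direct-sum definition $\Lambda_{\partial M} = \bigoplus_i\Lambda_{\Sigma^i}$ one then reads off that $J_{\partial M}$ acts on $(\ker d^{\star_{\partial M}})^{\oplus 2}$ by the signed swap $(\alpha,\beta)\mapsto(-\beta,\alpha)$.

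Next I would locate $\mathbf{L}_{M,\partial M}$ inside this ambient space. Theorem \ref{tma:J} part \ref{th-part:2} identifies $\mathbf{L}_{M,\partial M}\simeq L_J = L_J^D\oplus L_J^N$ with $L_J^D = j^D(\ker\Lambda_{\tilde{M}}) = \ker\Lambda_{\tilde{M}}$ and $L_J^N = \star_{\partial M}\circ j^N(\ran\Lambda_{\tilde{M}})$. The remark following Lemma \ref{lma:previous} states that $j_{\tilde{M}}$ is the identity on $1$-forms; combined with the factorization $j_{\tilde{M}} = (j^N)^{-1}\circ\star_{\partial M}\circ j^D$ this forces $j^N = \star_{\partial M}$, whence
\[
L_J^N\;=\;\star_{\partial M}^{\,2}(\ker\Lambda_{\tilde{M}})\;=\;\ker\Lambda_{\tilde{M}},
\]
the sign from $\star_{\partial M}^2$ being harmless. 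Together with the inclusion (\ref{eqn:ker-Lambda}), this embeds $\mathbf{L}_{M,\partial M}$ into $(\ker d^{\star_{\partial M}})^{\oplus 2}$ as the ``diagonal'' subspace $(\ker\Lambda_{\tilde{M}})^{\oplus 2}$, and the same identity $j_{\tilde{M}}=\mathrm{id}$ shows that $J$ itself acts on $\mathbf{L}_{M,\partial M}$ by $(\alpha,\beta)\mapsto(-\beta,\alpha)$.

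Since the signed swap manifestly preserves $(\ker\Lambda_{\tilde{M}})^{\oplus 2}$, the subspace $\mathbf{L}_{M,\partial M}$ is $J_{\partial M}$-invariant, and the two complex structures coincide on it, which is exactly the content of the theorem. The main obstacle will be bookkeeping of the signs arising from $\star_{\partial M}^{\,2}$ and from the constants in (\ref{eqn:jN^-1}), so as to guarantee that $J$ and $J_{\partial M}$ agree \emph{exactly} rather than differing by an overall sign. Since both complex structures arise from the same prescription (\ref{eqn:J}) and the sign issues are already absorbed in the composition $\star_{\partial M}\circ j^N$ appearing in the very definition of $L_J^N$, this reduces to an explicit but routine verification.
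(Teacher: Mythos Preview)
Your proposal is correct and follows essentially the same route as the paper's proof. The paper compresses the argument into a single commutative square
\[
\xymatrix{
\ker\Lambda_{\partial M}\oplus\ran\Lambda_{\partial M}\ar[r]^{J_{\partial M}} & \ker\Lambda_{\partial M}\oplus\ran\Lambda_{\partial M}\\
\ker\Lambda_{\tilde{M}}\oplus\ran\Lambda_{\tilde{M}}\ar@{^{(}->}[u]\ar[r]^{J} & \ker\Lambda_{\tilde{M}}\oplus\ran\Lambda_{\tilde{M}}\ar@{^{(}->}[u]
}
\]
and disposes of the sign issue by observing that for $k=1$ the constant $(-1)^{(n-1)(k+1)}$ from (\ref{eqn:jN^-1}) equals $1$; your argument is simply the explicit unpacking of why this square commutes, via the identifications $j_{\tilde{M}}=\mathrm{id}$ and $\mathsf{L}_{\partial M}\simeq(\ker d^{\star_{\partial M}})^{\oplus 2}$ that reduce both $J$ and $J_{\partial M}$ to the same signed swap.
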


\begin{proof} The proof follows from the commutativity of the following diagram
\[\xymatrix{
	\ker\Lambda_{\partial M}\oplus \ran\Lambda_{\partial M}		\ar[r]^{J_{\partial M}}		&
	\ker\Lambda_{\partial M}\oplus \ran\Lambda_{\partial M}					\\
	\ker\Lambda_{ \tilde{M}}\oplus \ran\Lambda_{\tilde {M}}	\ar@{^{(}->}[u]			\ar[r]^{J}	&
	\ker\Lambda_{ \tilde{M}}\oplus \ran\Lambda_{\tilde {M}}	\ar@{^{(}->}[u]
}\]
Notice that for $1-$forms $(-1)^{(n-1)(k+1)}=1$, in the definition of $J_{\partial M}$ given in (\ref{eqn:J}) that uses (\ref{eqn:jN^-1}). 
\end{proof}

\begin{pro}
For the affine symplectic subspace $\mathbf{A}_{\partial M}:=\gf_0+\mathbf{L}_{M,\partial M}\subset \mathsf{A}_{\partial M}$, there exists a decomposition as $\mathbf{A}_{\tilde{M}}\oplus J_{\partial M}\mathbf{L}_{\tilde{M}}$ where $\mathbf{A}_{\tilde{M}}=\gf_0+\mathbf{L}_{\tilde{M}}$.
\end{pro}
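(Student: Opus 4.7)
The plan is to derive this affine decomposition as an immediate consequence of the linear decomposition already established in Theorem~\ref{tma:J}\ref{th-part:4}, by choosing a base solution and using it to transport the linear splitting to the affine setting.

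First I would fix a particular Euler--Lagrange solution $\gf_0'\in A_M$ and pass to its gauge class $[\gf_0']\in\mathsf{A}_M$; by construction its restriction to the boundary lies in $\mathbf{A}_{\tilde{M}}\subset \mathbf{A}_{\partial M}$, and the assignment $\eta\mapsto \eta-\gf_0$ identifies the affine space $\mathbf{A}_{\partial M}=\gf_0+\mathbf{L}_{M,\partial M}$ with its model linear space $\mathbf{L}_{M,\partial M}$. Under this identification, $\mathbf{A}_{\tilde{M}}=\gf_0+\mathbf{L}_{\tilde{M}}$ corresponds to the linear subspace $\mathbf{L}_{\tilde{M}}\subset \mathbf{L}_{M,\partial M}$.

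Next I would invoke Theorem~\ref{tma:J}\ref{th-part:4}, which provides the linear direct sum decomposition
\[
\mathbf{L}_{M,\partial M}=\mathbf{L}_{\tilde{M}}\oplus J_{\partial M}\mathbf{L}_{\tilde{M}}.
\]
Given any $\eta\in\mathbf{A}_{\partial M}$, write $\eta-\gf_0=v_1+J_{\partial M}v_2$ with $v_1,v_2\in \mathbf{L}_{\tilde{M}}$ uniquely determined by this splitting. Then $\eta=(\gf_0+v_1)+J_{\partial M}v_2$, where $\gf_0+v_1\in\mathbf{A}_{\tilde{M}}$ and $J_{\partial M}v_2\in J_{\partial M}\mathbf{L}_{\tilde{M}}$. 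Uniqueness at the affine level follows immediately from uniqueness of the linear decomposition, since any two affine representations $\eta=a+w=a'+w'$ with $a,a'\in\mathbf{A}_{\tilde{M}}$ and $w,w'\in J_{\partial M}\mathbf{L}_{\tilde{M}}$ yield $(a-\gf_0)-(a'-\gf_0)=w'-w\in\mathbf{L}_{\tilde{M}}\cap J_{\partial M}\mathbf{L}_{\tilde{M}}=\{0\}$.

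The only mild subtlety, and what I would take care to check, is the independence of the splitting from the choice of the base point $\gf_0$: a different choice $\gf_0'\in\mathbf{A}_{\tilde{M}}$ differs from $\gf_0$ by an element of $\mathbf{L}_{\tilde{M}}$, which only shifts the $\mathbf{L}_{\tilde{M}}$-component of $v_1$ and leaves the $J_{\partial M}\mathbf{L}_{\tilde{M}}$-component invariant, so the decomposition is well defined as a decomposition of $\mathbf{A}_{\partial M}$ into $\mathbf{A}_{\tilde{M}}$ plus the linear complement $J_{\partial M}\mathbf{L}_{\tilde{M}}$. Since no genuinely new analytic input beyond Theorem~\ref{tma:J} is required, there is no real obstacle here; the statement is essentially a bookkeeping translation of the linear decomposition to the affine setting, consistent with the affine formalism of \cite{O1}.
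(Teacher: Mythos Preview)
Your proposal is correct and matches the paper's approach: the paper in fact states this proposition without proof, treating it as an immediate affine translation of the linear splitting $\mathbf{L}_{M,\partial M}=\mathbf{L}_{\tilde{M}}\oplus J\mathbf{L}_{\tilde{M}}$ from Theorem~\ref{tma:J}\ref{th-part:4}, combined with the preceding theorem identifying $J$ with the restriction of $J_{\partial M}$ to $\mathbf{L}_{M,\partial M}$. Your added remarks on uniqueness and base-point independence are appropriate bookkeeping that the paper simply leaves implicit.
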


\section{Hermitian structure}\label{sec:5}


In order to implement geometric quantization for the reduced space $\mathsf{A}_{\partial M}$, we need to describe a suitable Hermitian structure on $\mathsf{L}_{\partial M}$.



Let us consider a hypersurface $\Sigma$. The linear space ${\mathsf{L}_{\Sigma}}$ is completed to a complex separable Hilbert space with Hermitian metric, $\{\cdot,\cdot\}_{\Sigma},$ such that the imaginary part equals $\omega_{\Sigma}(\cdot,\cdot)=\frac{1}{2}\Im \{\cdot,\cdot\}_{\Sigma}$, while the real part is $ g_{\Sigma}(\cdot,\cdot):=\Re\{\cdot,\cdot\}_{\Sigma}$. Multiplication by $\sqrt{-1}$ in the complex Hilbert space structure can be defined in terms of a complex structure $J_{\Sigma}$ on the real vector space ${\mathsf{L}_{\Sigma}}$. The complex structure $J_{\Sigma}$ is tame with respect to the symplectic structure $\omega_{\Sigma}$,
\[
	g_{\Sigma}(\cdot,\cdot)=2\,\omega_{\Sigma}(\cdot,J_{\Sigma}\cdot).
\]
The positive definite bilinear form $g_{\Sigma}$ can be explicitly calculated as
\begin{equation}\label{eqn:g_partial M}
	g_{\Sigma}\left({\phi_1},{\phi_2}\right)=
		\int_{\Sigma}\phi_1^D\wedge\star_{\Sigma}\phi_2^D+ \phi_1^N\wedge\star_{\Sigma}\phi_2^N
		\qquad \forall {\phi_i}\in \mathsf{L}_{\Sigma},\, i=1,2.
\end{equation}
Define a tame complex structure $J_{\Sigma}:\mathsf{L}_{\Sigma}\rightarrow \mathsf{L}_{\Sigma}$ as in (\ref{eqn:J}),
\begin{equation}\label{eqn:Jexplicita}
	J_{\Sigma}\left(\phi^D,\phi^N\right)=
		\left(-\phi^N,\phi^D\right),
			\qquad \forall \phi=\left(\phi^D,\phi^N\right)\in \mathsf{L}_{\Sigma}.
\end{equation}


Involution under the complex linear product is conjugate, i.e. $\{\phi,\phi'\}_{\overline{\Sigma}}=\overline{\{\phi,\phi'\}_{\Sigma}}$ for all $\phi,\phi'\in L_\Sigma$. Thus for the gauge reduced spaces we hope we can fulfill the axioms for classical affine field theories as is shown in \cite{O,O1}.



\section{Outlook: Holomorphic Quantization}

We have exposed the gauge symplectic reduction for abelian Yang-Mills fields. The aim of this work is to give a step towards geometric quantization of abelian Yang-Mills theories. We have completed the description for an affine field theory for gauge fields, see \cite{K,Segal:lectures} for scalar fields. The framework we have followed is the General Boundary Field Theory setting, see \cite{O3}. We have established the remaining main ingredients necessary for applying the tools exposed in \cite{O1,O}. Namely the existence of a complex structure, $J_{\partial M}$, taming the symplectic structure $\omega_{\partial M}$ in the space of boundary conditions modulo gauge. Another direction is the case of space-time regions with corners. Here the lack of differentiability of the boundary $\partial M$ on the stratified space of corners imposes difficulties in defining the complex structure.

\section{Acknowledgments} The author thanks Robert Oeckl for stimulating discussions. The author also thanks Centro de Ciencias Matem\'aticas-UNAM for the facilities and hospitality during the preparation of this work. This work was partially funded by Tecnol\'ogico de Monterrey Campus Morelia, M\'exico, and CONACYT-SEP M\'exico.

\bibliography{gaugequantum-bibtex}{}
\bibliographystyle{ieeetr}
 
\end{document}